\documentclass[11pt, twoside]{article}
\pdfoutput=1
\usepackage[margin=1in]{geometry}


\usepackage{endnotes}
\let\footnote=\endnote

%


\usepackage{natbib}
 \bibpunct[, ]{(}{)}{,}{a}{}{,}%

\usepackage{amsmath,amssymb,amsthm}
\usepackage{bm,dsfont}
\usepackage[inline]{enumitem}
\usepackage{algorithm,algpseudocode}
\usepackage{wrapfig,subfig,makecell}
\usepackage{mathtools}
\usepackage{color}
\usepackage[colorlinks=true,breaklinks=true,bookmarks=true,urlcolor=blue,
     citecolor=blue,linkcolor=blue,bookmarksopen=false,draft=false]{hyperref}
\usepackage[capitalise]{cleveref}
\usepackage[normalem]{ulem}

\def\EE{{\mathbb{E}}}\def\PP{{\mathbb{P}}}

\def\setN{\mathbb{N}}

\newcounter{const-no}
\newcommand{\const}[1]{\refstepcounter{const-no}\label{#1}}
\newcounter{event-no}
\newcommand{\event}[1]{\refstepcounter{event-no}\label{#1}}
\newcommand*\diff{\mathop{}\!\mathrm{d}}
\newcommand\numberthis{\addtocounter{equation}{1}\tag{\theequation}}

\newcommand{\ind}[1]{\mathds{1}\left\lbrace #1 \right\rbrace}
\newcommand\givenbase[1][]{\:#1\lvert\:}
\let\given\givenbase

\DeclarePairedDelimiterX\Basics[1](){\let\given\sgiven #1}

\newtheorem{theorem}{Theorem}
\newtheorem{corollary}[theorem]{Corollary}
\newtheorem{proposition}[theorem]{Proposition}
\newtheorem{lemma}[theorem]{Lemma}

\newtheorem{remark}{Remark}
\newtheorem{definition}{Definition} 
\newtheorem{assumption}{Assumption} 

\newtheorem{fact}{Fact}

\usepackage{authblk}

\date{}

\title{Learning Unknown Service Rates in Queues: A Multi-Armed Bandit Approach}

\author[1]{Subhashini Krishnasamy}
\author[1]{Rajat Sen}
\author[2]{Ramesh Johari}
\author[1]{Sanjay Shakkottai}
\affil[1]{The University of Texas at Austin}
\affil[2]{Stanford University}

\begin{document}

\maketitle

\begin{abstract}
Consider a queueing system consisting of multiple servers.  Jobs arrive over time and enter a queue for service; the goal is to minimize the size of this queue.  At each opportunity for service, at most one server can be chosen, and at most one job can be served.  Service is successful with a probability (the {\em service probability}) that is {\em a priori unknown} for each server.  An algorithm that knows the service probabilities (the ``genie'') can always choose the server of highest service probability.  We study algorithms that learn the unknown service probabilities.  Our goal is to minimize {\em queue-regret}:  the (expected) difference between the queue-lengths obtained by the  algorithm, and those obtained by the ``genie.''

Since queue-regret cannot be larger than classical regret, results for
the standard multi-armed bandit problem give algorithms for which queue-regret
increases no more than logarithmically in time.  Our paper shows
surprisingly more complex behavior.  In particular, as long as the
bandit algorithm's queues have relatively long regenerative cycles,
queue-regret is similar to cumulative regret, and scales (essentially)
logarithmically.  However, we show that this ``early stage'' of the
queueing bandit eventually gives way to a ``late stage'', where the
optimal queue-regret scaling is $O(1/t)$.  We demonstrate an algorithm that
(order-wise) achieves this asymptotic queue-regret in the late stage. 
Our results are developed in a more general model that allows for multiple job classes as well.
\end{abstract}


{\section{Introduction}

Stochastic multi-armed bandits (MAB) have a rich history in sequential decision making (\citet{gittins1979bandit,mahajan2008multi,bubeck-bianchi12survey-ucb-algo}). In its simplest form, at each discrete time step the decision maker must choose a single arm from a collection of $K$ arms. A random binary reward (i.e., a Bernoulli random variable taking value 0 or 1) is accrued each time an arm is pulled; if a reward is received, we refer to the outcome as a ``success.''  The probabilities of success vary across arms, and are unknown a priori. The MAB problem is to determine which arm to choose at each time in order to minimize the {\em cumulative expected regret}: the cumulative loss of expected reward when compared to a genie that has knowledge of the arm success probabilities.  

In this paper\footnote{An earlier version of this work appeared in the Proceedings of the Thirtieth Annual Conference on Neural Information Processing Systems (NIPS), 2016 (\citet{krishnasamy-etal16regret-queue}).}, we study a variant of this problem motivated by {\em queueing} applications, where success probabilities may be unknown.
\label{edit:prob-descr}  

To fix ideas, consider the problem of scheduling in a discrete-time queueing system with a single queue and multiple servers. Jobs arrive to the system and are stored in the queue until they are successfully served. At any time, only one server can be active and can serve at most one job in the queue. The active server is either successful (in which case the job departs the system) or unsuccessful (in which  case the job remains in the queue); success probabilities vary across servers.  We consider this problem in the case where the success probabilities (a.k.a. service probabilies) are not known {\em a priori}.

We view this problem as a variant of the basic MAB model, where each arm is now a {\em server} that can serve a waiting job.  From this perspective, the stochastic reward described above is equivalent to {\em service}, which takes binary values (1 or 0) depending on whether the job was successfully served or not. At any time the queue is empty, the only possible reward is 0. We observe that the basic MAB model is not suitable for queueing applications as it fails to capture an essential feature of service in many settings: in a queueing system, {\em jobs wait until they complete service}.

Such systems are {\em stateful}: when the chosen arm results in zero reward, the job being served remains in the queue. On the other hand, when there are no jobs waiting in the queue, there can be no accrual of reward. This makes it essential in this model to track the number of jobs waiting in the queue to be served.  The {\em queue length}, which is the difference between the cumulative number of arrivals and departures, is the most common measure of the quality of the service strategy being employed.

Our paper is motivated by combining two well-established lines of research: first, the ubiquity of queueing as a means to model service systems; and second, the prevalence of MAB models as a way to study the balance between learning and performance optimization in a wide range of settings.  Our paper  develops an MAB model to study the role of unknown service rates on performance in scheduling systems.
This problem clearly has the explore-exploit tradeoff inherent in the basic MAB problem: since the service probabilities across different servers are unknown, there is a tradeoff between learning (\textit{exploring}) the different servers and scheduling (\textit{exploiting}) the most promising server from past observations.
 For brevity, we refer to this problem as the {\em queueing bandit}.
Since the queue length is simply the difference between the cumulative number of arrivals and departures, the natural notion of regret here is to compare the expected queue length under a bandit algorithm with the corresponding one under a genie policy that always chooses the arm with the highest expected reward.

\label{edit:full-problem}
Our results are actually developed for a switch network, i.e., a multi-class parallel-server system; this is a more general model than the single queue system described above. Specifically, we allow the possibility that there are $U$ distinct classes of jobs with a queue for each class.  Each of the job queues may be served by any of the $K$ ($K \geq U$) servers, but the service probabilities are heterogeneous across queue-server pairs (also referred to as {\em links}).  At each time, at most one queue can be assigned to each server, and at most one server can be assigned to each queue; thus in each time slot, scheduling amounts to choosing a matching in the complete bipartite graph between queues and servers. Arrivals to the queue and possible service offered by the links follow a product Bernoulli distribution, and are i.i.d.\ across time slots.  A job remains in the queue if not successfully served; 
service on the same job can be attempted by different servers at different times. Statistical parameters (service probabilities) corresponding to the service distributions are considered unknown. 

In the single queue setting, i.e., when $U = 1$, we assume that there is at least one server that has a service probability higher than the arrival probability.  This ensures that the optimal ``genie'' policy---i.e., the policy with full {\em a priori} knowledge of all the service probabilities---is stable. When $U > 1$, we assume that for the true service probabilities, there exists a single unique matching (the {\em optimal matching}) which is strictly better than all other matchings for all queues.
For $U > 1$, this assumption ensures that the optimal ``genie'' policy is one that selects the  optimal matching in every time slot.

Let $\bm{Q}(t)$ be the queue length vector at time $t$ under a given bandit algorithm, and let $\bm{Q}^*(t)$ be the corresponding queue length vector under the ``genie'' policy that always chooses the optimal matching.  We define the {\em queue-regret} vector $\bm{\Psi}(t)$ as the difference in expected queue lengths for the two policies:  
\begin{align}
\label{eqn:q-regret}
\bm{\Psi}(t) := \EE\left[ \bm{Q}(t) - \bm{Q}^*(t) \right].
 \end{align}
Note that the difference in queue-lengths is equal to the difference in the total number of departures. Interpreting a departure as reward $1$, the queue-regret $\bm{\Psi}(t)$ is the expected difference of the accumulated rewards. This is similar to the traditional MAB regret but with a slight difference: here, it is possible to accrue reward only if the queue is non-empty. 
As an example, in a single queue setting, suppose the number of waiting jobs is infinite; in that case, the queue-regret is identical to the regret in the basic MAB problem.
 Our goal is to develop bandit algorithms that lead to small queue-regret for each $u \in [U]$ at a finite time $t$. To develop some intuition, we compare this with the standard stochastic MAB problem; we focus on the setting of a single queue ($U = 1$) for ease of exposition, but the same insights hold 
in the general case of a switch network with a unique optimal matching when we consider each component of the queue-regret individually.  
For the standard MAB problem, well-known algorithms such as UCB, KL-UCB, and Thompson sampling achieve a cumulative regret of $O((K-1)\log t)$ at time $t$ (\citet{auer2002finite, garivier2011kl, agrawal-goyal12thompson}), and this result is essentially tight (\citet{lai1985asymptotically}).
In the queueing bandit, it can be shown that the queue-regret cannot be any higher than the traditional regret (where a reward is accrued at each time whether a job is present or not).  This leads to an upper bound of $O((K-1)\log t)$ for the queue-regret.

However, this upper bound does not tell the whole story for the queueing bandit: we show that there are two ``stages'' to the queueing bandit.  If the arrival probability of a queue is higher than the service probabilities of all but the corresponding best server, the bandit algorithm is unable to even stabilize the queue in the {\em early} stage -- 
i.e., on average, the queue length increases over time and is continuously backlogged; therefore the queue-regret grows  similarly to the traditional regret.  Once the algorithm is able to stabilize the queue --- i.e., in the {\em late} stage --- a dramatic shift occurs in the behavior of the queue-regret.  A stochastically stable queue goes through {\bf regenerative cycles} -- a random cyclical behavior where queues build-up over time, then empty, and the cycle repeats. The associated recurring``zero-queue-length'' epochs means that sample-path queue-regret essentially ``resets'' at (stochastically) regular intervals; i.e., the sample-path queue-regret becomes non-positive at these time instants.  Thus the queue-regret should fall over time, as the algorithm learns.

\subsection{Contributions}
Our main results provide lower bounds on  each component of the queue-regret vector in both the early and late stages, as well as algorithms that essentially match these lower bounds.  Here, we discuss these results in the context of a single queue, but prove them in the more general switch network context described above.

Below we describe our main contributions.


\begin{description}
\item [{\bf The late stage:}]  We first consider what happens to the queue-regret as $t \to \infty$.  As noted above, a reasonable intuition for this regime comes from considering a standard bandit algorithm, but where the sample-path queue-regret ``resets'' at time points of regeneration. This is inexact since the optimal queueing system and bandit queueing system may not regenerate at the same time point; nevertheless, the intuition proves correct.  In this case, the queue-regret is approximately a (discrete) {\em derivative} of the cumulative regret.  Since the optimal cumulative regret scales like $\log t$, asymptotically the optimal queue-regret should scale like $1/t$.
Indeed, we show that the queue-regret for $\alpha$-consistent policies (please see \cref{defn:consistent,defn:consistent-multiq} in \cref{sec:late-stg} for the definition of an $\alpha$-consistent policy) is at least $C/t$ infinitely often, where $C$ is a constant independent of $t$.  Further, we introduce scheduling algorithms called Q-UCB and Q-ThS, which are variants of the UCB1 and Thompson sampling algorithms tailored to the queueing bandit. We show an asymptotic regret upper bound of $O\left(\mathrm{poly}(\log t)/t \right)$ for both these algorithms, thus matching the lower bound up to poly-logarithmic factors in $t$.  The key feature of Q-UCB and Q-ThS is that they use {\em forced exploration}: they exploit the fact that the queue regenerates regularly to explore more systematically and aggressively.

\item [{\bf The early stage:}]  The preceding discussion might suggest that an algorithm that explores aggressively would dominate any algorithm that balances exploration and exploitation.  However, an overly aggressive exploration policy will preclude the queueing system from ever stabilizing, which is {\em necessary} to induce the regenerative cycles that lead the system to the late stage. 
To even enter the late stage, therefore, we need an algorithm that exploits enough to actually stabilize the queue (i.e., we must choose good arms sufficiently often so that the service probability exceeds the arrival probability). 

We refer to the early stage of the system, as noted above, as the period before the algorithm has learned to stabilize the queues.  For a {\em heavily loaded system, where the arrival probability approaches the service probability of the optimal server,} we show a lower bound of $\Omega(\log t/\log \log t)$ on the queue-regret in the early stage.  Thus up to a $1/\log \log t$ factor, the early stage regret behaves similarly to the cumulative regret (which scales like $\log t$).  The heavily loaded regime is a natural asymptotic regime in which to study queueing systems, and has been extensively employed in the literature; see, e.g., \citet{whitt1974heavy, kushner2013heavy} for surveys.
\end{description}

Our results constitute the first insight into the behavior of regret in this queueing setting; as emphasized, it is quite different from minimization of cumulative regret in the standard MAB problem.  The preceding discussion highlights why minimization of queue-regret presents a subtle learning problem.  On one hand, if the queue has been stabilized, the presence of regenerative cycles allows us to establish that queue-regret must eventually decay to zero at rate $1/t$ under an optimal algorithm (the late stage).  On the other hand, to actually have regenerative cycles in the first place, a learning algorithm needs to exploit enough to actually stabilize the queue (the early stage).  Our analysis not only characterizes regret in both regimes, but also essentially exactly characterizes the transition point between the two regimes.  In this way the queueing bandit is a remarkable new example of the tradeoff between exploration and exploitation.


%


%
\section{Related Work}
\label{sec:related}

\begin{description}

\item[{\bf MAB algorithms}:] MAB models have been widely used in the past as a paradigm for various sequential decision making problems in industrial manufacturing, communication networks, clinical trials, online advertising and webpage optimization, and other domains requiring resource allocation and scheduling; see, e.g., \cite{gittins1979bandit,mahajan2008multi,bubeck-bianchi12survey-ucb-algo}.

The classical MAB problem is based on the stochastic MAB model where the rewards of the arms are i.i.d.\ across time. The goal is to minimize the expected cumulative loss of reward relative to a ``genie'' policy that always chooses best arm; this objective is called {\em regret}.  There is a vast body of literature which aims at getting the best  possible \emph{finite time} lower bounds for regret and  designing algorithms that match the lower bound (see \cite{bubeck-bianchi12survey-ucb-algo} for a survey). \cite{lai1985asymptotically} prove a lower bound of $O(\log t)$ with the scaling constant depending on the mean rewards of the best and the second best arm.

A variant of the MAB problem that is related to our problem is the Markovian bandit in which the state of the each arm evolves in a Markovian fashion, and the reward drawn at each time is a function of the state of the selected arm. Two kinds of model for the underlying Markov decision process (MDP) have been studied.  The \emph{rested} bandit model, introduced by \cite{gittins1979bandit}, assumes that the state of an arm is frozen at a time step unless it is pulled. \cite{whittle1988restless} introduced the \emph{restless} bandit model, where the state of every arm can change at each step, regardless of which arm is pulled.

 There are two lines of research which explore the Markovian MAB problem. The traditional approach has been to assume that the statistical parameters associated with the Markov chain of the arms are perfectly known to the decision maker. The aim is to optimize the \emph{infinite horizon} discounted or average reward of the corresponding MDP. Papers that study these problems typically propose \emph{index policies} fashioned after Gittin's index (\cite{gittins1979bandit}) and Whittle's index (\cite{whittle1988restless}) that are computationally more efficient than solving Bellman equations; see \cite{mahajan2008multi,gittins11index-policies} for a broad survey. These achieve approximately optimal infinite horizon cost.

A more recent line of research investigates the \emph{reinforcement learning} problem, where the statistical parameters of the MDP are not known a priori. This assumption adds to the optimization problem the challenge of learning the transition structure, and thus presents an explore-exploit dilemma as in the stochastic MAB problem. For the restless bandit model with finite state space and action space, \cite{jaksch2010near,ortner2014regret} prove finite time regret bounds of $O(\sqrt{t})$ where the scaling constant depends on the size of the state space and action space and the structure of the MDP. They also show that the regret bound scales as $O(\log t)$ with a scaling constant that depends on the gap between the average costs of the best and second best policy.

Although the queueing bandit problem studied in this paper has an underlying Markovian structure, the challenge is essentially different from either rested or restless bandits. Unlike the rested or restless bandits where the state of each arm evolves according to an independent Markov chain, the Markovian structure in this problem is captured by the state of the system (here queue-lengths) which is a complex function of the external arrival process, the decision rule and the rewards of the selected arm in the previous time slots. Nevertheless, it presents the same kind of exploration-exploitation dilemma as the above MAB problems when the statistical parameters are not known a priori.

%
%
%

\item[{\bf Bandits for queues}:] There is a growing body of literature on the application of bandit models to queueing and scheduling problems -- see \cite{nino2007dynamic,mahajan2008multi,gittins11index-policies,LarranagaAV16restless-bandits}. Specifically, \cite{cox1961queues,buyukkoc1985cmu,van1995dynamic,nino2006marginal,jacko2010restless,lott2000optimality,AyestaJN17} use bandit models to develop algorithms for various types of scheduling problems in queueing networks. The primary difference between these models and ours is that they assume a priori knowledge of the statistical parameters, while our focus is on learning these parameters.  In these papers, the goal is to solve a stochastic scheduling MDP, and the problem is embedded in the bandit framework as a solution approach.  Their aim is to optimize infinite-horizon costs (i.e., statistically steady-state behavior, where the focus typically is on conditions for optimality of index policies), whereas we focus on the analysis of finite time regret. 
Further, the models do not typically consider user-dependent server statistics. 

Several other problems in queueing systems such as routing and admission control have been studied in the bandit framework (\cite{nino2012admission,AvrachenkovADJ13index-policy-TCP}), again assuming known statistics. Here too, the focus is on designing index policies for these models and showing approximate optimality in steady state.

\item[{\bf Switch scheduling}:] \label{item:switch-nw} Scheduling in switch networks has received a great deal of attention in the last two decades; see, e.g., \cite{sriyin14} for a survey.  Notably, many of the scheduling algorithms (e.g., queue-length-based backpressure scheduling algorithms) can yield near-optimal performance in the absence of information about arrival rates; but these algorithms still require information about server availability and capacity.  By contrast, our work focuses on learning about all unknown aspects of the environment, as needed to deliver small overall queue lengths. 
Finally, the problem of identifying the right matchings in a bipartite graph has been formulated as a special case of the combinatorial/linear bandit problem (\cite{gai2012combinatorial,cesa2012combinatorial}, {\color{red}\cite{combes2015combinatorial,degenne2016combinatorial}}), but with a generic reward structure and not in the context of queue scheduling as in our case.

\end{description}

\section{Problem Setting}
\label{sec:sys-model}

In this section, we first introduce a baseline model of a discrete-time network with a single queue and $K$ servers.  We discuss our main results in the context of this model, but ultimately formally state and prove our main theorems in the context of a somewhat more general setting with multiple queues and multiple servers.  By focusing on the single queue model in our presentation, we are able to better elucidate the main characterizations of regret, including the distinctions between the early and late stages.

\subsection{A Single Queue Model}

Formally, suppose that arrivals to the queue and service offered by the servers are according to a product Bernoulli distribution, i.i.d.\ across time slots, with arrival probability $\lambda$  and service probabilities given by the vector $\bm{\mu} = [\mu_{k}]_{k \in [K]}$. Let the highest service probability among all the servers be denoted by $\mu^*$ and the lowest by $\mu_{min}$. We assume that the system can be stabilized if the best server is known a priori, i.e.,  $\lambda < \mu^*$.

The scheduling decision at any time $t$ is based on past observations corresponding to the services obtained for the scheduled servers until time $t-1$. Statistical parameters corresponding to the service distributions are considered unknown. 

	The queue evolution for the single queue system can be described as follows. Let $\kappa(t)$ be the server scheduled at time $t$. Also, let $R_{k}(t)$ be the service offered to the queue by server $k$ and $S(t)$ denote the service offered by server $\kappa(t)$ at time $t$. 
If $A(t)$ denotes the (binary) arrival at time $t$, then the queue-length at time $t$ is given by:
\begin{align*}
Q(t) = \left( Q(t-1) + A(t) - S(t) \right)^+.
\end{align*}

We analyze the performance of a scheduling algorithm with respect to queue-regret as a function of time and key system parameters, particularly:
\begin{enumerate}[label=(\alph*)]
\item the load on the system $\epsilon := \mu^* - \lambda$, and
\item the difference between the service probabilities of the best and the next best servers $\Delta := \mu^* - \max_{k \neq k^*} \mu_k$. 
\end{enumerate}

\subsection{The General Model: A Multi-Queue Switch Network}
\label{subsec:sys-model-multiq}
As mentioned, we prove the results in Sections~\ref{sec:late-stg} and \ref{sec:early-stg} for a more general problem that deals with a discrete-time stochastic switch network described as follows. The multi-queue switch network consists of $U$ queues and $K$ servers, where $U \leq K$. The queues and servers are indexed by $u = 1, \dots, U$ and $k = 1, \dots, K$ respectively.  Arrivals to queues and service offered by the links are according to product Bernoulli distribution and i.i.d.\ across time slots. The arrival probabilities are given by the vector $\bm{\lambda} = \left( \lambda_u \right)_{u \in [U]}$ and the service probabilities by the matrix $\bm{\mu} = [\mu_{uk}]_{u \in [U], k \in [K]}$.

We require the following notational definitions in our technical development:
\begin{align*}
\mu^*_u &:= \max_{k \in [K]} \mu_{uk},\ \ u \in [U];\\
k^*_u &:= \arg\max_{k \in [K]} \mu_{uk},\ \ u \in [U];\\
\epsilon_u &:= \mu^*_u - \lambda_u,\ \ u \in [U];\\
\Delta_{uk} &:= \mu^*_u - \mu_{uk},\ \ u \in [U], k \in [K];\\
\Delta &:= \min_{u \in [U], k \not\in k^*_u}\Delta_{uk};\\
\mu_{min} &:= \min_{u \in [U], k \in [K]} \mu_{uk};\\
\mu^* &:= \max_{u \in [U], k \in [K]} \mu_{uk};\\
\lambda_{min} &:= \min_{u \in [U]} \lambda_u.
\end{align*}
In any time slot, each server can serve at most one queue and each queue can be served by at most one server. The task is to schedule, in every time slot, a matching in the complete bipartite graph between queues and servers.  Let $\kappa_u(t)$ denote the server that is assigned to queue $u$ at time $t$. Therefore, the vector $\pmb{\kappa}(t) = (\kappa_u(t)_{u \in [U]})$ gives the matching scheduled at time $t.$ Other important notation for the multi-queue setting can be found in Table~\ref{tab:notation}. 
\paragraph*{Notation:} Boldface letters are used to denote vectors or matrices and the corresponding non-bold letters to denote their individual components. Also, the notation 
\begin{enumerate}[label=(\roman*)]
\item $\mathds{1}\{\cdot\}$ is used to denote the indicator function, and
\item for any $k \in \setN$, $\log^k(\cdot)$ is used to denote $(\log(\cdot))^k$. 
\end{enumerate}


\subsubsection*{Unique Optimal Matching}
We focus on a simple special case of the above switch scheduling system.  In particular, we assume for every queue, there is a unique optimal server with the maximum service probability for that queue.  Further, we assume that the optimal queue-server pairs form a matching in the complete bipartite graph between queues and servers, that we call the {\em optimal matching}; and that this optimal matching stabilizes every queue. 

Formally, we make the following assumption on the switch scheduling system.
\begin{assumption}[Unique Optimal Matching]
\label{ass:best-match}
There is a {\em unique optimal matching}, i.e.:
\begin{enumerate}
\item There is a unique optimal server for each queue: $k^*_u$ is a singleton, i.e., $\Delta_{uk} > 0$ for $k \neq k^*_u$, for all $u$, \label{item:strict-best}
\item The optimal queue-server pairs for a matching: For any $u' \neq u$, $k^*_u \neq k^*_{u'}$. \label{item:best-match}
\end{enumerate}
\end{assumption}
The assumption of a unique optimal matching essentially means that the queues and servers are solving a pure coordination problem.  It is most applicable in settings where there is strong horizontal differentiation across jobs and servers.  For example, in a crowdsourcing system where the ``servers'' are worker types completing different types of jobs, the unique optimal matching assumption implies there is a unique worker type best suited to each type of job.

We evaluate the performance of scheduling policies against the policy that schedules the optimal matching in every time slot. Let $\mathbf{Q}(t)$ be the queue-length vector at time $t$ under our specified algorithm, and let $\mathbf{Q}^*(t)$ be the corresponding vector under the optimal policy.  We define \textit{regret} as the difference in mean queue-lengths for the two policies. That is, the regret (vector) is given by: $\bm{\Psi}(t) := \EE\left[ \mathbf{Q}(t) - \mathbf{Q}^*(t) \right]$.

 Throughout, when we evaluate queue-regret, we do so under the assumption that the system is ``stable'' under the optimal policy, i.e., the policy that chooses the optimal matching at each time step. With Bernoulli distributions for the arrival and service processes and under \cref{ass:best-match}, most commonly used notions of stability \cite{neely2010stability}, including the existence of a steady-state distribution, are equivalent to the following condition:
\begin{assumption}[Stability]
$\epsilon_u  > 0$ for all $u \in [U]$.
\end{assumption}
In fact, the above condition guarantees more than just the existence of a steady-state distribution for the queue-length process  $\{\mathbf{Q}^*(t)\}$; in fact, the condition ensures that the process is geometrically ergodic. 

We also assume that the queueing system starts in the steady state distribution of the system induced by the optimal policy.
\begin{assumption}[Initial State]
\label{ass:initial-state}
The queueing system starts with an initial state $\mathbf{Q}(0)$ distributed according to the stationary distribution of $\mathbf{Q}^*(t)$, which we denote $\pi_{(\bm{\lambda}, \bm{\mu}^*)}$.
\end{assumption}
 This assumption is made largely to ease the technical exposition.  Throughout the paper, we discuss how the assumption may be weakened for each of our results.

\begin{table}
\renewcommand{\arraystretch}{}
\caption{General Notation.\label{tab:notation}}

\centering
\begin{tabular}{||c | c||}
\hline
\bfseries Symbol & \bfseries Description	\\
\hline\hline

$\lambda_u$ & Probability of arrival to queue $u$	\\ \hline
$\lambda_{min}$ & Minimum arrival probability across all queues	\\ \hline
$A_u(t)$ & Arrival at time $t$ to queue $u$	\\ \hline
$\mu_{uk}$ & Service probability of server $k$ for queue $u$	\\ \hline
$R_{uk}(t)$ & Service offered by server $k$ to queue $u$ at time $t$	\\ \hline
$k^*_u$ & Best server for queue $u$	\\ \hline
$\mu^*_u$ & Best service probability for queue $u$	\\ \hline
$\mu^*$ & Maximum service probability across all links	\\ \hline
$\mu_{min}$ & Minimum service probability across all links	\\ \hline
$\Delta$ & \makecell{Minimum (among all queues) difference \\ between	the best and second best servers}	\\ \hline
$\kappa_u(t)$ &	server assigned to queue $u$ at time $t$	\\ \hline
$S_u(t)$ & \makecell{Potential service provided by server \\ assigned to queue $u$ at time $t$}	\\ \hline
$Q_u(t)$ & queue-length of queue $u$ at time $t$ \\ \hline
$Q^*_u(t)$ & \makecell{queue-length of queue $u$ at time $t$ \\ for the optimal strategy}	\\ \hline
$\Psi_u(t)$ & Regret for queue $u$ at time $t$ \\ \hline
\hline
\end{tabular}
\end{table}

\section{The Late Stage}
\label{sec:late-stg}

\begin{figure}
\begin{center}
\includegraphics[width=0.7\textwidth]{./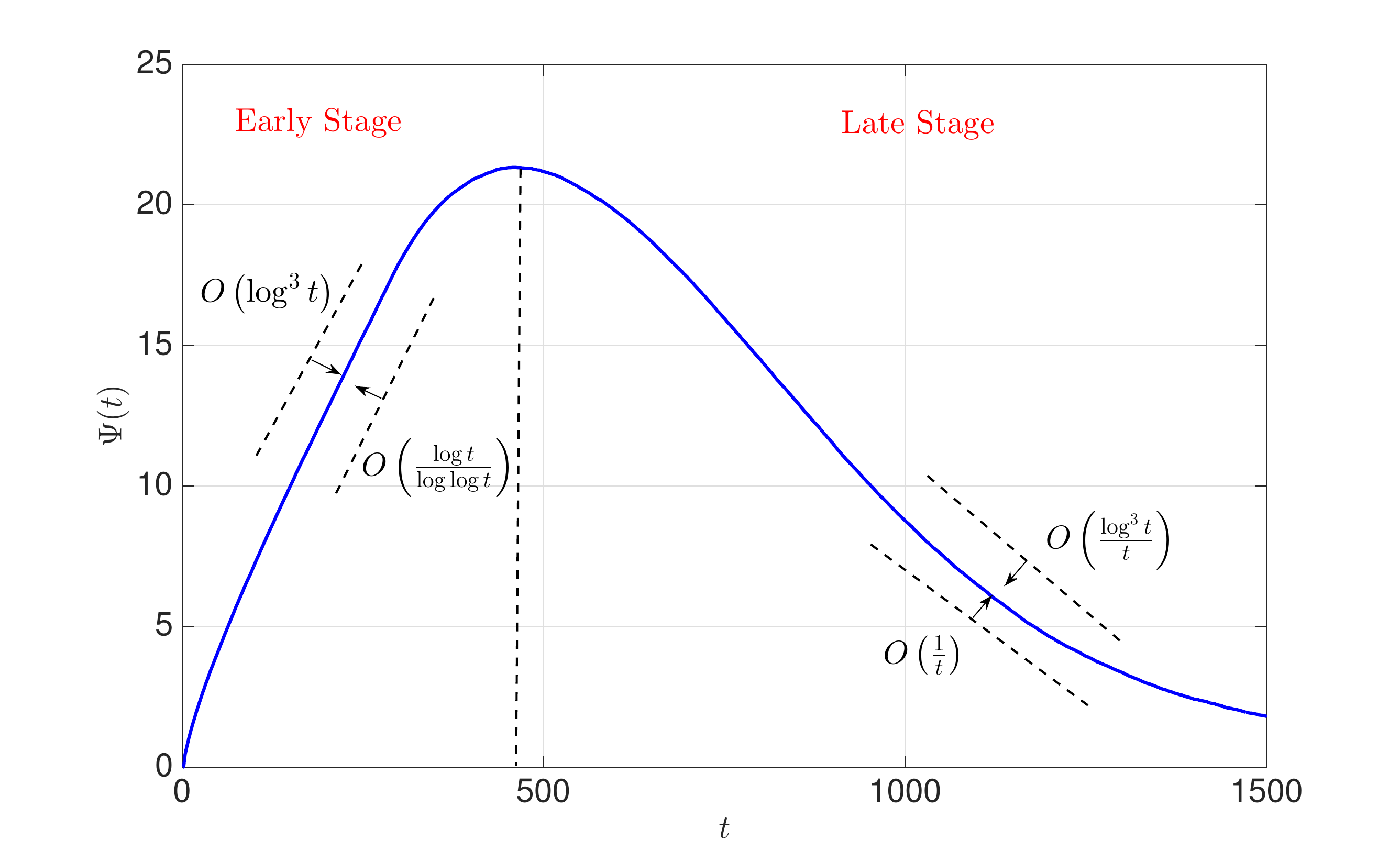}
\caption{Variation of queue-regret $\Psi(t)$ for a particular user under Q-UCB in a $1 \times 5$ system with $\epsilon = 0.15$ and $\Delta = 0.17$}
\label{fig:illustrate}
\end{center}
\end{figure}

As a preview of the theoretical results, Figure~\ref{fig:illustrate} shows the evolution of queue-regret with time in a single queue system with five servers under a scheduling policy inspired by UCB. (Further discussion of the scheduling algorithm used can be found in Section~\ref{subsec:alg}.) It is observed that the regret goes through a phase transition. In the initial stage, when the algorithm has not estimated the service probabilities well enough to stabilize the queue, the regret grows poly-logarithmically as in the classical MAB setting. After a critical point when the algorithm has learned the system parameters well enough to stabilize the queue, the queue-length goes through regenerative cycles as the queue becomes empty. Thus at the beginning of every regenerative cycle, there is no accumulation of past errors and the sample-path queue-regret is at most zero. As the algorithm estimates the parameters better with time, the length of the regenerative cycles decreases and the queue-regret decays to zero. 

An interesting question to ask is: how does the queue-regret scale as $t \rightarrow \infty$? For the basic MAB problem, it is well-known that regret, which is the cumulative sum of the rate loss in each time-slot, scales as $O(\log t)$. 
 However, queue-regret is the mean difference between the cumulative sum of departures. We show in the following lemma that, for any scheduling policy, queue-regret is upper bounded by the cumulative sum of rate loss, i.e., regret in the basic MAB problem. Let $\bm{S}(t)$ and $\bm{S}^*(t)$ denote the service offered by the servers scheduled at time $t$ by the proposed algorithm and the genie policy respectively. Note that, for any queue $u \in [U]$, $\EE\left[ S^*_u(t) - S_u(t) \right]$ is the rate loss at time $t$. Then, the queue-regret $\bm{\Psi}(t)$ has the following upper bound:
\begin{lemma}
\label{lem:easy-ub}
$\bm{\Psi}(t) = \EE\left[ \bm{Q}(t) - \bm{Q}^*(t) \right] \leq \sum_{l=1}^{t} \EE\left[ \bm{S}^*(l) - \bm{S}(l) \right].$
\end{lemma}
Please see Appendix~\ref{subsec:proof-easy-ub} for the proof for this lemma. This result shows that, in the single queue setting, using classical bandit algorithms like UCB1 and Thompson Sampling, one can achieve a queue-regret of at most $O(\log t)$.
However, our intuition suggests that the accumulation of errors is only over regenerative cycles.  Now observe that the derivative of cumulative regret (which roughly corresponds to regret per time-slot) is $O(1/t),$ and that the regenerative cycle-lengths for the optimal policy are $\Theta(1).$ Thus, it is reasonable to believe that the queue-regret at time $t$ is $O\left( 1/t \right)$ times a constant factor that increases with the length of the regenerative cycle.
\label{edit:forced-explore}
To push this intuition through to a formal proof requires high probability results for bandits over finite intervals of time corresponding to queue busy periods, where the intervals are random variables which are themselves coupled to the bandit strategy (the regenerative cycle-lengths are coupled to the bandit scheduling decisions). Analyses of traditional algorithms like UCB1 and Thompson Sampling guarantee high probability results only after sufficient number of sub-optimal arm pulls (\cite{auer2002finite}). \cite{audibert2009exploration} give upper tail bounds for the number of sub-optimal arm pulls for the UCB1 and UCB-V algorithms but these bounds are polynomial in the factor of deviation from the mean and not polynomial in time.

To the best of our knowledge, there is a lack of high probability upper and lower bounds in the multi-armed bandit literature for arbitrary time intervals.  The lack of such results motivates us to use alternate proof strategies for the lower bound on queue-regret and the achievability results. For the lower bound, we use coupling arguments to derive lower bounds on queue regret growth over one time-step for any $\alpha$-consistent policy. This allows us to show that no $\alpha$-consistent policy can achieve a better scaling than $O\left( 1/t \right).$  For the achievability result, we use a forced exploration variant of UCB/Thompson Sampling (a combination of $\epsilon$-greedy and UCB/Thompson Sampling algorithms) that gives us good upper bounds on the expected number of sub-optimal schedules over finite time intervals. When combined with queueing arguments, this leads to an upper bound on queue regret.  

As before, in each section, we first discuss all our main results in the context of a model with a single queue.  In particular, Section~\ref{ssec:lb-res} discusses our approach to the lower bound while Section~\ref{subsec:alg} discusses our upper bound.  We then extend the results in the single-queue setting to the more general model described in Section~\ref{subsec:sys-model-multiq}.

\subsection{An Asymptotic Lower Bound}
\label{ssec:lb-res}
We first establish a lower bound in a single-queue system; the arguments leading to this lower bound also form the backbone of our lower bound for the switch scheduling system.  

\subsubsection{The Single-Queue System.}  We establish an asymptotic lower bound on regret for the class of $\alpha$-consistent policies.  For the single-queue setting, we define this class of policies as in the traditional stochastic MAB problem (\cite{lai1985asymptotically, salomon2013lower, combes2015bandits}) as follows:
\begin{definition}
\label{defn:consistent}
A scheduling policy for a single queue network is said to be $\alpha$-consistent (for some $\alpha \in (0,1)$) if  for any problem instance $(\lambda,\pmb{\mu})$, there exists a constant   $C(\lambda,\pmb{\mu})$ such that
\begin{align*}
\mathbb{E} \left[ \sum _{s = 1} ^t\mathds{1}\lbrace\kappa(s) = k\rbrace\right] \leq   C(\lambda,\pmb{\mu})  t^{\alpha}
\end{align*}
for all $k \neq k ^*$.
\end{definition} 
This means that any policy under this class schedules a sub-optimal server not more than $O(t^{\alpha})$ (sub-linear in $t$) number of times.  Without a restriction such as that imposed in the preceding definition, ``trivial'' policies that, e.g., schedule the same server every time step would be allowed.  Such policies would have zero expected regret if the chosen server happened to be optimal, and otherwise would have linear regret.  The $\alpha$-consistency requirement rules out such policies, while ensuring the set under consideration is reasonable.
Proposition~\ref{thm:lb-late} below gives an asymptotic lower bound on the queue regret for an arbitrary $\alpha$-consistent policy.
\begin{proposition}
\label{thm:lb-late}
For any problem instance $(\lambda, \pmb{\mu})$ and any $\alpha$-consistent policy, the regret $\Psi(t)$  satisfies
\begin{align*}
\Psi(t) \geq \left( \frac{\lambda}{4}  D(\pmb{\mu}) (1 - \alpha) (K-1) \right) \frac{1}{t}
\end{align*}
for infinitely many $t$, where 
\begin{align}
\label{eq:constD}
D(\pmb{\mu}) := \frac{\Delta}{\mathrm{KL}\left(\mu_{min}, \frac{\mu^*+1}{2} \right)}.
\end{align}

\end{proposition}

\begin{proof}[Proof Outline for Proposition~\ref{thm:lb-late}.]
The proof of the lower bound consists of three main steps. First, in Lemma~\ref{lem:lb-queue1}, we show that the regret at any time-slot is lower bounded by the probability of a sub-optimal schedule in that time-slot (up to a constant factor that is dependent on the problem instance). The key idea in this lemma is to show,  for any scheduling algorithm, the equivalence of two systems with the same marginal service distributions. This is achieved through a carefully constructed coupling argument that maps the original system with independent service across links to another system with service process that is dependent across links but with the same marginal distribution.

 As a second step, the lower bound on the regret in terms of the probability of a sub-optimal schedule enables us to obtain a lower bound on the cumulative queue-regret in terms of the number of sub-optimal schedules. We then use a lower bound on the number of sub-optimal schedules for $\alpha$-consistent policies (Lemma~\ref{lem:banditlb} and Corollary~\ref{cor:lem:banditlb}) to obtain a lower bound on the cumulative regret. In the final step, we use the lower bound on the cumulative queue-regret to obtain an {\it infinitely often} lower bound on the queue-regret.  
\end{proof}
We now generalize this result to the switch network (Theorem~\ref{thm:lb-late-multiq}). We omit the proof details of Proposition~\ref{thm:lb-late} and give a detailed proof for Theorem~\ref{thm:lb-late-multiq} in Appendix~\ref{sec:proofs-lb}.
\subsubsection{The Multi-Queue Network.}
Before stating the generalization of Proposition~\ref{thm:lb-late} for the switch network, we first extend the definition of the class of $\alpha$-consistent policies for a multi-queue network with a unique optimal matching.
\begin{definition}
\label{defn:consistent-multiq}
A scheduling policy for a multi-queue network with a unique optimal matching is said to be $\alpha$-consistent (for some $\alpha \in (0,1)$) if  for any problem instance $(\pmb{\lambda},\pmb{\mu})$, there exists a constant   $C(\pmb{\lambda},\pmb{\mu})$ such that
\begin{align*}
\mathbb{E} \left[ \sum _{s = 1} ^t\mathds{1}\lbrace\kappa _u(s) = k\rbrace\right] \leq  C(\pmb{\lambda},\pmb{\mu}) t^{\alpha}
\end{align*}
for all $u \in [U]$ and $k \neq k ^*_u$.
\end{definition} 

Theorem~\ref{thm:lb-late-multiq} below gives an asymptotic lower bound on the average queue-regret and per-queue regret for an arbitrary $\alpha$-consistent policy in the multi-queue setting; it has the same scaling in $t$ as in the single-queue setting, but now explicitly depends on the parameters of the switch problem.
\begin{theorem}
\label{thm:lb-late-multiq}
For any problem instance $(\pmb{\lambda}, \pmb{\mu})$ with a unique optimal matching, and any $\alpha$-consistent policy, the regret $\pmb{\Psi}(t)$  satisfies	
\begin{enumerate}[label=(\alph*)]
\item \label{item:lb-late-avg} 
\begin{align*}
\frac{1}{U} \sum_{u \in [U]} \Psi_u(t) \geq \left( \frac{\lambda_{min}}{8}  D(\pmb{\mu}) (1 - \alpha) (K-1) \right) \frac{1}{t},
\end{align*}
\item \label{item:lb-late-single} and for any $u \in [U]$, 
\begin{align*}
\Psi_u(t) \geq \left( \frac{\lambda_{min}}{8}  D(\pmb{\mu}) (1 - \alpha) \max \left\lbrace U-1, 2(K-U) \right\rbrace \right) \frac{1}{t}
\end{align*}
\end{enumerate}
for infinitely many $t$, where $D$ is given by \eqref{eq:constD}.
\end{theorem}
 This result does not require \cref{ass:initial-state} and holds irrespective of the distribution of the initial queue-length. The full proof of this theorem is given in Section~\ref{sec:proofs-lb}.

\subsection{Achieving the Asymptotic Bound}
\label{subsec:alg}
The lower bounds in Proposition~\ref{thm:lb-late} and Theorem~\ref{thm:lb-late-multiq} imply that no $\alpha$-consistent policy can achieve a queue-regret better than $O(1/t).$   We next ask if straight-forward generalizations of standard bandit algorithms like UCB and Thompson sampling can achieve a scaling of $O\left( 1/t \right),$ thus matching the lower bound in Theorem~\ref{thm:lb-late-multiq}. To prove that these algorithms achieve this scaling, it is essential to show high probability bounds on scheduling errors over regenerative cycles in the late stage. A systematic way to show this would be to prove that the algorithm has a good estimate of all the service probabilities in the late stage leading to the correct scheduling decision. But for standard bandit algorithms, a lack of concentration results on the number of times each link is scheduled makes it difficult to prove a high probability bound on scheduling errors over a finite time-interval in the late stage.

\subsubsection*{Algorithms with Forced Exploration.}
To get around this difficulty, we propose slightly modified versions of the standard bandit algorithms (UCB1 and Thompson Sampling) generalized to the multi-dimensional queueing bandit. These algorithms, which we call Q-UCB and Q-ThS (corresponding to UCB1 and Thompson sampling, respectively) have an explicit forced exploration component similar to $\epsilon$-greedy algorithms.  This forced exploration may not be necessary to ensure good performance in practice (see \cref{sec:simulations} for a detailed discussion) but it helps us in theoretically proving that the algorithm has a sufficiently good estimate of all the service probabilities (including sub-optimal ones) in the late stage.

We first briefly describe the algorithms in the single queue setting.  The algorithms are precisely defined in a more general setting with multiple queues in Section~\ref{subsubsec:alg-multiq}.  In particular, notation and details of the algorithms are given for the general case in Table~\ref{tab:notation-alg} and Algorithms~\ref{alg:ucb} and \ref{alg:ths} respectively.

In the single queue setting, at time-slot $t$ ($t \geq 1$), both Q-UCB  and Q-ThS  \emph{explore} with probability $\min\{1,3K \log^2t/t\},$ otherwise they \emph{exploit}. To explore, the algorithm chooses a server uniformly at random from the $K$ servers.  The chosen exploration rate ensures that we are able to obtain concentration results for the number of times any link is sampled.\endnote{The exploration rate could scale like $\log t/t$ if we knew $\Delta$ in advance; however, without this knowledge, additional exploration is needed.}  If it exploits, Q-UCB (resp., Q-ThS) chooses the best server using the UCB1 (resp., Thompson sampling) method. In other words, Q-UCB chooses the server that has the highest upper confidence bound for the corresponding service probability, while Q-ThS chooses the server according to random samples drawn from the posterior distribution of the service probabilities.  Note that both UCB1 and Thompson sampling explore as well; the innovation in our algorithms is the {\em forced} exploration that we have added, to provide high confidence estimation of service probabilities  (please see \cref{lem:exploit-ub} in Appendix~\ref{sec:proofs-ub} for details).

\subsubsection*{Asymptotic Guarantees for Q-UCB and Q-ThS.}
For a given problem instance $(\pmb{\lambda}, \pmb{\mu})$  (and therefore fixed $\pmb{\epsilon}$), we can show that the regret under Q-UCB and Q-ThS scale as $O\left( \mathrm{poly}(\log t)/t \right)$.
We state our asymptotic upper bound in the case of a single queue in Proposition~\ref{cor:thm:ucb-best-match}; this is a special case of Corollary~\ref{cor:thm:ucb-best-match-multiq} in Section \ref{subsubsec:alg-multiq}, which holds for a more general setting with multiple queues.


\begin{proposition}
\label{cor:thm:ucb-best-match}
Let   $w(t) = t^{(1 - 1/\beta)}$ for some fixed $\beta > 1$. Then for both Q-UCB and Q-ThS,
$$\Psi(t) = O\left(K \frac{ \log^3 t}{\epsilon^2 t} \right)$$ for all $t$ such that
$\frac{ w(t)}{\log t} \geq \frac{2}{\epsilon}$, $\frac{t}{ w(t)} \geq \max\left\lbrace  \frac{24 K}{\epsilon}, 15 K^2 \log t \right\rbrace$,  $t \geq \exp \left( \frac{4}{\Delta^2 (1 - 1/\beta)^3} \right)$ and $\frac{t}{\log t} \geq \frac{198}{\epsilon^2}$.
\end{proposition}

\begin{proof}[Proof Outline for Proposition~\ref{cor:thm:ucb-best-match}.]
We outline how the proof proceeds here for a single queue, to lay bare the central arguments.  The full proof for Theorem \ref{thm:ucb-best-match-multiq}, which is a generalization of Proposition~\ref{cor:thm:ucb-best-match} for the setting of multiple queues, is provided in Section~\ref{sec:proofs-ub}.

As mentioned earlier, the central idea in the proof is that the sample-path queue-regret is at most zero at the beginning of regenerative cycles, i.e., instants at which the queue becomes empty. The proof consists of two main parts -- one which gives a high probability result on the number of sub-optimal schedules in the exploit phase in the late stage, and the other which shows that at any time, the beginning of the current regenerative cycle is not very far in time. 

The former part is proved in Lemma~\ref{lem:exploit-ub}, where we make use of the forced exploration component of Q-UCB and Q-ThS to show that  all the links, including the sub-optimal ones, are sampled a sufficiently large number of times to give a good estimate of the service probabilities. This in turn ensures that the algorithm schedules the correct links in the exploit phase in the late stages with high probability.
 
For the latter part, we prove a high probability bound on the last time instant when the queue was zero (which is the beginning of the current regenerative cycle) in Lemma~\ref{lem:busy-period-ub2}. Here, we make use of a recursive argument to obtain a tight bound. More specifically, we first use a coarse high probability upper bound on the queue-length (Lemma~\ref{lem:ub-queue1}) to get a first cut bound on the beginning of the regenerative cycle (Lemma~\ref{lem:busy-period-ub1}). This bound on the regenerative cycle-length is then recursively used to obtain tighter bounds on the queue-length, and in turn, the start of the current regenerative cycle (Lemmas~\ref{lem:ub-queue2} and \ref{lem:busy-period-ub2} respectively).

The proof proceeds by combining the two parts above to show that the main contribution to the queue-regret comes from the forced exploration component in the current regenerative cycle, which gives the stated result.  
\end{proof}

 This result, in combination with Proposition~\ref{thm:lb-late}, shows that queue-regret for Q-UCB and Q-ThS in the long-term is within a $\mathrm{poly}(\log t)$ factor of the optimal queue-regret for the $\alpha$-consistent class. 
  We will next describe how this result can be extended to the multi-queue switch network. We skip the details of the proof for Proposition~\ref{cor:thm:ucb-best-match} and give a detailed proof for its generalized version (Corollary~\ref{cor:thm:ucb-best-match-multiq})  in Appendix~\ref{sec:proofs-ub}.

\subsubsection{The Multi-Queue Network.}
\label{subsubsec:alg-multiq}
We now describe the scheduling algorithm in the multi-queue setting.  Let $\mathcal{M} \subset [K]^U$ be the set of all matchings.  We represent a matching by a $U$-length vector in which the $u^{th}$ element gives the server that is assigned to queue $u$. A vector $\pmb{\kappa} \in \mathcal{M}$ if and only if for any $u' \neq u,$ $\kappa_{u'} \neq \kappa_u$. Let $\mathcal{X} \subset \mathcal{M}$ be a subset of $K$ perfect matchings such that their union covers the set of all edges in the complete bipartite graph (it is easy to show that such a decomposition is possible). Also, let $T_{uk}(t)$ be the number of times server $k$ is assigned to queue $u$ in the first $t$ time-slots and $\hat{ \pmb{\mu} }(t)$ be  the empirical mean of the service offered by the links at time $t$ from past observations (until $t-1$). 

At time-slot $t$, Q-UCB and Q-ThS both decide to \emph{explore} with probability $\min\{1,3K \log^2t/t\},$ otherwise they \emph{exploit}. To explore, the algorithms choose a matching uniformly at random from the set $\mathcal{X}.$  If it exploits, Q-UCB (resp., Q-ThS) first assigns a best server for every queue using the UCB1 (resp., Thompson Sampling) method. In other words, for every queue, Q-UCB chooses the server that has the highest upper confidence bound for the corresponding service probability (breaking ties arbitrarily), while Q-ThS chooses the server according to random samples drawn from the posterior distribution of the service probabilities. Q-UCB (resp., Q-ThS) then schedules the projection of this $U$-length server vector $\pmb{\hat{k}}(t)$  onto the space of all matchings $\mathcal{M}$. Notation and details of the algorithms are given in Table~\ref{tab:notation-alg} and Algorithms~\ref{alg:ucb} and \ref{alg:ths}.

Note that the scheduled matching $\pmb{\kappa}(t)$ is the projection of $\pmb{\hat{k}}(t)$ onto the space of all matchings $\mathcal{M}$ with Hamming distance as metric,  i.e., a matching that matches the maximum number of queues with their corresponding ``best'' server.  One way to compute this projection is to take the union of two matchings selected as follows:
\begin{enumerate}
\item The first is a maximal matching in the sub-graph induced by the best server configuration;
\item The second is a maximal matching in the sub-graph obtained by removing the matching chosen in the first step from the complete bipartite graph.
\end{enumerate}

\begin{table}
\renewcommand{\arraystretch}{1.3}
\caption{Notation for Algorithms~\ref{alg:ucb}, \ref{alg:ths}}
\label{tab:notation-alg}
\centering
\begin{tabular}{||c | c||}
\hline
\bfseries Symbol & \bfseries Description	\\
\hline\hline
$\mathsf{E}(t)$ & Indicates if the algorithm schedules a matching through \textit{Explore}	\\ \hline	
$\mathsf{E}_{uk}(t)$ &Indicates if Server $k$ is assigned to Queue $u$ at time $t$ through \textit{Explore}	\\ \hline	
$\mathsf{I}_{uk}(t)$ & Indicates if Server $k$ is assigned to Queue $u$ at time $t$ through \textit{Exploit}	\\ \hline
$T_{uk}(t)$ & Number of time slots Server $k$ is assigned to Queue $u$ in time $[1,t-1]$	\\ \hline
$\hat{ \bm{\mu} }(t)$ & Empirical mean of service at time $t$ from past observations (until $t-1$)	\\ \hline
$\pmb{\kappa}(t)$ & Matching scheduled in time-slot $t$	\\ \hline
\hline
\end{tabular}
\end{table}

\begin{algorithm}
  \caption{Q-UCB}
  \begin{algorithmic}
    \State At time $t \geq 1$, 
    \State Let $\mathsf{E}(t)$ be an independent Bernoulli sample of mean $\min\{1,3K\frac{\log^2t}{t}\}.$
    \If { $\mathsf{E}(t) = 1$}
    \State \textit{Explore:} 
    \State Schedule a matching from $\mathcal{X}$ uniformly at random.
    \Else
    \State \textit{Exploit:}
    \State Compute for all $u \in [U]$
    \begin{align*}
    \hat{k}_u(t) & : = \arg\max_{k \in [K]} \hat{\mu}_{uk}(t) + \sqrt{\frac{\log^2 t}{2 T_{uk}(t-1)}}.
    \end{align*}
    \State Schedule a matching $\pmb{\kappa}(t)$ such that
    \begin{align*}
    \pmb{\kappa}(t) & \in \arg\min_{\pmb{\kappa} \in \mathcal{M}} \sum_{u \in [U]} \mathds{1}\left\lbrace \kappa_u \neq \hat{k}_u(t) \right\rbrace,
    \end{align*}   
    \EndIf 
  \end{algorithmic}
    \label{alg:ucb}
\end{algorithm}
 
\begin{algorithm}
  \caption{Q-ThS}
  \begin{algorithmic}
    \State At time $t \geq 1$, 
    \State Let $\mathsf{E}(t)$ be an independent Bernoulli sample of mean $\min\{1,3K\frac{\log^2t}{t}\}.$
    \If { $\mathsf{E}(t) = 1$}
    \State \textit{Explore:} 
    \State Schedule a matching from $\mathcal{X}$ uniformly at random.
    \Else
    \State \textit{Exploit:}
    \State For each $k \in [K], u \in [U]$ , pick a sample $\hat{\theta}_{uk}(t)$ of distribution,
    \begin{align*}
    \hat{\theta}_{uk}(t) \sim \mathrm{Beta}\left(\hat{\mu}_{uk}(t) T_{uk}(t-1) + 1, \left( 1 - \hat{\mu}_{uk}(t) \right) T_{uk}(t-1) + 1 \right).
\end{align*}   
    \State Compute for all $u \in [U]$
    \begin{align*}
    \hat{k}_u(t) & : = \arg\max_{k \in [K]} \hat{\theta}_{uk}(t)
    \end{align*}
    \State Schedule a matching $\pmb{\kappa}(t)$ such that
    \begin{align*}
    \pmb{\kappa}(t) & \in \arg\min_{\pmb{\kappa} \in \mathcal{M}} \sum_{u \in [U]} \mathds{1}\left\lbrace \kappa_u \neq \hat{k}_u(t) \right\rbrace,
    \end{align*} 
    \EndIf 
  \end{algorithmic}
    \label{alg:ths}
\end{algorithm}

Let $\tau_1 = 5.8 \times 10^3$, and let $\tau_2$ be a constant such that 
\begin{align}
\label{eq:large-const-tau}
t \exp\left( -\frac{1}{2} \log^2 t \right) +  t^2 \exp \left( - \frac{1}{4} \left( 2 \log t \right)^{4/3} \right) +  t^2 \exp \left( - \frac{1}{4} \log^2 t \right) \leq \frac{1}{6t^3}
\end{align}
for all $t \geq \tau_2.$ Such a $\tau_2$ exists since each term on the left-hand side of \eqref{eq:large-const-tau} is $o(1/t^3)$.

The following theorem gives an upper bound on the regret for each individual queue for both Q-UCB (Algorithm~\ref{alg:ucb}) and Q-ThS (Algorithm~\ref{alg:ths}) in the multi-queue setting. 
\begin{theorem}
\label{thm:ucb-best-match-multiq}
Consider any problem instance $(\pmb{\lambda}, \pmb{\mu})$ which has a unique optimal matching. For any $u \in [U]$, let   $w(t) = t^{(1 - 1/\beta)}$ for some fixed $\beta > 1$, $v'_u(t) =  \frac{6K}{\epsilon_u} w(t)$ and $v_u(t) = \frac{24}{\epsilon_u^2} \log t + \frac{60 K}{\epsilon_u} \frac{v'_u(t) \log^2 t}{t}.$
Then, for Algorithm~\ref{alg:ucb} (resp., Algorithm~\ref{alg:ths}), the regret for queue $u$, $\Psi_u(t)$, satisfies 
 $$\Psi_u(t) \leq 6K\frac{v_u(t) \log^2 t}{t} + \frac{24.004 + UK}{6 t^2}$$ for all $t \geq \tau_1$ (resp., $t \geq \tau_2$) such that $t \geq  \exp \left( \frac{4}{\Delta^2 (1 - 1/\beta)^3} \right)$, $\frac{ w(t)}{\log t} \geq \frac{2}{ \epsilon_u}$ and $v_u(t) + v'_u(t) \leq t/2$.
\end{theorem}
 
\begin{remark}
In our analysis, the constants $\tau_1$ and $\tau_2$ are determined by the upper bounds obtained in Lemmas~\ref{lem:explore-ub} and \ref{lem:exploit-ub}. For the upper bound on the number of explore time-slots in Lemma~\ref{lem:explore-ub}, we show that is sufficient to have $t \geq  5.8 \times 10^3$ for both Q-UCB and Q-ThS. For the upper bound on the number of sub-optimal schedules in the exploit phase, our analysis in Lemma~\ref{lem:exploit-ub} shows that it is sufficient to take $\tau_1 = 5.8 \times 10^3$ for Q-UCB. For Q-ThS, we use the `Beta-Binomial trick' from \cite{kaufmann2012thompson,agrawal-goyal12thompson} to get an upper bound which is qualitatively similar to that for Q-UCB (please see the bounds in \eqref{eq:subopt-prob-ucb} and \eqref{eq:subopt-prob-ths} for comparison). However, the constants in the exponent for Q-ThS are smaller than that of Q-UCB by a factor of 4 making $\tau_2$ a much larger constant than $\tau_1$.  While the upper bounds in this paper are suggestive of the evolution of regret with time, the obtained constants do not accurately characterize the empirical performance of the proposed algorithms. In particular, the inferior constants for Q-ThS as compared to Q-UCB could just be an artefact of our analysis. As seen in Figure~\ref{fig:compare}, simulations show that Q-ThS performs better than Q-UCB both in the early and late stages.
\end{remark}

\begin{remark}
For any queue $u$, we state the regret bounds and the corresponding time-intervals in which these bounds hold as a function of $\epsilon_u$, the gap between the arrival probability and the best service probability for that queue. Therefore, the time ranges for which the bounds hold may vary for different queues depending on $\pmb{\epsilon}$.
\end{remark}
\begin{remark}
Although we assume Bernoulli distributions for arrival and service in our model, the result in Theorem~\ref{thm:ucb-best-match-multiq} holds for general, non-Bernoulli distributions with bounded support if
\begin{enumerate}[label=(\roman*)]
\item there is a unique matching that gives the best service rate for all users (similar to Assumption~\ref{ass:best-match} in Section~\ref{sec:sys-model}), and
\item the genie policy that defines the regret $\pmb{\Psi}(t)$ is the one that always schedules the best matching.
\end{enumerate}
\end{remark}

\begin{remark}
The result in Theorem~\ref{thm:ucb-best-match-multiq} can be proved without making any assumption on the distribution of $\mathbf{Q}^*(0)$ as in \cref{ass:initial-state}. Specifically, this assumption is used in \cref{lem:ub-queue1} to  show a first cut upper bound on the queue-length. Using the fact that the process  $\{\mathbf{Q}^*(t)\}$ is geometrically ergodic, the same lemma can be extended to the case of a generic $\mathbf{Q}^*(0)$ but the convergence time would now be a function of $\mathbf{Q}^*(0)$.
\end{remark}

 A slightly weaker version of Theorem~\ref{thm:ucb-best-match-multiq} is given in Corollary~\ref{cor:thm:ucb-best-match-multiq}. This corollary is useful to understand the dependence of the upper bound on the load $\pmb{\epsilon}$ and the number of servers $K$. (Proposition~\ref{cor:thm:ucb-best-match} is a special case of this corollary.)
\begin{corollary}
\label{cor:thm:ucb-best-match-multiq}
For any $\beta > 1$, $$\Psi_u(t) \leq \frac{289 K \log^3 t }{\epsilon_u^2 t}$$ for all $t \geq \tau_0$ such that
$$\log t \geq \max \left\lbrace \frac{4}{\Delta^2 (1 - 1/\beta)^3}, \sqrt{2} \left( \log \frac{2}{\epsilon_u} \right)^{1.5},  \beta \frac{24 K}{\epsilon_u}, \beta \left( \log \log t + \log(15 K^2) \right), \frac{\beta}{\beta - 1} \log\left( \frac{13.2}{K^2 \epsilon_u^2} \right)  \right\rbrace.$$
\end{corollary}
Full proofs of Theorem~\ref{thm:ucb-best-match-multiq} and Corollary~\ref{cor:thm:ucb-best-match-multiq} are given in Section~\ref{sec:proofs-ub}.

\section{The Early Stage in the Heavily Loaded Regime}
\label{sec:early-stg}

Proposition~\ref{cor:thm:ucb-best-match}, Theorem~\ref{thm:ucb-best-match-multiq} show that systematic exploration in Q-UCB and Q-ThS ensures an $O\left(\mathrm{poly}(\log t)/t \right)$ queue-regret in the late stage. The penalty for aggressive exploration is likely to be more apparent in the initial stages when the queues have not yet stabilized and there are few regenerative cycles. As a result, the queueing system has a behavior similar to the traditional MAB system in the early stage. Thus, it is reasonable to expect that algorithms that achieve optimal performance for the traditional MAB problem also perform well in the early stages in the queueing system.

In order to study the performance of $\alpha$-consistent policies in the early stage, we again focus on a single queue, and consider the \emph{heavily loaded} system, where the arrival probability $\lambda$ is close to the optimal service probability $\mu^*$. Specifically, we characterize the behavior of queue-regret as the difference between the two probabilities, $\epsilon = \mu^* - \lambda \rightarrow 0$.  As in the discussion of the late stage, we present all our results first within this single queue model to highlight the main insights, but these results are in fact special cases of results that are proven in the context of a general switch network setting.

Analyzing regret in the early stage in the heavily loaded regime has the effect that the optimal server is the only one that stabilizes the queue.  As a result, in the heavily loaded regime, effective learning and scheduling of the optimal server play a crucial role in determining the transition point from the early stage to the late stage.  For this reason the heavily loaded regime reveals the behavior of regret in the early stage.
%

 Proposition~\ref{thm:lb-early} gives a lower bound on the regret in the heavily loaded regime, roughly in the time interval $\left(  K^{1/1 - \alpha }, O \left( K/\epsilon \right) \right)$ for any $\alpha$-consistent policy.  Recall the definition of $D(\pmb{\mu})$ given by Equation~\eqref{eq:constD}.

\begin{proposition}
\label{thm:lb-early}
Given any single queue system $(\lambda,\pmb{\mu})$, and for any $\alpha$-consistent policy and $\gamma  > \frac{1}{1 - \alpha}$, there exist constants $\tau$ and \const{const:switch-lb}$C_{\ref{const:switch-lb}}$ (independent of $(\lambda,\pmb{\mu})$) such that, for $\eta := \frac{(K-1)D(\pmb{\mu})}{2\max \{C_{\ref{const:switch-lb}} K^{\gamma }, \tau \}}$, if $\epsilon < \eta$, then the regret $\Psi(t)$  satisfies		
 \begin{align*}
\Psi(t) \geq \frac{D(\pmb{\mu})}{2} (K-1) \frac{\log t}{\log \log t}
\end{align*}
for $t \in \left[ \max \{C_{\ref{const:switch-lb}} K^{\gamma }, \tau \}, (K-1)\frac{D(\pmb{\mu})}{2\epsilon} \right]$.
\end{proposition}

\begin{proof}[Proof Outline for Proposition~\ref{thm:lb-early}.]
The crucial idea in the proof is to show a lower bound on the queue-regret in terms of the number of sub-optimal schedules (Lemma~\ref{lem:lb-queue2}). As in Theorem~\ref{thm:lb-late}, we then use a lower bound on the number of sub-optimal schedules for $\alpha$-consistent policies (given by Corollary~\ref{cor:lem:banditlb}) to obtain a lower bound on the queue-regret.  
\end{proof}

Proposition~\ref{thm:lb-early} is a special case of Theorem~\ref{thm:lb-early-multiq} (given below) in the multi-queue setting. It gives lower bounds on the average queue regret and individual queue regret in the early stage.  \cref{ass:initial-state} is useful in this result to ensure that the initial queue-length is not far-off from the optimal (see \cref{lem:lb-queue2}). This assumption can be weakened to the following requirement: ``the expected initial queue-length is bounded above by the expected stationary queue-length of $\mathbf{Q}^*(t)$''. The proof of this theorem is given in Section~\ref{sec:proofs-lb}. Recall that $\epsilon_u = \mu_u^* - \lambda_u$ for $u \in [U]$.
 
\begin{theorem}
\label{thm:lb-early-multiq}
Given any problem instance $(\pmb{\lambda},\pmb{\mu})$, and for any $\alpha$-consistent policy and $\gamma  > \frac{1}{1 - \alpha}$, there exist constants $\tau$ and  $C_{\ref{const:switch-lb}}$ (independent of $(\lambda,\pmb{\mu})$) such that, for $\eta := \frac{(K-1)D(\pmb{\mu})}{2\max \{C_{\ref{const:switch-lb}} K^{\gamma }, \tau \}}$, the following holds:
\begin{enumerate}[label=(\alph*)]
\item \label{item:lb-early-avg}
if $\bar{\epsilon} := \frac{1}{U} \sum_{u \in [U]} \epsilon_u < \frac{\eta}{2}$, then the average regret satisfies		 
\begin{align*}
\frac{1}{U} \sum_{u \in [U]} \Psi _{u} (t) \geq  \frac{D(\pmb{\mu})}{4} (K-1) \frac{\log t}{\log \log t},
\end{align*}
for  $t \in \left[ \max \{C_{\ref{const:switch-lb}} K^{\gamma }, \tau \}, (K-1)\frac{D(\pmb{\mu})}{4\bar{\epsilon}} \right]$, and
\item \label{item:lb-early-single} for any $u \in [U]$, if $\epsilon_u < \eta$, then the regret for queue $u$ satisfies
\begin{align*}
\Psi_u(t) \geq \frac{D(\pmb{\mu})}{4}  \max \left\lbrace U-1, 2(K-U) \right\rbrace \frac{\log t}{\log \log t}
\end{align*}
for $t \in \left[ \max \{C_{\ref{const:switch-lb}} K^{\gamma }, \tau \}, (K-1)\frac{D(\pmb{\mu})}{2\epsilon_u} \right]$.	
\end{enumerate}
\end{theorem} 

Recall that the lower bound in Proposition~\ref{thm:lb-early} and Theorem~\ref{thm:lb-early-multiq}   aim to characterize
  the behavior of regret in the heavily loaded regime, i.e., as  $\epsilon := \mu^* - \lambda \to 0$. While the left end point of
   the interval in which the bound holds depends on the policy, the  right end point of the interval is policy independent.
   Further, for a fixed policy, the left endpoint does not change  as $\epsilon \to 0$ (i.e., it is independent of $\epsilon$),
   while the right endpoint grows without bound as $\epsilon \to 0$.  Therefore, for any $\alpha$-consistent
   policy, there is an $\epsilon$ small enough such that, for a  non-empty time-interval (until
   $(K-1)\frac{D(\pmb{\mu})}{2\epsilon}$), the regret grows at least as $\Omega\left( K \frac{\log t}{\log \log t}\right)$.\footnote{Note that the interval in Proposition~\ref{thm:lb-early} is non-empty for $\epsilon$ satisfying the condition of the proposition; the same is analogously true for $\epsilon_u$ satisfying the conditions in Theorem~\ref{thm:lb-early-multiq}.}  The main
   insight from this result is that it takes at least
   $\Omega \left( \frac{K}{\epsilon} \right)$ time for the queue-regret with any policy to transition from the early stage to the late stage.

Theorem~\ref{thm:lb-early-multiq} shows that, for any $\alpha$-consistent policy, it takes at least $\Omega \left( K/\epsilon \right)$ time for the queue-regret to transition from the early stage to the late stage.

 In this region, regret is growing with time, and the scaling $\Omega(\log t/\log \log t)$ reflects the fact that in this regime queue-regret is dominated by the fact that cumulative regret grows like $\Omega(\log t)$.  A reasonable question then arises: after time $\Omega \left( K/\epsilon \right)$, should we expect the regret to transition into the late stage regime analyzed in the preceding section?

We answer this question by studying when Q-UCB and Q-ThS achieve their late-stage regret scaling of $O \left( \mathrm{poly}(\log t)/\epsilon^2 t\right)$. 
 Formally, we have Corollary~\ref{cor:cor:thm:ucb-best-match-multiq}, which helps in understanding the scaling of the upper bound in Theorem~\ref{thm:ucb-best-match-multiq} with respect to the parameters $K$ and $\epsilon$.
\begin{corollary}
\label{cor:cor:thm:ucb-best-match-multiq}
For any problem instance $(\pmb{\lambda}, \pmb{\mu})$, any queue $u \in [U]$, any $\beta < 2$, $\gamma > \frac{\beta}{\beta - 1}$, $\delta > \beta$, there exists a constant \const{const:ucb-time-lb}$C_{\ref{const:ucb-time-lb}}$ independent of $K$ and $\epsilon_u$  (but depending on $\Delta,$ $\beta$, $\gamma$ and $\delta$) such that the regret for queue $u$ satisfies 
$$\Psi_u(t) \leq \frac{289 K \log^3 t}{\epsilon_u^2 t}$$ 
$\forall t \geq C_{\ref{const:ucb-time-lb}} \max \left\lbrace \left( \frac{1}{\epsilon_u} \right)^{\gamma},  \left( \frac{K}{\epsilon_u} \right)^{\beta} ,  K^{2\delta} \right\rbrace$.
\end{corollary}

As an illustrative example, we consider the case where $K = 1/\epsilon_u$ to compare the scaling of the upper bound with the lower bound in Theorem~\ref{thm:lb-early-multiq}. 
In this case, for any $\delta> 1.5$, there is a constant $C(\Delta, \delta)$ such that the time taken to achieve $O \left( K \mathrm{poly}(\log t)/\epsilon^2 t\right)$ regret scaling is $C(\Delta, \delta) \left( K/\epsilon \right)^{\delta}$. On the other hand, under any $\alpha$-consistent policy, the regret scales as $\Omega \left( K \log t/\log \log t\right)$ for $t < K/\epsilon$.

We conclude by noting that 
Q-UCB and Q-ThS do not yield optimal regret performance in the early stage in general.  In particular, recall that at any time $t$, the forced exploration component in Q-UCB and Q-ThS is invoked with probability $3K\log^2 t/t$. As a result, we see that, in the early stage, queue-regret under Q-UCB and Q-ThS could be a $\log^2 t$-factor worse than the $\Omega \left( \log t/\log \log t\right)$ lower bound shown in Theorem~\ref{thm:lb-early-multiq} for the $\alpha$-consistent class.  This intuition can be formalized: it is straightforward to show an upper bound of $2K \log^3 t$ for any \const{const:ucb-trivial-ub}$t > \max \{C_{\ref{const:ucb-trivial-ub}}, U\}$, where $C_{\ref{const:ucb-trivial-ub}}$ is a constant that depends on $\Delta$ but is independent of $K$ and $\epsilon$; we omit the details. 

\section{Simulation Results}
\label{sec:simulations}
In this section, we present simulation results to empirically evaluate the performance of the algorithms presented in this paper. These results corroborate our theoretical analysis in Sections~\ref{sec:late-stg} and \ref{sec:early-stg}. In particular a phase transition from unstable to stable behavior can be observed in all our simulations, as predicted by our analysis. Below, we demonstrate the performance of Q-ThS (Algorithm~\ref{alg:ths}) under variations of system parameters like the traffic ($\epsilon$), the gap between the optimal and the suboptimal servers ($\Delta$), and the size of the system ($U$ and $K$). We also compare the performance of our algorithms with the traditional bandit algorithms -- UCB1 (\cite{auer2002finite}) and Thompson Sampling (\cite{thompson1933likelihood}), which do not incorporate forced exploration.   

\subsubsection*{Variation with $U,K$ and $\pmb{\epsilon}$.}
\begin{figure*}
\begin{center}
\subfloat[Queue-Regret under Q-ThS for systems with $1$ queue and $5,7$ servers with $\epsilon \in \{ 0.05,0.1,0.15 \}$ \label{fig:kvar1}]{\includegraphics[width=.45\linewidth]{./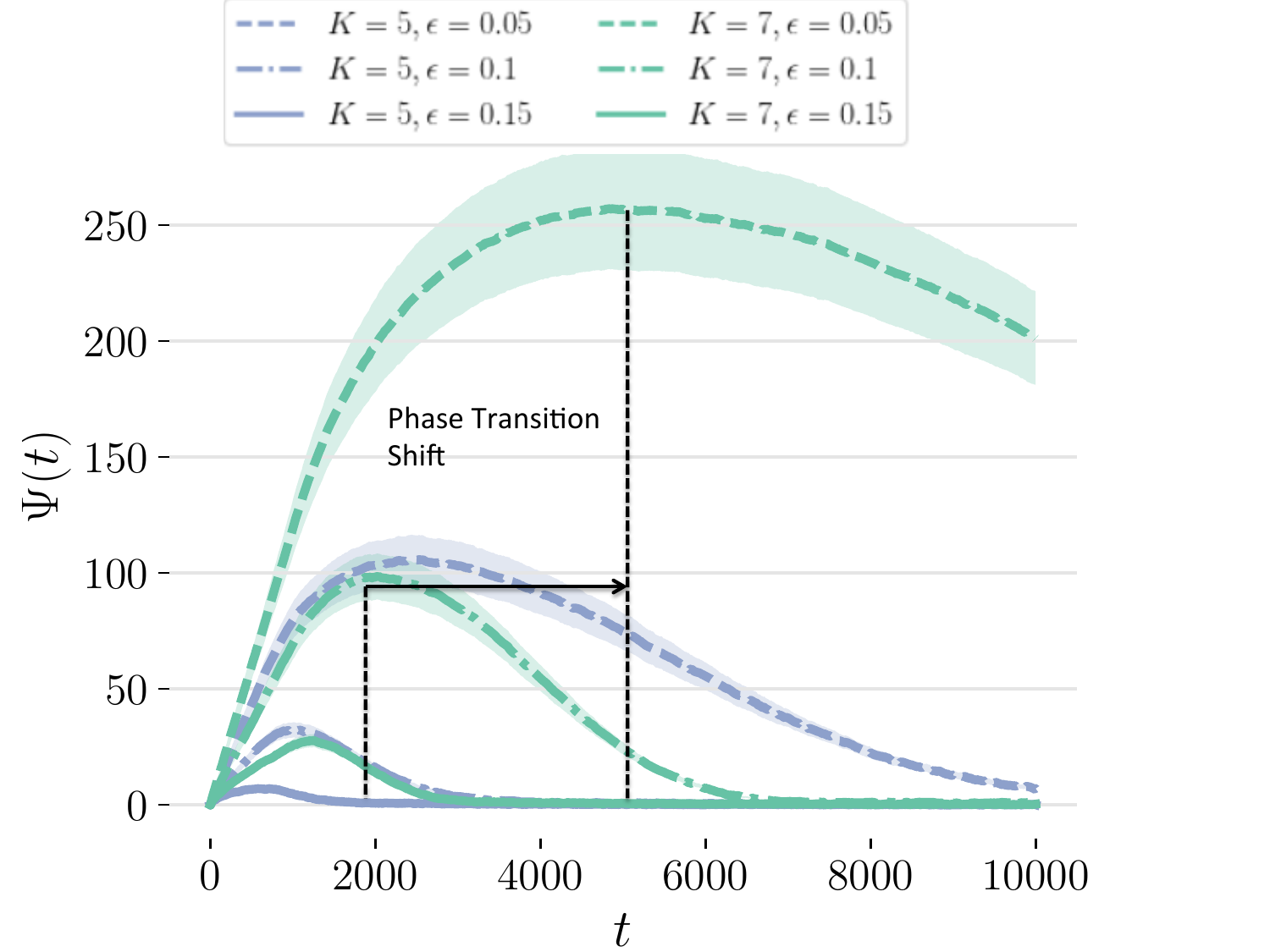}}
\hfill
\subfloat[Queue-Regret under Q-ThS for systems with $1,3$ queues and $5$ servers with $\epsilon  \in \{ 0.05,0.1,0.15 \}$ \label{fig:kvar2}]{\includegraphics[width=.45\linewidth]{./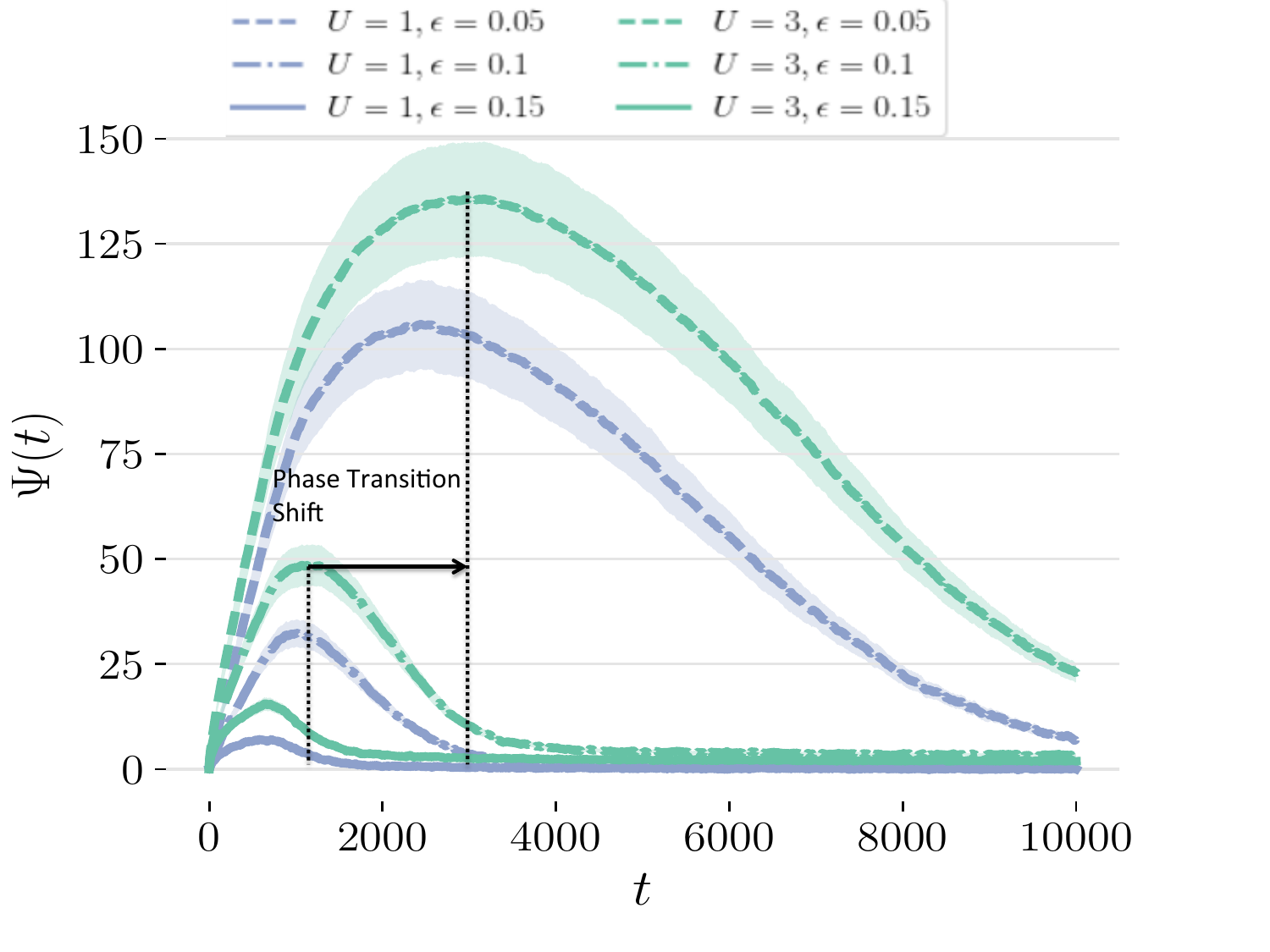}}
\caption{Variation of Queue-regret $\Psi(t)$ with $K, U$ and $\epsilon$ under Q-Ths. The phase-transition point shifts towards the right as $\epsilon$ decreases. The efficiency of learning decreases with increase in the size of the system. We plot the median statistics of the average queue regret over $1000$ simulations and the shaded regions indicate the area between the first and third quartiles.  
\label{fig:kvar} }
\end{center}
\end{figure*}
 Figure~\ref{fig:kvar1} shows the evolution of queue-regret in single queue systems with $5$ and $7$ servers for different values of load $\epsilon$. It can be observed that the regret decays faster in the smaller system as expected from the theoretical upper and lower bounds, which are proportional to the number of servers $K$. In Figure~\ref{fig:kvar2}, we show the evolution of queue regret in systems with $5$ servers and $1,3$ queues for different load values. For the system with $3$ queues, we take the same load for all the queues ($\epsilon_u = \epsilon$ for $u = 1, 2, 3$) and plot the quantity $\Psi(t) = \max_{u} \Psi_u(t)$. We observe that the queue regret decays at a slower rate in the system with $3$ queues as predicted by the theoretical analysis in Theorems~\ref{thm:lb-late-multiq} and~\ref{thm:lb-early-multiq}. 
 
 It is also evident from Figure~\ref{fig:kvar} that the regret of the queueing system grows with decreasing $\epsilon$. We can observe that the time at which the phase transition occurs shifts towards the right with decreasing $\epsilon$ which is predicted by Corollaries~\ref{cor:thm:ucb-best-match-multiq} and \ref{cor:cor:thm:ucb-best-match-multiq}. 


\subsubsection*{Comparison with Implicit Exploration Algorithms.}

\begin{figure}
\begin{center}
{\includegraphics[width=0.65\textwidth]{./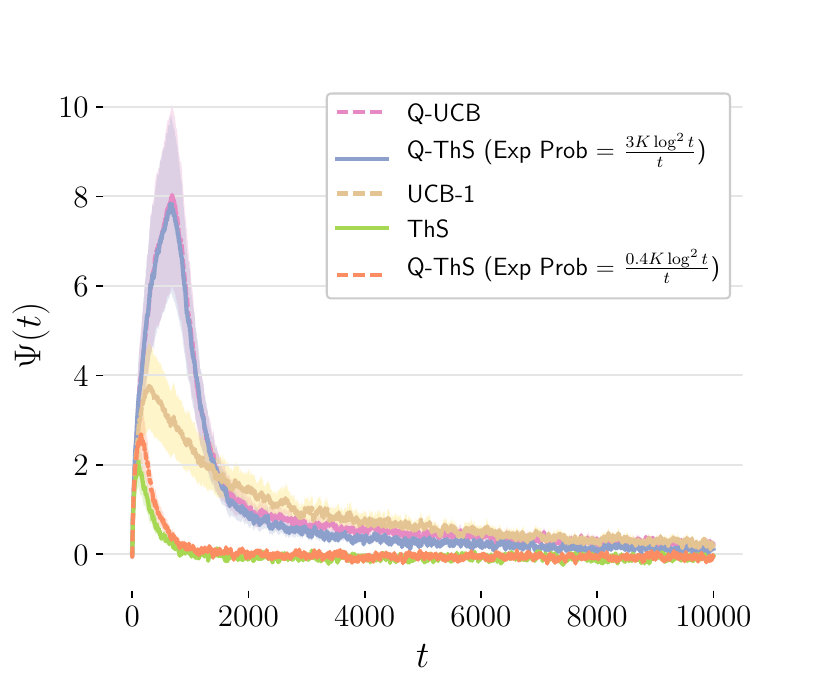}}
\caption{Comparison of queue-regret performance of Q-ThS, Q-UCB, UCB1, and Thompson sampling in a $5$ server system with $\epsilon _u = 0.15$ and $\Delta = 0.17$.  Two variants of Q-ThS are presented, with different exploration probabilities; note that $3K \log^2 t/t$ is the exploration probability suggested by theoretical analysis (which is necessarily conservative).  Tuning the constant significantly improves performance of Q-ThS bringing it closer to Thompson sampling. We plot the median statistics of the average queue regret over $3000$ simulations and the shaded regions in the same color indicate the area between the first and third quartiles.\label{fig:compare}
}
\end{center}
\end{figure}

  In Figure~\ref{fig:compare}, we benchmark the performance of our algorithms against traditional bandit algorithms UCB1 and Thompson sampling. It can be observed that, in the early stage, the traditional algorithms perform better than the proposed ones. This can be explained by the additional forced exploration required by Q-UCB and Q-ThS. In the late stage, we observe that Q-UCB gives slightly better performance than UCB1, however Thompson sampling has better performance in both the stages.  In this numerical example, the results suggest that Thompson sampling is already achieving enough exploration to deliver adequate estimation of service probabilities to stabilize the queues.  We introduced forced exploration as a vehicle to theoretically understand performance limits of queue regret, which brings some loss of performance in the early stage.  Our simulation results thus pose an interesting open direction: to try to quantify the queue regret performance of classical bandit algorithms that do not explicitly include forced exploration.

\section{Discussion and Conclusion}
\label{sec:concl}



This paper provides the first regret analysis of the queueing bandit problem, including a characterization of regret in both early and late stages; and algorithms (Q-UCB, Q-ThS) that are asymptotically optimal (to within poly-logarithmic factors).  Here we also highlight several additional substantial open directions for future work.  

First, note that in the late stage we have a lower bound that is essentially of order $1/t$ (cf.~Proposition \ref{thm:lb-late}), while the upper bound for our algorithm is of order $\log^3 t / t$ (cf.~Proposition \ref{cor:thm:ucb-best-match} and Theorem \ref{thm:ucb-best-match-multiq}).  It remains an open question whether one can close the gap between lower and upper bounds; in particular, this would require development of an algorithm that is able to achieve concentration of the distribution of busy period length with less exploration than the algorithms we have developed.

Second, and related to the previous point, is there a single algorithm that gives optimal performance in {\em both} early and late stages, as well as the optimal switching time between early and late stages?   The price paid for forced exploration by Q-UCB/Q-ThS is an inflation of regret in the early stage.  An important open question is to find a single, adaptive algorithm that gives good performance over all time.  As we note in the Section~\ref{sec:simulations}, classic (no forced exploration) Thompson sampling is an intriguing candidate from this perspective.

Third, the most significant technical hurdle in finding a single optimal algorithm is the difficulty of establishing concentration results for the number of suboptimal arm pulls within a regenerative cycle whose length is dependent on the bandit strategy.  Such concentration results would be needed in two different limits: first, as the start time of the regenerative cycle approaches infinity (for the asymptotic analysis of late stage regret); and second, as the load of the system increases (for the analysis of early stage regret in the heavily loaded regime).  Any progress on the open directions described above would likely require substantial progress on these technical questions as well.

Finally, we conclude by noting that analysis of this problem in an adversarial/non-stochastic setting presents substantial challenges distinct from our analysis.  At its core, the challenge is that even the formulation of an appropriate adversarial/non-stochastic setting is nontrivial.  Suppose in particular that service rates are adversarial/non-stochastic, but the arrivals are a stationary stochastic process. Immediately within such a context, we are confronted with difficulty in defining both an appropriate notion of stability, as well as an appropriate notion of a genie policy.  In the standard bandit problem in this setting, the goal of the scheduler is to match a genie which chooses the arm with the maximum cumulative (long term) reward; but this may not be appropriate in the queue setting.  More precisely, two different regimes become apparent.  In particular, if the long run average service rate of every single server is insufficient to serve the long-run average arrival rate, then the queue blows up, and optimal queue-regret (against the best arm in hindsight) is identical to the usual multi-armed bandit as $O(\sqrt{t})$.  On the other hand, if all the servers give sufficient service to empty the queue regularly, then decisions made in the ``recent'' past will dominate queue-regret, and therefore we will want to compare against a genie that adaptively chooses the best server locally in time depending on the service capacity realized during the current busy period.

\section{Proof of Regret Upper Bound}
\label{sec:proofs-ub}

\subsection{Proof of Lemma~\ref{lem:easy-ub}}
\label{subsec:proof-easy-ub}
\subsubsection*{Alternate Coupled Service Process.}
 We first construct an alternate service process such that, under any scheduling policy, the queue evolution for this system has the same distribution as that for the original system.  We will use this construction of the service process to prove {\em (i)} the upper bound in Lemma~\ref{lem:easy-ub} and {\em (ii)} the lower bound in Theorem~\ref{thm:lb-late-multiq}. The construction allows coupling of service offered to each queue across different servers. Specifically, the service process satisfies the following condition: {\em for each queue, the service offered by different servers at any time-slot could possibly be dependent on each other but has the same marginal distribution as that in the original system and is independent of the service offered to other queues.}
 
More precisely, the service process is constructed as follows: for each $u \in [U]$, let $\left\lbrace U_u(t) \right\rbrace_{t \geq 1}$ be i.i.d.\ random variables distributed uniformly in $(0,1)$. The service process for queue $u$ and server $k$ be given by $R_{uk}(t) = \mathds{1}\left\lbrace U_u(t) \leq \mu_{uk} \right\rbrace \; \forall t.$ Since $\EE\left[ R_{uk}(t) \right] = \mu_{uk}$, the marginals of the service offered by each of the servers is the same as the original system, and also, service offered to different queues are independent of each other. 

Note that the evolution of the queues is a function of the process $\left( \mathbf{Z}(l) \right)_{l \geq 1} := \left( \mathbf{A}(l), \pmb{\kappa}(l), \mathbf{S}(l) \right)_{l \geq 1}.$ To prove that the process $\mathbf{Z}$ has the same distribution  for both service processes -- independent Bernoulli and the coupled process described above -- we use induction on the size of the finite-dimensional distribution of $\mathbf{Z}$. In other words, we show that the distribution of the vector $\left( \mathbf{Z}(l) \right)_{l=1}^{t}$ is the same for the two systems for all $t$ by induction on $t$. 
 
 Suppose that the hypothesis is true for $t-1$. Now consider the conditional distribution of $\mathbf{Z}(t)$ given $\left( \mathbf{Z}(l) \right)_{l=1}^{t-1}.$ Given $( \mathbf{Z}(l) )_{l=1}^{t-1}$, the distribution of $\left( \mathbf{A}(t), \pmb{\kappa}(t) \right)$ is identical for the two systems for any scheduling policy since the two systems have the same arrival process. Also, given $\left( (\mathbf{Z}(l) )_{l=1}^{t-1}, \mathbf{A}(t), \pmb{\kappa}(t)\right)$, the distribution of $\mathbf{S}(t)$ depends only on the marginal distribution of the scheduled servers given by $\pmb{\kappa}(t)$ which is again the same for the two systems. Therefore, $\left( \mathbf{Z}(l) \right)_{l=1}^{t}$ has the same distribution in the two systems. Since the statement is true for $t=1$, it is true for all $t$.

Consequently, w.r.t. the process $\mathbf{Z}$, the original system with independent Bernoulli process is stochastically equivalent to the alternate system with the coupled service process. Notably, the queue-regret in the two systems are equal, and so is the traditional regret (given by the cumulative rate loss). 
\begin{proof}[Proof of Lemma~\ref{lem:easy-ub}.]
We will prove the upper bound using the coupled service process described above. Consider any queue $u \in [U]$.  Since $\mu^*_u > \mu_{uk} \; \forall k \neq k^*_u$, we have $R_{uk^*_u}(t) \geq R_{uk}(t) \; \forall k \neq k^*_u, \, \forall t$. This implies that $S^*_u(t) \geq S_u(t)$ and $Q^*_u(t) \leq Q_u(t) \; \forall t$. Using this and the fact that the difference in queue-lengths is equal to the difference in the total number of departures, we have
\begin{align*}
Q_u(t) - Q^*_u(t) & = \sum_{l=1}^t S^*_u(l)\ind{Q^*_u(l-1) > 0} - S_u(l)\ind{Q_u(l-1) > 0}	\\
& \leq \sum_{l=1}^t \left( S^*_u(l) - S_u(l) \right)\ind{Q_u(l-1) > 0}	\\
& \leq \sum_{l=1}^t S^*_u(l) - S_u(l).
\end{align*}
This gives us the required result, i.e., $$\EE\left[ \bm{Q}(t) - \bm{Q}^*(t) \right] \leq \sum_{l=1}^{t} \EE\left[ \bm{S}^*(l) - \bm{S}(l) \right].$$ 
\end{proof}

\subsection{Proof of Theorem~\ref{thm:ucb-best-match-multiq}}
As mentioned, we prove all our results for the generalized setting of a switch network with a unique optimal matching. We state and prove a few intermediate lemmas that are useful in proving Theorem~\ref{thm:ucb-best-match-multiq}. Proof of Lemma~\ref{lem:exploit-ub} is given in separate parts for Q-UCB and Q-ThS. All other proofs in Appendix~\ref{sec:proofs-ub}, although given for the algorithm Q-UCB, follow in an identical manner for Q-ThS. Important notation used in the two algorithms are summarized in Table~\ref{tab:notation-alg}.

As shown in Algorithm~\ref{alg:ucb}, $\mathsf{E}(t)$ indicates whether Q-UCB chooses to explore at time $t$. We now obtain a bound on the expected number of time-slots Q-UCB chooses to explore in an arbitrary time interval $(t_1, t_2]$. Since at any time $t$, Q-UCB decides to explore with probability $\min\{1,3K\frac{\log^2t}{t}\},$ we have
\begin{align}
\label{eq:explore-mean-ub}
\EE\left[ \sum_{l = t_1 + 1}^{t_2} \mathsf{E}(l) \right] 
\leq 3K \sum_{l = t_1 + 1}^{t_2} \frac{\log^2l}{l} \leq 3K  \int_{t_1}^{t_2} \frac{\log^2l}{l} \diff l = K\left(\log^3 t_2 - \log^3 t_1 \right).
\end{align}
The following lemma gives a probabilistic upper bound on the same quantity.
Recall that $w(t) = t^{(1 - 1/\beta)}$ as defined in \cref{thm:ucb-best-match-multiq}. We will use this definition of $w(t)$ for all the lemmas in this section.
\begin{lemma}
\label{lem:explore-ub}
\begin{enumerate}[label=(\alph*)]
\item \label{lem:explore-ub1}
For any $t$ and $t_1 < t_2$, 
\begin{align*}
\PP\left[  \sum_{l=t_1+1}^{t_2}  \mathsf{E}(l) \geq 5 \max \left( \log t, K\left( \log^3 t_2 - \log^3 t_1 \right) \right) \right]
& \leq \frac{1}{t^4}.
\end{align*}
\item \label{lem:explore-ub2}
For $t \geq 5.8 \times 10^3$, $$\PP\left[ \sum_{l=1}^{t} \mathsf{E}(l) > Kw(t) \right] \leq \frac{1}{t^{2K}}.$$
\end{enumerate}
\end{lemma}
To prove the result, we will use the following Chernoff bound: for a sum of independent Bernoulli random variables $Y$ with mean $\EE Y$ and for any $\delta > 0,$
\begin{align}
\label{eq:chernoff}
\PP\left[ Y \geq (1+\delta) \EE Y \right] \leq \left(  \frac{e^\delta}{(1+\delta)^{1+\delta}} \right)^{\EE Y}.
\end{align}
\begin{proof}[Proof of Lemma~\ref{lem:explore-ub}\ref{lem:explore-ub1}.]
If $\EE Y \geq \log t,$ the above bound for $\delta = 4$ gives 
$$\PP\left[ Y \geq 5 \EE Y \right] \leq \frac{1}{t^4}.$$ 
Note that $\left\lbrace \mathsf{E}(l)\right\rbrace_{l=t_1 + 1}^{t_2}$ are independent Bernoulli random variables and let $X = \sum_{l=t_1}^{t_2}\mathsf{E}(l).$  Now consider the probability $\PP\left[ X \geq 5\max \left( \log t, \EE X \right) \right].$ If $\EE X \geq \log t,$ then the result is true from the above Chernoff bound. If $\EE X < \log t,$ then it is possible to construct a random variable $Y$ which is a sum of independent Bernoulli random variables, has mean $\log t$ and stochastically dominates $X,$ in which case we can again use the Chernoff bound on $Y$. Therefore, $$\PP\left[ X \geq 5 \log t \right] \leq \PP\left[ Y \geq 5 \log t \right] \leq \frac{1}{t^4}.$$

Using inequality~\eqref{eq:explore-mean-ub}, we have the required result, i.e., 
\begin{align*}
\PP\left[  \sum_{l=t_1+1}^{t_2}  \mathsf{E}(l) \geq 5 \max \left( \log t, K\left( \log^3 t_2 - \log^3 t_1 \right) \right) \right]	 \leq \PP\left[ X \geq 5\max \left( \log t, \EE X \right) \right] \leq 1/t^4.
\end{align*}
\end{proof}

\begin{proof}[Proof of Lemma~\ref{lem:explore-ub}\ref{lem:explore-ub2}.]
Using the following alternate form of the Chernoff bound \eqref{eq:chernoff}
\begin{align*}
\PP\left[ Y > y \right] \leq \frac{1}{e^{\EE Y}} \left( \frac{e \EE Y}{y} \right)^y,
\end{align*}
we get,
\begin{align*}
\PP\left[ \sum_{l=1}^{t} \mathsf{E}(l) > Kw(t) \right] & \leq \PP\left[ \sum_{l=1}^{t} \mathsf{E}(l) > K \exp\left( \left( 2 \log t \right)^{2/3} \right) \right]	\\
& \leq \frac{1}{e^{K \log^3 t}} \left( \frac{e \log^3 t}{\exp\left( \left( 2 \log t \right)^{2/3} \right)} \right)^{K \exp\left( \left( 2 \log t \right)^{2/3} \right)}	\\
& \leq \frac{1}{t^{2K}}
\end{align*}
for all $t \geq 5.8 \times 10^3$. To justify the last inequality, we first verify that the function $$e^{(2x)^{2/3}}\left( (2x)^{2/3} - 1 - 3\log x \right) + x^3 - 2x$$ is positive for all $x \geq \log(5800)$. The result then follows by taking $x = \log t.$
\end{proof}
The next lemma shows that, with high probability, Q-UCB (or Q-ThS) does not schedule a sub-optimal matching when it exploits in the late stage.
\begin{lemma}
\label{lem:exploit-ub}
Define the event \event{ev:no-suboptimal}$\mathcal{E}_{\ref{ev:no-suboptimal}} := \sum_{l=w(t)+1}^{t} \sum_{u \in [U]} \sum_{k\neq k^*_u} \mathsf{I}_{uk}(l) = 0.$ Then, for Q-UCB (Q-ThS),
$$\PP\left[ \mathcal{E}_{\ref{ev:no-suboptimal}}^c \right] = \PP\left[ \bigcup_{u \in [U]} \sum_{l=w(t)+1}^{t} \sum_{k\neq k^*_u} \mathsf{I}_{uk}(l) > 0 \right] \leq \frac{UK}{6t^3},$$
for all $t \geq \tau_1$ ($t \geq \tau_2$), $t \geq  \exp \left( \frac{4}{\Delta^2 (1 - 1/\beta)^3} \right)$.
\end{lemma}
\begin{proof}[Proof.]
Let $X_{uk}(l), u = 1, 2, \dots, K \, , k = 1, 2, \dots, K, \, l = 1, 2, 3, \dots$ be independent random variables denoting the service offered in the $l^{th}$ assignment of server $k$ to queue $u$ and let $B_{uk}(s,t) = \frac{1}{s} \sum_{l=1}^s X_{uk}(l) + \sqrt{\frac{\log^2 t}{2 s}}.$ Consider the events
\begin{align}
\label{ev:explore-lb}
T_{uk}(l) \geq \frac{1}{2} \log^3(l-1) \quad \forall  k \in [K],  u \in [U], w(t)+1 \leq l \leq t,
\end{align}
\begin{align}
\label{ev:opt-high1}
B_{uk^*_u}(s,l) > \mu^*_u \quad \forall s, l \text{ s.t. } \frac{1}{2} \log^3(w(t)) \leq s \leq t-1, w(t)+1 \leq l \leq t,  \; \forall u \in [U],
\end{align}
and
\begin{align}
\label{ev:subopt-low1}
B_{uk}(s,l) \leq \mu^*_u \quad \forall s, l \text{ s.t. } \frac{1}{2} \log^3(w(t)) \leq s \leq t-1, w(t)+1 \leq l \leq t,  \forall k \neq k^*_u, \; \forall u \in [U].
\end{align}
It can be seen that, given the above events, Q-UCB schedules the optimal matching in all time-slots in $(w(t), t]$ in which it decides to exploit, i.e., $\sum_{l=w(t)+1}^{t} \sum_{k\neq k^*_u} \mathsf{I}_{uk}(l) = 0$ for all $u \in [U]$. We now show that the events above occur with high probability. 

Note that, since the matchings in $\mathcal{X}$ cover all the links in the system, $T_{uk}(l+1) \leq \frac{1}{2} \log^3(l)$ for some $u, k$ implies that  $\sum_{s=1}^{l} \mathds{1} \left\lbrace \pmb{\kappa}(s) = \pmb{\kappa} \right\rbrace  \leq \frac{1}{2} \log^3(l)$ for some $\pmb{\kappa} \in \mathcal{E}.$ Since $\sum_{s=1}^{l} \mathds{1} \left\lbrace \pmb{\kappa}(s) = \pmb{\kappa} \right\rbrace$ is a sum of i.i.d.\ Bernoulli random variables with sum mean at least $\log^3 (l),$ we use Chernoff bound to prove that event \eqref{ev:explore-lb} occurs with high probability. Note that $t \geq  \exp \left( \frac{4}{\Delta^2 (1 - 1/\beta)^3} \right)$ implies that $\log^3(w(t)) = (1 - 1/\beta)^3 \log^3t \geq \left( \frac{2 \log t}{\Delta} \right)^{2}$. Therefore,
\begin{equation}
\label{eq:lb-w(t)}
\log(w(t)) \geq \left( \frac{2 \log t}{\Delta} \right)^{2/3}.
\end{equation} 
This gives us
\begin{align}
\label{eq:explore-lb}
\PP\left[ \text{\eqref{ev:explore-lb} is false}	\right]
& \leq \sum_{\pmb{\kappa} \in \mathcal{E}} \sum_{l = w(t)}^{t-1} \PP\left[ \sum_{s=1}^{l}  \mathds{1} \left\lbrace \pmb{\kappa}(s) = \pmb{\kappa} \right\rbrace \leq \frac{1}{2} \log^3(l) \right]	\nonumber	\\
& \leq K t \exp\left( -\frac{1}{8} \log^3(w(t)) \right)	\nonumber	\\
& \leq K t \exp\left( -\frac{1}{8} \left( \frac{2 \log t}{\Delta} \right)^{2} \right)	\nonumber	\\
& \leq K t \exp\left( -\frac{1}{2} \log^2 t \right).
\end{align}
Similarly, probability of events \eqref{ev:opt-high1} and \eqref{ev:subopt-low1} can be bounded as follows --
\begin{align}
\label{eq:conc-opt-high1}
\PP\left[ \text{\eqref{ev:opt-high1} is false}	\right]
& \leq \sum_{ u \in [U]} \sum_{l=w(t) + 1}^{t}\sum_{s=\frac{1}{2} \log^3(w(t))}^{t-1} \PP\left[ B_{uk^*_u}(s,l) \leq \mu^*_u \right]	\nonumber	\\
& \leq U \sum_{l=w(t) + 1}^{t}\sum_{s=\frac{1}{2} \log^3(w(t))}^{t-1} \exp \left( -\log^2(l) \right)	\nonumber	\\
& \leq U t^2 \exp \left( - \log^2(w(t)) \right)	\nonumber	\\
& \leq U t^2 \exp \left( - \left( \frac{2 \log t}{\Delta} \right)^{4/3} \right)	\nonumber	\\
& \leq U t^2 \exp \left( - \left( 2 \log t \right)^{4/3} \right).
\end{align}

\begin{align}
\label{eq:conc-subopt-low1}
\PP\left[ \text{\eqref{ev:subopt-low1} is false}	\right]
& \leq \sum_{u \in [U], k\neq k^*_u} \sum_{l=w(t) + 1}^{t}\sum_{s=\frac{1}{2} \log^3(w(t))}^{t-1} \PP\left[ B_{uk}(s,l) > \mu^*_u \right]	\nonumber 	\\
& \leq \sum_{u \in [U], k\neq k^*_u}  \sum_{l=w(t) + 1}^{t} \sum_{s=\frac{1}{2} \log^3(w(t))}^{t-1}  \PP\left[ B_{uk}(s,l) > \Delta + \mu_{uk} \right]	\nonumber 	\\ 
& \leq UK \sum_{l=w(t) + 1}^{t} \sum_{s=\frac{1}{2} \log^3(w(t))}^{t-1}  \exp \left( -2s \left( \Delta - \sqrt{\frac{\log^2 l}{2s}} \right)^2	 \right)	\nonumber 	\\
& \leq UK  t^2 \exp \left( - \log^3 (w(t)) \left( \Delta - \sqrt{\frac{\log^2 t}{\log^3 (w(t))}} \right)^2	 \right)	\nonumber 	\\
& \leq UK t^2 \exp \left( -  \log^2 t \right).
\end{align}
Combining the inequalities~\eqref{eq:explore-lb}, \eqref{eq:conc-opt-high1} and \eqref{eq:conc-subopt-low1} we have
\begin{align*}
\PP\left[ \mathcal{E}_{\ref{ev:no-suboptimal}}^c \right]
& \leq  \PP\left[ \text{\eqref{ev:explore-lb} is false}	\right]	+ \PP\left[ \text{\eqref{ev:opt-high1} is false}	\right] + \PP\left[ \text{\eqref{ev:subopt-low1} is false}	\right]	\\
& \leq K t \exp\left( -\frac{1}{2} \log^2 t \right) + U t^2 \exp \left( - \left( 2 \log t \right)^{4/3} \right) + UK t^2 \exp \left( -  \log^2 t \right)	\numberthis \label{eq:subopt-prob-ucb}	\\
& \leq U K t \left( \exp\left( -\frac{1}{2} \log^2 t \right) +  \frac{t}{2} \exp \left( - \left( 2 \log t \right)^{4/3} \right) + t \exp \left( -\log^2 t \right) \right)	\\
& \leq \frac{UK}{6t^3} 
\end{align*}
$\forall \, t \geq 5800.$
This proves the result for Q-UCB.

The proof of this result for Q-ThS follows in a similar fashion. For Q-ThS, events \eqref{ev:opt-high1} and \eqref{ev:subopt-low1} are substituted with the following events
\begin{equation}
\label{event:e2}
\theta _{uk^*_u} (s) > \mu_u ^* - \frac{ \log  (s-1)}{\sqrt{2T_{uk^*_u}(s)}}, \mbox{  } \forall s, \mbox{ s.t. } w(t)+1 \leq s \leq t , u \in [U]  
\end{equation}
and
\begin{equation}
\label{event:e3}
\theta _{uk} (s) \leq  \mu_u ^* - \frac{ \log  (s-1)}{\sqrt{2T_{uk^*_u}(s)}}, \mbox{  } \forall s, k \mbox{ s.t. } w(t)+1 \leq s \leq t , k \neq k^*_u, u \in [U].
\end{equation}
It is then sufficient to prove that the above two events occur with high probability. Given events \eqref{ev:explore-lb}, \eqref{event:e2}, \eqref{event:e3}, Q-ThS schedules the optimal matching in all time-slots in $(w(t), t]$ in which it decides to exploit, i.e., $\sum_{l=w(t)+1}^{t} \sum_{k\neq k^*_u} \mathsf{I}_{uk}(l) = 0$ for all $u \in [U]$.
 
Let $\Sigma_{u,k,l} = \sum_{r = 1}^{l} X_{uk}(r)$ and $S_{uk}(l) = \hat{\mu}_{uk}(l)T_{uk}(l) = \Sigma_{u,k,T_{uk}(l)}$ for all $u\in [U], k \in [K]$, $l \in \setN$. We use the `Beta-Binomial trick' (used in \cite{kaufmann2012thompson,agrawal-goyal12thompson}), which gives a relation between the c.d.fs of Beta and Binomial distributions to prove the high probability results.  Let $F^{\mathrm{Beta}}_{a,b}$ and $F^{\mathrm{B}}_{n,p}$ denote the c.d.f of $\mathrm{Beta}(a,b)$ distribution and the c.d.f. of $\mathrm{Binomial}(n,p)$ distribution respectively. Then
\begin{align*}
F^{\mathrm{Beta}}_{a,b}(y) = 1 - F^{\mathrm{B}}_{a+b-1, y}(a-1).
\end{align*} 
For each $s, l \in \setN$, let $\{Z_{s,l}(r)\}_{r>0}$ be a sequence of i.i.d.\ Bernoulli random variables with mean $\mu_u ^* - \frac{ \log  (s-1)}{\sqrt{2l}}$. Now, to bound probability of event~\eqref{event:e2}, 
\begin{align*}
\PP\left[ \text{\eqref{event:e2} is false} \bigcap \eqref{ev:explore-lb} \text{ is true} \right]
& \leq \sum_{u \in [U]} \sum_{s = w(t)+1}^{t} \PP \left[ \theta _{uk^*_u} (s) \leq \mu_u ^* - \frac{ \log  (s-1)}{\sqrt{2T_{uk^*_u}(s)}} \bigcap \eqref{ev:explore-lb} \text{ is true} \right].
\end{align*}
Now, for any $u \in [U]$, $w(t)+1 \leq s \leq t$, we have
\begin{align*}
& \PP \left[ \theta _{uk^*_u} (s) \leq \mu_u ^* - \frac{ \log  (s-1)}{\sqrt{2T_{uk^*_u}(s)}} \bigcap \eqref{ev:explore-lb} \text{ is true} \right]	\\
& = \sum_{l = \frac{1}{2}\log^3(s-1)}^{s} \EE\left[ \ind{T_{uk^*_u}(s) = l} F^{\mathrm{Beta}}_{S_{uk^*_u}(s)+1,T_{uk^*_u}(s)-S_{uk^*_u}(s)+1} \left( \mu_u ^* - \frac{ \log  (s-1)}{\sqrt{2T_{uk^*_u}(s)}} \right) \right] \\
& = \sum_{l = \frac{1}{2}\log^3(s-1)}^{s}  \EE\left[ 1 - F^{\mathrm{B}}_{l+1,\mu_u ^* - \frac{ \log  (s-1)}{\sqrt{2l}}} \left( \Sigma_{u,k^*_u,l} \right)  \right]  \\
&  \leq \sum_{l = \frac{1}{2}\log^3(s-1)}^{s}  \PP \left[ \Sigma_{u,k^*_u,l} \leq \sum_{r=1}^{l+1} Z_{s,l}(r) \right]  \\
& \leq \sum_{l = \frac{1}{2}\log^3(s-1)}^{s} \exp\left( -\frac{\log^2(s-1)}{4} \right)	\\
& \leq t \exp \left( -\frac{1}{4} \log^2 (w(t)) \right)	\\
& \leq t \exp \left( -\frac{1}{4} \left( 2 \log t \right)^{4/3} \right).
\end{align*}
The last inequality follows by using \eqref{eq:explore-lb} to bound the first term and Chernoff-Hoeffding inequality to bound the second term.

Similarly, the probability of event~\eqref{event:e3} can be bounded as follows.
\begin{align*}
\PP \left[ \text{\eqref{event:e3} is false} \bigcap \eqref{ev:explore-lb} \text{ is true} \right] 
& \leq \sum_{u \in [U], k \neq k^*_u} \sum_{s = w(t)+1}^{t} \PP \left[ \theta _{uk} (s) > \mu_u ^* - \frac{ \log  (s-1)}{\sqrt{2T_{uk^*_u}(s)}} \bigcap \eqref{ev:explore-lb} \text{ is true} \right]  \\
\end{align*}
Now, for any $u \in [U], k \neq k^*_u$, $w(t)+1 \leq s \leq t$, we have
\begin{align*}
& \PP \left[ \theta _{uk} (s) > \mu_u ^* - \frac{ \log  (s-1)}{\sqrt{2T_{uk^*_u}(s)}} \bigcap \eqref{ev:explore-lb} \text{ is true}  \right]	\\
& \leq \sum_{l = \frac{1}{2}\log^3(s-1)}^{s} \PP\left[ T_{uk^*_u}(s) = l \right] \EE\left[ 1 - F^{\mathrm{Beta}}_{\Sigma_{u,k,l}+1,l-\Sigma_{u,k,l}+1} \left( \mu_u ^* - \frac{ \log  (s-1)}{\sqrt{2l}} \right) \given[\bigg] T_{uk^*_u}(s) = l  \right]	\\
& = \sum_{l = \frac{1}{2}\log^3(s-1)}^{s} \PP\left[ T_{uk^*_u}(s) = l \right] \EE\left[ F^{\mathrm{B}}_{l+1,\mu_u ^* - \frac{ \log  (s-1)}{\sqrt{2l}}} \left( \Sigma_{u,k,l} \right) \given[\bigg] T_{uk^*_u}(s) = l  \right]	\\
& \leq \sum_{l = \frac{1}{2}\log^3(s-1)}^{s} \PP \left[ \sum_{r=1}^{l+1} Z_{s,l}(r) \leq \Sigma_{u,k,l} \right]  \\
& \leq \sum_{l = \frac{1}{2}\log^3(s-1)}^{s}  \exp \left( -\frac{l}{2} \left( \Delta - \sqrt{\frac{\log^2 (s-1)}{2l}} \right)^2	 \right) 	\\
& \leq  t \exp \left( -\frac{1}{4} \log^3 (w(t)) \left( \Delta - \sqrt{\frac{\log^2 t}{\log^3 (w(t))}} \right)^2	 \right)	\\
& \leq t \exp \left( - \frac{1}{4} \log^2 t \right).	
\end{align*}

Combining the above inequalities and \eqref{eq:explore-lb}, we have
\begin{align*}
\PP\left[ \mathcal{E}_{\ref{ev:no-suboptimal}}^c \right]
& \leq  \PP\left[ \text{\eqref{ev:explore-lb} is false}	\right]	+ \PP\left[ \text{\eqref{event:e2} is false} \bigcap \eqref{ev:explore-lb} \text{ is true} \right] + \PP \left[ \text{\eqref{event:e3} is false} \bigcap \eqref{ev:explore-lb} \text{ is true} \right] 	\\
& \leq K t \exp\left( -\frac{1}{2} \log^2 t \right) + U t^2 \exp \left( - \frac{1}{4} \left( 2 \log t \right)^{4/3} \right) + UK t^2 \exp \left( - \frac{1}{4} \log^2 t \right).	\numberthis \label{eq:subopt-prob-ths}	 
\end{align*}
Note from \eqref{eq:large-const-tau} that the last expression is $\leq \frac{UK}{6t^3}$ for $t \geq \tau_2$, which proves the result for Q-ThS. 
\end{proof}

For any time $t$, let $$B_u(t) := \min \{s \geq 0 : Q_u(t-s) = 0\}$$ denote the time elapsed since the beginning of the current regenerative cycle for queue $u$. Alternately, at any time $t$, $t-B_u(t)$ is the last time instant at which queue $u$ was zero. 

The following lemma gives an upper bound on the sample-path queue-regret in terms of the number of sub-optimal schedules in the current regenerative cycle.
\begin{lemma}
\label{lem:queue-diff-characterization}
For any $t \geq 1$,
$$Q_u(t) - Q^*_u(t) \leq  \sum_{l=t-B_u(t)+1}^{t} \left( \mathsf{E}(l) + \sum_{k\neq k^*_u} \mathsf{I}_{uk}(l) \right).$$
\end{lemma}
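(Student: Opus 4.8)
The plan is to compare the two queue processes over the current regenerative cycle of $Q_u$, i.e.\ the interval $(t-B_u(t), t]$, and to exploit the fact that $Q_u(t-B_u(t)) = 0$ while $Q^*_u(t-B_u(t)) \geq 0$. Write $\tau := t - B_u(t)$, so by definition $Q_u(\tau) = 0$. Since both queues see the \emph{same} arrival process $A_u(\cdot)$ (Assumption on the coupled arrivals), over any time slot the only difference in the Lindley recursion comes from the service terms $S_u(\cdot)$ versus $S^*_u(\cdot)$. The key observation is that the optimal policy receives service $R_{u,k^*_u}(l)$ in slot $l$, whereas Q-ThS(match) receives $R_{u,\kappa_u(l)}(l)$; these differ only in slots where $\kappa_u(l) \neq k^*_u$, which is exactly the set of slots where either the explore indicator $\mathsf{E}(l)=1$ assigns a non-optimal server, or the exploit indicator $\sum_{k\neq k^*_u}\mathsf{I}_{uk}(l) = 1$. (When $\kappa_u(l) = k^*_u$ both systems get the same Bernoulli service realization in that slot.)

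First I would unroll the Lindley recursion from $\tau$ to $t$ for both processes. A standard identity gives, for any starting slot with $Q_u(\tau)=0$,
\begin{align*}
Q_u(t) = \max_{\tau \leq j \leq t} \sum_{l=j+1}^{t} \bigl( A_u(l) - S_u(l) \bigr),
\end{align*}
and similarly $Q^*_u(t) \geq \max_{\tau \leq j \leq t} \sum_{l=j+1}^{t}\bigl(A_u(l) - S^*_u(l)\bigr)$ (with equality if $Q^*_u$ also empties at some point in $[\tau,t]$, and an inequality in the right direction otherwise since dropping the $0$ in the max only decreases the RHS). Subtracting, and using that the arrival terms cancel inside each sum, one gets
\begin{align*}
Q_u(t) - Q^*_u(t) \leq \max_{\tau \leq j \leq t} \sum_{l=j+1}^{t} \bigl( S^*_u(l) - S_u(l) \bigr) \leq \max_{\tau \leq j \leq t} \sum_{l=j+1}^{t} \mathds{1}\{ \kappa_u(l) \neq k^*_u \},
\end{align*}
since $S^*_u(l) - S_u(l) \leq 1$ when $\kappa_u(l)\neq k^*_u$ and $=0$ otherwise (both services are $\{0,1\}$-valued, and identical when the optimal server is played). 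Finally I would bound the maximum over $j$ by the full sum over $l \in (\tau, t]$ and identify $\mathds{1}\{\kappa_u(l)\neq k^*_u\} \leq \mathsf{E}(l) + \sum_{k\neq k^*_u}\mathsf{I}_{uk}(l)$, which holds because a sub-optimal assignment to queue $u$ arises either through exploration or through exploitation; this yields exactly the claimed bound.

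The main obstacle is the handling of $Q^*_u$: one must be careful that the cycle $[\tau,t]$ is a regenerative cycle for $Q_u$, \emph{not} necessarily for $Q^*_u$, so $Q^*_u(\tau)$ need not be zero and $Q^*_u$ may or may not empty within the interval. The clean way around this is precisely the sup-representation above: the inequality $Q^*_u(t) \geq \max_{\tau\leq j\leq t}\sum_{l=j+1}^t (A_u(l)-S^*_u(l))$ is valid regardless of $Q^*_u(\tau)$ (it follows from iterating $Q^*_u(l) \geq Q^*_u(l-1) + A_u(l) - S^*_u(l)$ and dropping nonnegativity), which is all that is needed to make the subtraction go through in the right direction. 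I would make sure to state that one-sided bound carefully rather than asserting an exact Lindley formula for the optimal queue.
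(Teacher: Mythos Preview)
Your proof is correct and follows essentially the same route as the paper: couple the two queues with the same arrivals and service realizations, restrict attention to the current regenerative cycle $(\tau,t]$ of $Q_u$, lower-bound $Q^*_u(t)$ by a partial-sum expression over that window, subtract so the arrivals cancel, and bound $S^*_u(l)-S_u(l)$ by the indicator of a sub-optimal assignment, which in turn is at most $\mathsf{E}(l)+\sum_{k\neq k^*_u}\mathsf{I}_{uk}(l)$.

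The only stylistic difference is that you invoke the Lindley $\max$-representation for $Q_u(t)$ and then use $\max_j F(j)-\max_j G(j)\leq \max_j(F(j)-G(j))$ before bounding by the full sum, whereas the paper observes directly that $Q_u(l)>0$ on $(\tau,t]$ (by the definition of $B_u(t)$), so the recursion never truncates and $Q_u(t)=\sum_{l=\tau+1}^{t}(A_u(l)-S_u(l))$ exactly; it then lower-bounds $Q^*_u(t)$ by the single partial sum starting at $\tau+1$. Your detour through the max is unnecessary here since the busy-period structure already pins the maximizer at $j=\tau$, but it is harmless and yields the same bound.
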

\begin{proof}[Proof.]
If $B_u(t) = 0$, i.e., if $Q_u(t) = 0,$ then the result is trivially true. 

Consider the case where $B_u(t) > 0$. Since $Q_u(l) > 0$ for all $t-B_u(t)+1 \leq l \leq t$, we have 
\begin{align*}
Q_u(l) = Q_u(l-1) + A_u(l) - S_u(l)	\quad \forall t-B_u(t)+1 \leq l \leq t.
\end{align*}
This implies that $$Q_u(t) = \sum_{l=t-B_u(t)+1}^{t} A_u(l) - S_u(l).$$ 
Moreover,
\begin{align*}
Q_u^*(t) = \max_{1 \leq s \leq t} \left( Q_u^*(0) + \sum_{l=s}^{t} A_u(l) - S^*_u(l) \right)^+ \geq \sum_{l=t-B_u(t)+1}^{t} A_u(l) - S^*_u(l).
\end{align*}
Combining the above two expressions, we have
\begin{align*}
Q_u(t) - Q^*_u(t) & \leq \sum_{l=t-B_u(t)+1}^{t} S^*_u(l) - S_u(l)	\\
& =  \sum_{l=t-B_u(t)+1}^{t} \sum_{k \in [K]} \left( R_{u k^*_u}(l) - R_{uk}(l) \right) \left( \mathsf{E}_{uk}(l) + \mathsf{I}_{uk}(l) \right)	\\
& \leq \sum_{l=t-B_u(t)+1}^{t} \sum_{k\neq k^*_u} \left( \mathsf{E}_{uk}(l) + \mathsf{I}_{uk}(l) \right)	\\
& \leq \sum_{l=t-B_u(t)+1}^{t} \left( \mathsf{E}(l) + \sum_{k\neq k^*_u} \mathsf{I}_{uk}(l) \right),
\end{align*}
where the second inequality follows from the assumption that the service provided by each of the links is bounded by $1,$ and the last inequality from the fact that $\sum_{k \in [K]} \mathsf{E}_{uk}(l) = \mathsf{E}(l) \; \forall l, \forall u \in [U].$ 
\end{proof}

In the next lemma, we derive a coarse high probability upper bound on the queue-length. This bound on the queue-length is used later to obtain a first cut bound on the length of the regenerative cycle in Lemma~\ref{lem:busy-period-ub1}.
\begin{lemma}
\label{lem:ub-queue1}
Define the event \event{ev:explore-ub1}$\mathcal{E}_{\ref{ev:explore-ub1}} := \left\lbrace \sum_{l=1}^{t} \mathsf{E}(l) \leq Kw(t) \right\rbrace$. Then for any $l \in [1,t]$,
$$\PP\left[ \left\lbrace Q_u(l) > 2 K w(t) \right\rbrace \cap \mathcal{E}_{\ref{ev:no-suboptimal}} \cap \mathcal{E}_{\ref{ev:explore-ub1}} \right] \leq \frac{1}{t^3}$$
$\forall t$ s.t.  $\frac{ w(t)}{\log t} \geq \frac{2}{\epsilon_u}$.
\end{lemma}
\begin{proof}[Proof.]
We show this result for $l = t$ but the same argument holds for any  $l \in [1,t]$. From Lemma~\ref{lem:queue-diff-characterization},
\begin{align*}
 Q_u(t) - Q^*_u(t)	
\leq  \sum_{l=t-B_u(t)+1}^{t} \left( \mathsf{E}(l) + \sum_{k\neq k^*_u} \mathsf{I}_{uk}(l) \right)
\leq  \sum_{l=1}^{t} \left( \mathsf{E}(l) + \sum_{k\neq k^*_u} \mathsf{I}_{uk}(l) \right).
\end{align*}
Since $Q^*_u(t)$ is distributed according to $\pi_{(\lambda_u, \mu^*_u)},$
\begin{align*}
\PP\left[ Q^*_u(t) > w(t)  \right] = \frac{\lambda_u}{\mu^*_u}\left( \frac{\lambda_u\left( 1-\mu^*_u \right)}{\left( 1-\lambda_u \right)\mu^*_u} \right)^{w(t)} \leq \exp \left( w(t) \log \left( \frac{\lambda_u\left( 1-\mu^*_u \right)}{\left( 1-\lambda_u \right)\mu^*_u} \right) \right) \leq  \frac{1}{t^3}
\end{align*}
if $\frac{ w(t)}{\log t} \geq \frac{2}{\epsilon_u}.$ The last inequality follows from the following bound --
\begin{align*}
\log \left( \frac{\left( 1-\lambda_u \right)\mu^*_u}{\lambda_u\left( 1-\mu^*_u \right)} \right) & = \log \left( 1 + \frac{\epsilon_u}{\lambda_u\left( 1-\mu^*_u \right)} \right)	\\
& \geq  \log \left( 1 + 4 \epsilon_u \right) \quad \text{since } \left( \lambda_u\left( 1-\mu^*_u \right) < 1/4 \right)	\\
& \geq  \frac{3}{2} \epsilon_u.
\end{align*}
Now, note that, given $\mathcal{E}_{\ref{ev:no-suboptimal}}$, $$\sum_{l=1}^{t} \sum_{k\neq k^*_u} \mathsf{I}_{uk}(l) \leq (K-1) w(t) + \sum_{l=w(t)+1}^{t} \sum_{k\neq k^*_u} \mathsf{I}_{uk}(l) = (K-1) w(t).$$ 
Using the inequalities above, we have
\begin{align*}
\PP\left[ \left\lbrace Q_u(l) > 2 K w(t) \right\rbrace \cap \mathcal{E}_{\ref{ev:no-suboptimal}} \cap \mathcal{E}_{\ref{ev:explore-ub1}} \right] 
\leq \PP\left[ Q^*_u(t) >  w(t) \right]  \leq \frac{1}{t^3}.
\end{align*}
\end{proof}

\begin{lemma}
\label{lem:busy-period-ub1}
Let $v'_u(t) = \frac{6 K}{\epsilon_u} w(t)$ and let $v_u$ be an arbitrary function. Then,
 $$\PP\left[ \left\lbrace B_u\left( t-v_u(t) \right) > v'_u(t) \right\rbrace \cap \mathcal{E}_{\ref{ev:no-suboptimal}} \cap \mathcal{E}_{\ref{ev:explore-ub1}} \right] \leq \frac{2}{t^3}$$
 $\forall t$ s.t.  $\frac{ w(t)}{\log t} \geq \frac{2}{\epsilon_u}$ and $v_u(t) + v'_u(t) \leq t/2$.
\end{lemma}
\begin{proof}[Proof.]
Let $r(t) := t-v_u(t).$ Consider the events
\begin{align}
Q_u(r(t)-v'_u(t)) & \leq 2 K w(t),	\label{ev:queue-small-ub1}	\\
\sum_{l=r(t)-v'_u(t)+1}^{r(t)} A_u(l) -  R_{uk^*_u}(l) & \leq -\frac{\epsilon_u}{2} v'_u(t),	\label{ev:arrival-capacity-ub1}	\\
\sum_{l=r(t)-v'_u(t)+1}^{r(t)} \mathsf{E}(l) + \sum_{k\neq k^*_u} \mathsf{I}_{uk}(l) & \leq K w(t).		\label{ev:subopt-schedule-ub1} 
\end{align}
By the definition of $v'_u(t),$
\begin{align*}
2 K w(t)  -\frac{\epsilon_u}{2} v'_u(t) \leq -  K w(t).
\end{align*}

Given Events~\eqref{ev:queue-small-ub1}-\eqref{ev:subopt-schedule-ub1}, the above inequality implies that
\begin{align*}
Q_u(r(t)-v'_u(t)) + \sum_{l=r(t)-v'_u(t)+1}^{r(t)} A_u(l) & \leq \sum_{l=r(t)-v'_u(t)+1}^{r(t)} R_{uk^*_u}(l) -  \left( \mathsf{E}(l) + \sum_{k\neq k^*_u} \mathsf{I}_{uk}(l) \right)	\\
& \leq \sum_{l=r(t)-v'_u(t)+1}^{r(t)} S_u(l),
\end{align*}
which further implies that $Q_u(l) = 0$ for some $l \in \left[ r(t)-v'_u(t) + 1, r(t) \right]$. This gives us that $B_u(r(t)) \leq v'_u(t).$

Since $v_u(t) + v'_u(t) \leq t/2$, we have $r(t)-v'_u(t) \geq t/2 > w(t)$. Thus, the event $\mathcal{E}_{\ref{ev:no-suboptimal}} \cap \mathcal{E}_{\ref{ev:explore-ub1}}$ implies event \eqref{ev:subopt-schedule-ub1}.

Now, consider the event~\eqref{ev:arrival-capacity-ub1} and note that $A_u(l) -  R_{uk^*_u}(l)$ are i.i.d.\ random variables with mean $-\epsilon_u$ and bounded between $-1$ and $1$. Using Chernoff bound for sum of bounded i.i.d.\ random variables, we have
\begin{align*}
\PP\left[ \sum_{l=r(t)-v'_u(t)+1}^{r(t)} A_u(l) -  R_{uk^*_u}(l) > -\frac{\epsilon_u}{2} v'_u(t) \right]  \leq \exp \left( - \frac{\epsilon_u^2}{8} v'_u(t) \right) \leq \frac{1}{t^3}
\end{align*}
since $v'_u(t) \geq \frac{6 K}{\epsilon_u} w(t) \geq \frac{24}{\epsilon_u^2} \log t$.

Further, by Lemma~\ref{lem:ub-queue1}, $\forall t$ s.t.  $\frac{ w(t)}{\log t} \geq \frac{2}{\epsilon_u}$,
 $$\PP\left[ \left\lbrace Q_u(r(t)-v'_u(t)) > 2 K w(t) \right\rbrace \cap \mathcal{E}_{\ref{ev:no-suboptimal}} \cap \mathcal{E}_{\ref{ev:explore-ub1}} \right] \leq \frac{1}{t^3}.$$

Therefore,  $\forall t$ s.t.  $\frac{ w(t)}{\log t} \geq \frac{2}{\epsilon_u}$ and $v_u(t) + v'_u(t) \leq t/2$, we get
\begin{align*}
& \PP\left[ \left\lbrace B_u\left( t-v_u(t) \right) > v'_u(t) \right\rbrace \cap \mathcal{E}_{\ref{ev:no-suboptimal}} \cap \mathcal{E}_{\ref{ev:explore-ub1}} \right]	\\
& \leq \PP\left[ \sum_{l=r(t)-v'_u(t)+1}^{r(t)} A_u(l) -  R_{uk^*_u}(l) > -\frac{\epsilon_u}{2} v'_u(t) \right] + \PP\left[ \left\lbrace Q_u(r(t)-v'_u(t)) > 2 K w(t) \right\rbrace \cap \mathcal{E}_{\ref{ev:no-suboptimal}} \cap \mathcal{E}_{\ref{ev:explore-ub1}} \right]	\\
& \leq \frac{2}{t^3}.
\end{align*}
\end{proof}

Using the preceding upper bound on the regenerative cycle-length, we derive tighter bounds on the queue-length and the regenerative cycle-length in Lemmas~\ref{lem:ub-queue2} and \ref{lem:busy-period-ub2} respectively. The following lemma is a useful intermediate result. 

\begin{lemma}
\label{lem:conc-arrival-capacity}
For any $u \in [U]$ and $t_2$ s.t. $1 \leq t_2 \leq t,$ $$\PP\left[ \max_{1 \leq s \leq t_2} \left\lbrace \sum_{l=t_2-s+1}^{t_2} A_u(l) - R_{uk^*_u}(l) \right\rbrace \geq \frac{2 \log t}{\epsilon_u} \right] \leq \frac{1}{t^3}.$$
\end{lemma}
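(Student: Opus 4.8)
The plan is to control the maximum over all window lengths $s$ of the partial sum $\sum_{l=t_2-s+1}^{t_2} \left( A_u(l) - R_{uk^*_u}(l) \right)$ by a union bound over $s$, combined with a Chernoff bound that exploits the negative drift $-\epsilon_u$ of each increment. Write $X_l := A_u(l) - R_{uk^*_u}(l)$; these are i.i.d., bounded in $[-1,1]$, with $\EE[X_l] = \lambda_u - \mu^*_u = -\epsilon_u$. For a fixed window length $s$, the sum $\sum_{l=t_2-s+1}^{t_2} X_l$ has mean $-\epsilon_u s$, so exceeding the threshold $\frac{2\log t}{\epsilon_u}$ requires a deviation of at least $\epsilon_u s + \frac{2\log t}{\epsilon_u}$ above the mean. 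A Hoeffding/Chernoff bound for bounded i.i.d.\ sums then gives, roughly,
\begin{align*}
\PP\left[ \sum_{l=t_2-s+1}^{t_2} X_l \geq \frac{2\log t}{\epsilon_u} \right] \leq \exp\left( -\frac{\left( \epsilon_u s + \tfrac{2\log t}{\epsilon_u} \right)^2}{2s} \right) \leq \exp\left( -\frac{\epsilon_u^2 s}{2} - 2\log t \right),
\end{align*}
where the last step drops the cross term and the nonnegative $\left(\tfrac{2\log t}{\epsilon_u}\right)^2/(2s)$ term appropriately (one has to be a little careful with the constants in the sub-Gaussian bound for $[-1,1]$-valued variables, but $e^{-\epsilon_u^2 s/8}$-type decay is certainly available and suffices).

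Next I would take a union bound over $s = 1, \dots, t_2 \leq t$:
\begin{align*}
\PP\left[ \max_{1 \leq s \leq t_2} \sum_{l=t_2-s+1}^{t_2} X_l \geq \frac{2\log t}{\epsilon_u} \right] \leq \sum_{s=1}^{t_2} e^{-\epsilon_u^2 s/2} \cdot t^{-2} \leq \frac{t^{-2}}{1 - e^{-\epsilon_u^2/2}} \cdot \text{(something)}.
\end{align*}
The geometric series in $s$ sums to a constant (depending on $\epsilon_u$), and the extra factor of $t_2 \le t$ from the crudest version of the union bound is absorbed by the spare $t^{-2}$, leaving $O(t^{-3})$ — hence $\le t^{-3}$ after adjusting constants and using $t$ large. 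Actually, to get the clean bound $1/t^3$ it is cleanest not to pull the $e^{-2\log t}$ out uniformly but to keep $\sum_s e^{-\epsilon_u^2 s/2 - 2\log t} = t^{-2}\sum_s e^{-\epsilon_u^2 s/2}$; since the geometric sum is at most $2/\epsilon_u^2$ for small $\epsilon_u$, and the threshold $\frac{2\log t}{\epsilon_u}$ actually buys an extra $\frac{2\log t}{\epsilon_u}\cdot\epsilon_u = 2\log t$ of decay that I suppressed above, one in fact gets an extra factor $t^{-2}$ to spare, comfortably yielding $1/t^3$.

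The only mild obstacle is bookkeeping the constants in the Chernoff bound for $[-1,1]$-valued summands so that the threshold $\frac{2\log t}{\epsilon_u}$ (rather than something scaling like $\frac{c\log t}{\epsilon_u^2}$) is enough to beat the union-bound factor $t_2$ and produce exactly $t^{-3}$; this is where one uses that each increment contributes both a deterministic drift term $-\epsilon_u$ (which, summed, gives the $e^{-\epsilon_u^2 s/2}$ factor that makes the series over $s$ converge) and that the threshold itself scales as $1/\epsilon_u$, so the product $\epsilon_u \cdot \frac{2\log t}{\epsilon_u} = 2\log t$ supplies the $t^{-2}$. There is nothing deep here — it is the standard "maximal inequality for a random walk with negative drift" argument — and I expect it to go through exactly as in the proof of the analogous Event~\eqref{ev:arrival-capacity-ub1} bound in Lemma~\ref{lem:busy-period-ub1}, just with the maximum over window lengths handled by the summable geometric series.
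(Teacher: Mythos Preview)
Your overall plan --- Hoeffding for each fixed window length $s$, then a union bound over $s$ --- is exactly the paper's approach. The gap is in the algebra that converts the Hoeffding exponent into a clean power of $t$.

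You write the per-$s$ bound as
\[
\exp\!\left( -\frac{\big(\epsilon_u s + \tfrac{2\log t}{\epsilon_u}\big)^2}{2s} \right)
\]
and then expand and drop terms, keeping only $\tfrac{\epsilon_u^2 s}{2} + 2\log t$ in the exponent. Summing the resulting $t^{-2}e^{-\epsilon_u^2 s/2}$ over $s$ gives a bound of order $t^{-2}/\epsilon_u^2$, \emph{not} $t^{-3}$. Your subsequent paragraph trying to recover the missing factor of $t^{-1}$ is circular: once the geometric series has absorbed the union bound there is no ``extra factor of $t_2$'' still to be spent, and you cannot both drop the $(2\log t/\epsilon_u)^2/(2s)$ term and later claim it supplies another $t^{-2}$.

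The paper's fix is a one-line AM--GM: from $(a+b)^2 \ge 4ab$ with $a=\epsilon_u s$ and $b=\tfrac{2\log t}{\epsilon_u}$,
\[
\frac{\big(\epsilon_u s + \tfrac{2\log t}{\epsilon_u}\big)^2}{2s} \;\ge\; \frac{4\,\epsilon_u s \cdot \tfrac{2\log t}{\epsilon_u}}{2s} \;=\; 4\log t,
\]
so every term is at most $t^{-4}$, uniformly in $s$. A crude union bound over the $t_2\le t$ values of $s$ then gives exactly $t\cdot t^{-4}=t^{-3}$. Note that this argument needs \emph{both} the drift term $\epsilon_u s$ and the threshold term $\tfrac{2\log t}{\epsilon_u}$ simultaneously; discarding either one (as your expand-and-drop does) loses a factor of $t$.
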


\begin{proof}[Proof.]
Let $X_s = \sum_{l=t_2-s+1}^{t_2} A_u(l) - R_{uk^*_u}(l)$. Since $X_s$ is the sum of $s$ i.i.d.\ random variables with mean $\epsilon_u$ and is bounded within $[-1,1]$, Hoeffding's inequality gives
\begin{align*}
\PP\left[ X_s \geq \frac{2 \log t}{\epsilon_u} \right] & = \PP\left[ X_s -\EE X_s \geq \epsilon_u s + \frac{2 \log t}{\epsilon_u} \right]	\\
& \leq \exp\left( -\frac{2\left( \epsilon_u s + \frac{2 \log t}{\epsilon_u} \right)^2}{4 s} \right)	\\
& \leq \exp\left( -4\log t \right),	\\
\end{align*}
where the last inequality follows from the fact that $(a+b)^2 > 4ab$ for any $a, b \geq 0.$ Using union bound over all $1 \leq s \leq t_2$ gives the required result. 
\end{proof}

\begin{lemma}
\label{lem:ub-queue2}
Let $v'_u(t) = \frac{6 K}{\epsilon_u} w(t)$ and $v_u$ be an arbitrary function. Then,
\begin{align*}
\PP\left[ \left\lbrace Q_u(t-v_u(t)) >  \left( \frac{2}{\epsilon_u} + 5 \right) \log t + 30K \frac{v'_u(t) \log^2 t}{t} \right\rbrace \cap \mathcal{E}_{\ref{ev:no-suboptimal}} \cap \mathcal{E}_{\ref{ev:explore-ub1}} \right] \leq  \frac{3}{t^3} + \frac{1}{t^4}
\end{align*}
 $\forall t$ s.t.  $\frac{ w(t)}{\log t} \geq \frac{2}{\epsilon_u}$ and $v_u(t) + v'_u(t) \leq t/2$.
\end{lemma}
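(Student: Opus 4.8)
The plan is to set $r(t) := t - v_u(t)$ and work on the high-probability event, furnished by Lemma~\ref{lem:busy-period-ub1}, that $B_u(r(t)) \le v'_u(t)$; on this event the regenerative cycle current at time $r(t)$ began no earlier than $r(t) - v'_u(t)$, and since $v_u(t) + v'_u(t) \le t/2$ we have $r(t) - v'_u(t) \ge t/2$, a fact I will use repeatedly to turn quantities evaluated near time $r(t)$ into terms of order $(\log t)/t$. Because $Q_u$ vanishes at the start of the cycle, on this event $Q_u(r(t)) = \sum_{l = r(t)-B_u(r(t))+1}^{r(t)} \bigl(A_u(l) - S_u(l)\bigr)$, exactly as in the proof of Lemma~\ref{lem:queue-diff-characterization}; and if $B_u(r(t))=0$ the claimed bound is trivial.

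Next I would split $A_u(l) - S_u(l) = \bigl(A_u(l) - R_{uk^*_u}(l)\bigr) + \bigl(R_{uk^*_u}(l) - S_u(l)\bigr)$ and bound the two sums separately. For the first, since $B_u(r(t))$ is random and correlated with the sample path, I cannot use a fixed-window estimate; instead I dominate $\sum_{l=r(t)-B_u(r(t))+1}^{r(t)}(A_u(l)-R_{uk^*_u}(l))$ by $\max_{1\le s\le r(t)} \sum_{l=r(t)-s+1}^{r(t)}(A_u(l)-R_{uk^*_u}(l))$ and invoke the maximal inequality of Lemma~\ref{lem:conc-arrival-capacity} (with $t_2 = r(t) \le t$), bounding it by $2\log t/\epsilon_u$ with probability at least $1 - 1/t^3$. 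For the second, the per-slot bound $R_{uk^*_u}(l) - S_u(l) \le \mathsf{E}(l) + \sum_{k\ne k^*_u}\mathsf{I}_{uk}(l)$ (service rates lie in $[0,1]$, and off the optimal server the difference is at most one), together with $B_u(r(t)) \le v'_u(t)$, gives a deterministic window: $\sum_{l=r(t)-B_u(r(t))+1}^{r(t)}(R_{uk^*_u}(l)-S_u(l)) \le \sum_{l=r(t)-v'_u(t)+1}^{r(t)}\bigl(\mathsf{E}(l) + \sum_{k\ne k^*_u}\mathsf{I}_{uk}(l)\bigr)$.

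It then remains to bound the explore count and the suboptimal-exploit count over $(r(t)-v'_u(t),\, r(t)]$. For the explores I would apply Lemma~\ref{lem:explore-ub} with $t_1 = r(t)-v'_u(t)$, $t_2 = r(t)$, after noting (mean value theorem, using $r(t)-v'_u(t)\ge t/2$ and $r(t)\le t$) that $\log^3 r(t) - \log^3(r(t)-v'_u(t)) \le 6\,v'_u(t)\log^2 t/t$; this yields $\sum \mathsf{E}(l) \le 5\log t + 30 K\, v'_u(t)\log^2 t/t$ with probability at least $1 - 1/t^4$. For the suboptimal exploits, since $w(t) = \tfrac{\epsilon_u}{6K}v'_u(t) \le t/12 \le r(t)-v'_u(t)$, the window is contained in $(w(t), t]$, so Lemma~\ref{lem:exploit-ub} forces $\sum_{l=r(t)-v'_u(t)+1}^{r(t)}\sum_{k\ne k^*_u}\mathsf{I}_{uk}(l) = 0$ uniformly in $u$ on an event of probability $1 - O(UK/t^3)$. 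Taking a union bound over these $O(1)$ failure events (total probability $O(UK/t^3)$) and summing the pieces gives $Q_u(r(t)) \le (2/\epsilon_u + 5)\log t + 30K\, v'_u(t)\log^2 t/t$, as claimed.

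I expect the main subtlety to be the first drift estimate: because the cycle length $B_u(r(t))$ is itself a functional of the arrival and service sample paths, one must pass to a supremum over window lengths before applying concentration, which is precisely the role of Lemma~\ref{lem:conc-arrival-capacity}; the remainder is bookkeeping, with the inequality $r(t) - v'_u(t) \ge t/2$ doing the work of converting the explore-count estimate into the stated $v'_u(t)\log^2 t/t$ term.
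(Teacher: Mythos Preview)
Your proposal is correct and follows essentially the same route as the paper: set $r(t)=t-v_u(t)$, condition on the high-probability events that $B_u(r(t))\le v'_u(t)$ (Lemma~\ref{lem:busy-period-ub1}), that the drift maximum is at most $2\log t/\epsilon_u$ (Lemma~\ref{lem:conc-arrival-capacity}), and that the explore/exploit counts over the window $(r(t)-v'_u(t),r(t)]$ are controlled by Lemmas~\ref{lem:explore-ub} and~\ref{lem:exploit-ub}; then combine to obtain the stated bound on $Q_u(r(t))$, using $r(t)-v'_u(t)\ge t/2$ to simplify the $\log^3$ difference. The only cosmetic difference is that you separate the exploit contribution (which vanishes on the good event) from the explore contribution, whereas the paper bounds their sum jointly; the resulting arithmetic and constants are identical.
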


\begin{proof}[Proof.]
Let $r(t) = t-v_u(t).$ Now, consider the events 
\begin{align}
B_u(r(t)) & \leq v'_u(t),	\label{ev:busy-period-ub}
\end{align}
\begin{align}
\sum_{l=r(t)-s+1}^{r(t)} A_u(l) - R_{uk^*_u}(l) & \leq \frac{2 \log t}{\epsilon_u}  \; \forall 1 \leq s \leq v'_u(t),	\label{ev:arrival-capacity-ub2}
\end{align}
\begin{align}
\sum_{l=r(t)-v'_u(t)+1}^{r(t)}\mathsf{E}(l) + \sum_{k\neq k^*_u} \mathsf{I}_{uk}(l) \leq 5\log t + 5K \left( \log^3 \left( r(t) \right) - \log^3\left( r(t)-v'_u(t) \right) \right).	\label{ev:subopt-schedule-ub2}
\end{align}

Given the above events, we have
\begin{align*}
Q_u(r(t)) & = \sum_{l = r(t)-B_u(r(t))+1}^{r(t)} A_u(l) - S(l)	\\
& \leq \sum_{l = r(t)-B_u(r(t))+1}^{r(t)}  A_u(l) - R_{uk^*_u}(l) +  \mathsf{E}(l) + \sum_{k\neq k^*} \mathsf{I}_{uk}(l)	\\
& \leq \left( \frac{2}{\epsilon_u} + 5 \right) \log t + 5K \left( \log^3 \left( r(t) \right) - \log^3\left( r(t)-v'_u(t) \right)\right) 	\\
& \leq \left( \frac{2}{\epsilon_u} + 5 \right) \log t + 15K \frac{v'_u(t) \log^2 t}{\left( r(t)-v'_u(t) \right)}	\\
& \leq \left( \frac{2}{\epsilon_u} + 5 \right) \log t + 30K \frac{v'_u(t) \log^2 t}{t},
\end{align*}
where the last inequality is true if $v_u(t) + v'_u(t) \leq t/2.$ 

Using Lemmas~\ref{lem:explore-ub}\ref{lem:explore-ub1}, \ref{lem:busy-period-ub1}, \ref{lem:conc-arrival-capacity}, $\forall t$ s.t.  $\frac{ w(t)}{\log t} \geq \frac{2}{\epsilon_u}$ and $v_u(t) + v'_u(t) \leq t/2$, we get
\begin{align*}
& \PP\left[ \left\lbrace Q_u(t-v_u(t)) >  \left( \frac{2}{\epsilon_u} + 5 \right) \log t + 30K \frac{v'_u(t) \log^2 t}{t} \right\rbrace \cap \mathcal{E}_{\ref{ev:no-suboptimal}} \cap \mathcal{E}_{\ref{ev:explore-ub1}} \right]	\\
& \leq \PP\left[ \left\lbrace B_u\left( t-v_u(t) \right) > v'_u(t) \right\rbrace \cap \mathcal{E}_{\ref{ev:no-suboptimal}} \cap \mathcal{E}_{\ref{ev:explore-ub1}} \right] + \PP\left[ \max_{1 \leq s \leq t_2} \left\lbrace \sum_{l=t_2-s+1}^{t_2} A_u(l) - R_{uk^*_u}(l) \right\rbrace \geq \frac{2 \log t}{\epsilon_u} \right]	\\
& \quad + \PP\left[ \sum_{l=r(t)-v'_u(t)+1}^{r(t)}\mathsf{E}(l) > 5\log t + 5K \left( \log^3 \left( r(t) \right) - \log^3\left( r(t)-v'_u(t) \right) \right) \right]	\\
& \leq  \frac{3}{t^3} + \frac{1}{t^4}.
\end{align*}
\end{proof}

\begin{lemma}
\label{lem:busy-period-ub2}
Let $v'_u(t) = \frac{6 K}{\epsilon_u} w(t)$ and $v_u(t) = \frac{24  \log t}{\epsilon_u^2} + \frac{60 K}{\epsilon_u} \frac{v'_u(t) \log^2 t}{t}.$ Then,
$$\PP\left[ \left\lbrace B_u(t) > v_u(t) \right\rbrace \cap \mathcal{E}_{\ref{ev:no-suboptimal}} \cap \mathcal{E}_{\ref{ev:explore-ub1}} \right]  \leq  \frac{4}{t^3} + \frac{2}{t^4}$$ 
 $\forall t$ s.t.  $\frac{ w(t)}{\log t} \geq \frac{2}{\epsilon_u}$ and $v_u(t) + v'_u(t) \leq t/2$.
\end{lemma}

\begin{proof}[Proof.]
Let $r(t) = t-v_u(t).$ As in Lemma~\ref{lem:busy-period-ub1}, consider the events
\begin{align}
Q_u(r(t)) & \leq \left( \frac{2}{\epsilon_u} + 5 \right) \log t + 30K \frac{v'_u(t) \log^2 t}{t},	\label{ev:queue-small-ub2}
\end{align}
\begin{align}
\sum_{l=r(t)+1}^{t} A_u(l) -  R_{uk^*_u}(l) & \leq -\frac{\epsilon_u}{2} v_u(t),	\label{ev:arrivals-capacity-ub3}
\end{align}
\begin{align}
\sum_{l=r(t)+1}^{t}\mathsf{E}(l) + \sum_{k\neq k^*_u} \mathsf{I}_{uk}(l) & \leq  5\log t + 5K \left( \log^3 t - \log^3\left( r(t) \right) \right).		\label{ev:subopt-schedule-ub3} 
\end{align}
The definition of $v_u(t)$ and events~\eqref{ev:queue-small-ub2}-\eqref{ev:subopt-schedule-ub3} imply that
\begin{align*}
Q_u(r(t)) + \sum_{l=r(t)+1}^{t} A_u(l) & \leq \sum_{l=r(t)+1}^{t} R_{uk^*_u}(l) -  \sum_{l=r(t)+1}^{t}\mathsf{E}(l) + \sum_{k\neq k^*_u} \mathsf{I}_{uk}(l)	\\ 
& \leq \sum_{l=r(t)+1}^{t} S_u(l),
\end{align*}
which further implies that $Q(l) = 0$ for some $l \in \left[ r(t)+ 1, t \right]$ and therefore $B_u(t) \leq v_u(t).$
We can again show that each of the events \eqref{ev:queue-small-ub2}-\eqref{ev:subopt-schedule-ub3} occurs with high probability. Particularly, we can bound the probability of event~\eqref{ev:arrivals-capacity-ub3} in the same way as event~\eqref{ev:arrival-capacity-ub2} in Lemma~\ref{lem:busy-period-ub1} to show that it occurs with probability at least 1-$1/t^3.$ Combining this with Lemmas~\ref{lem:explore-ub}\ref{lem:explore-ub1} and \ref{lem:ub-queue2},  $\forall t$ s.t.  $\frac{ w(t)}{\log t} \geq \frac{2}{\epsilon_u}$ and $v_u(t) + v'_u(t) \leq t/2$, we get 
\begin{align*}
& \PP\left[ \left\lbrace B_u(t) > v_u(t) \right\rbrace \cap \mathcal{E}_{\ref{ev:no-suboptimal}} \cap \mathcal{E}_{\ref{ev:explore-ub1}} \right]	\\
& \leq \PP\left[ \left\lbrace Q_u(r(t)) >  \left( \frac{2}{\epsilon_u} + 5 \right) \log t + 30K \frac{v'_u(t) \log^2 t}{t} \right\rbrace \cap \mathcal{E}_{\ref{ev:no-suboptimal}} \cap \mathcal{E}_{\ref{ev:explore-ub1}} \right]	\\
& \quad + \PP\left[ \sum_{l=r(t)+1}^{t} A_u(l) -  R_{uk^*_u}(l) > -\frac{\epsilon_u}{2} v_u(t) \right]	\\
& \quad + \PP\left[ \sum_{l=r(t)+1}^{t}\mathsf{E}(l) > 5\log t + 5K \left( \log^3 t - \log^3\left( r(t) \right) \right) \right]	\\
& \leq  \frac{4}{t^3} + \frac{2}{t^4}.
\end{align*}
\end{proof}

\begin{proof}[Proof of Theorem~\ref{thm:ucb-best-match-multiq}.]
The proof is based on two main ideas: one is that the regenerative cycle length is not very large, and the other is that the algorithm has correctly identified the optimal matching in late stages.  We combine Lemmas~\ref{lem:exploit-ub} and \ref{lem:busy-period-ub2} to bound the regret at any time $t$ s.t. $t \geq 5.8 \times 10^3$,  $\frac{ w(t)}{\log t} \geq \frac{2}{\epsilon_u}$ and $v_u(t) + v'_u(t) \leq t/2$:
\begin{align}
\Psi_u(t) & = \EE\left[ Q_u(t) - Q^*_u(t) \right] 	\nonumber 	\\
& \leq \EE\left[ (Q_u(t) - Q^*_u(t)) \ind{B_u(t) \leq v_u(t)} + (Q_u(t) - Q^*_u(t)) \ind{B_u(t) > v_u(t)}  \Bigg\vert \mathcal{E}_{\ref{ev:no-suboptimal}} \right] \PP\left[  \mathcal{E}_{\ref{ev:no-suboptimal}} \right]	\nonumber 	\\
& \quad + \EE\left[ Q_u(t) - Q^*_u(t) \Bigg\vert  \mathcal{E}_{\ref{ev:no-suboptimal}}^c \right] \PP\left[  \mathcal{E}_{\ref{ev:no-suboptimal}}^c \right].	\nonumber
\end{align}
Now, the first term on the r.h.s. of the above inequality can be bounded as follows:
\begin{align*}
\EE\left[ (Q_u(t) - Q^*_u(t)) \ind{B_u(t) \leq v_u(t)} \Bigg\vert \mathcal{E}_{\ref{ev:no-suboptimal}} \right] \PP\left[  \mathcal{E}_{\ref{ev:no-suboptimal}} \right]
& \leq  \EE\left[ \sum_{l=t-v_u(t)+1}^{t} \mathsf{E}(l) \right]	\\
& \leq  K\left( \log^3(t) - \log^3(t-v_u(t)) \right)	\\
& \leq 3 K \log^2 t \log \left( 1 + \frac{v_u(t)}{t-v_u(t)} \right)	\\
& \leq 6K\frac{v_u(t) \log^2 t }{t}.
\end{align*}
To bound the rest of the terms in the inequality, we have, for all  $t \geq 5.8 \times 10^3$, ((for Q-UCB,))
\begin{align*}
& \leq \EE\left[ (Q_u(t) - Q^*_u(t)) \ind{B_u(t) > v_u(t)}  \Bigg\vert \mathcal{E}_{\ref{ev:no-suboptimal}} \right] \PP\left[  \mathcal{E}_{\ref{ev:no-suboptimal}} \right] + \EE\left[ Q_u(t) - Q^*_u(t) \Bigg\vert  \mathcal{E}_{\ref{ev:no-suboptimal}}^c \right] \PP\left[  \mathcal{E}_{\ref{ev:no-suboptimal}}^c \right] 	\\
& \leq t \left( \PP\left[ \left\lbrace B_u(t) > v_u(t) \right\rbrace \cap \mathcal{E}_{\ref{ev:no-suboptimal}} \cap \mathcal{E}_{\ref{ev:explore-ub1}} \right] + \PP\left[ \mathcal{E}_{\ref{ev:no-suboptimal}}^c \right] + \PP\left[ \mathcal{E}_{\ref{ev:explore-ub1}}^c \right] \right)	\\
& \leq  t \left( \frac{4}{t^3} + \frac{2}{t^4} + \frac{UK}{6 t^3} + \frac{1}{t^{2K}} \right)	\numberthis	\label{eq:high-prob}	\\
& \leq \frac{24.004 + UK}{6 t^2}.
\end{align*}
Here, inequality~\eqref{eq:high-prob} is obtained using Lemmas~\ref{lem:explore-ub}\ref{lem:explore-ub2}, \ref{lem:exploit-ub} and \ref{lem:busy-period-ub2}. 

The result then follows by combining the above two bounds.
\end{proof}

\begin{proof}[Proof of Corollary~\ref{cor:thm:ucb-best-match-multiq}.]

If $$\log t \geq \max \left\lbrace \frac{4}{\Delta^2 (1 - 1/\beta)^3}, \sqrt{2} \left( \log \frac{2}{\epsilon_u} \right)^{1.5},  \beta \frac{24 K}{\epsilon_u}, \beta \left( \log \log t + \log(15 K^2) \right), \frac{\beta}{\beta - 1} \log\left( \frac{13.2}{K^2 \epsilon_u^2} \right) \right\rbrace,$$ then we can obtain the following.
\begin{enumerate}[label=(\roman*)]
\item Using the fact that  $\log t \geq \sqrt{2} \left( \log \log t \right)^{3/2} \; \forall t \geq 3$ and that  $\log t \geq \sqrt{2} \left( \log \frac{2}{\epsilon_u} \right)^{3/2}$, we have
$$2\log t \geq \sqrt{2} \left( \log \log t \right)^{3/2} + \sqrt{2} \left( \log \frac{2}{\epsilon_u} \right)^{3/2} \geq \left( \log \log t + \log \frac{2}{\epsilon_u} \right)^{3/2},$$ where the last inequality follows from the fact that $\sqrt{2} (a^{3/2} + b^{3/2}) \geq (a + b)^{3/2}$. Combining this with \eqref{eq:lb-w(t)}, we get 
$$\log(w(t)) \geq \left( \frac{2 \log t}{\Delta} \right)^{2/3} \geq \log \log t + \log \frac{2}{\epsilon_u},$$ which gives us that $\frac{w(t)}{\log t} \geq \frac{2}{\epsilon_u}.$
\item $\log t \geq \max \beta \left\lbrace \frac{24 K}{\epsilon_u}, \log \log t + \log(15 K^2) \right\rbrace$ gives us that $\frac{t}{ w(t)} = t^{\frac{1}{\beta}} \geq 15 K^2 \log t$ and $\frac{t}{ w(t)} \geq \frac{24 K}{\epsilon_u}$.
\item Moreover, if $\log t \geq \frac{\beta}{\beta - 1} \log \left( \frac{13.2}{K^2 \epsilon_u^2} \right)$, then $$w(t) = t^{\left( 1 - \frac{1}{\beta} \right)} \geq \frac{13.2}{K^2 \epsilon_u^2},$$ which when combined with  $t \geq 15 K^2 w(t) \log t$ yields
 $$\frac{t}{\log t} \geq 15 K^2 w(t) \geq \frac{198}{\epsilon_u^2}.$$
\end{enumerate}
Therefore, the lower bound on $t$ in Corollory~\ref{cor:thm:ucb-best-match-multiq} yields the following inequalities $\frac{ w(t)}{\log t} \geq \frac{2}{\epsilon_u}$, $\frac{t}{ w(t)} \geq \max\left\lbrace  \frac{24 K}{\epsilon_u}, 15 K^2 \log t \right\rbrace$, and $\frac{t}{\log t} \geq \frac{198}{\epsilon_u^2}$. We now show that these conditions are sufficient for the result in Theorem~\ref{thm:ucb-best-match-multiq}.
\begin{enumerate}[label=(\roman*)]
\item $\frac{t}{ w(t) } \geq \frac{24 K}{\epsilon_u}$ implies that $v'_u(t) \leq \frac{t}{4}$ ,
\item $\frac{t}{ w(t)} \geq 15 K^2 \log t$ implies  that $\frac{24}{\epsilon_u^2} \log t \geq \frac{60 K}{\epsilon_u} \frac{v'_u(t) \log^2 t}{t}$, and therefore $v_u(t) \leq \frac{48}{\epsilon_u^2} \log t$
\item $\frac{t}{\log t} \geq \frac{198}{\epsilon_u^2}$ implies that  $v_u(t) \leq \frac{t}{4}$.
\end{enumerate}
These inequalities when applied to Theorem~\ref{thm:ucb-best-match-multiq} gives the required result, i.e.,
\begin{align*}
\Psi_u(t) & \leq 6K\frac{v_u(t) \log^2 t}{t} + \frac{24.004 + UK}{6 t^2}	\\
 & \leq \frac{288 K \log^3 t }{\epsilon_u^2 t} + \frac{24.004 + UK}{6 t^2}	\\
 & \leq \frac{289 K \log^3 t }{\epsilon_u^2 t}. 
\end{align*} 
\end{proof}

\begin{proof}[Proof of Corollary~\ref{cor:cor:thm:ucb-best-match-multiq}.]
Let $$B = \min \left\lbrace 1 - \frac{1}{\beta} - \frac{1}{\gamma}, \frac{1}{\beta} - \frac{1}{\delta}, 1 - \frac{2}{\gamma} \right\rbrace,$$ and let $C_{\ref{const:ucb-time-lb}}$ be such that 
$$C_{\ref{const:ucb-time-lb}} \geq \max \left\lbrace \tau_2, \exp \left( \frac{4}{\Delta^2 (1 - 1/\beta)^3} \right), 15^{\max\{\gamma, \delta}\} \right\rbrace,$$ 
and $$t^{B} \geq \log t \quad \forall t \geq C_{\ref{const:ucb-time-lb}}.$$

Then $\forall t \geq C_{\ref{const:ucb-time-lb}} \max \left\lbrace \left( \frac{1}{\epsilon_u} \right)^{\gamma},  \left( \frac{K}{\epsilon_u} \right)^{\beta} ,  K^{2\delta} \right\rbrace$, we have the following:
\begin{enumerate}[label=(\roman*)]
\item  Since $B \leq 1 - \frac{1}{\beta} - \frac{1}{\gamma}$ and $C_{\ref{const:ucb-time-lb}} > 2^{\gamma}$,
$$\frac{ w(t)}{\log t} = \frac{t^{1 - \frac{1}{\beta}}}{\log t} \geq \frac{t^{\frac{1}{\gamma}+B}}{\log t} \geq t^{\frac{1}{\gamma}} \geq \frac{2}{\epsilon_u}.$$
\item Since $B \leq \frac{1}{\beta} - \frac{1}{\delta}$ and $C_{\ref{const:ucb-time-lb}} > 15^{\delta}$,
$$\frac{t}{ w(t)} = t^{\frac{1}{\beta}} \geq t^{\frac{1}{\delta} + B} \geq 15 K^2 \log t.$$
Similarly, we can also show that $\frac{t}{ w(t) } \geq \frac{24 K}{\epsilon_u}$.
\item  Since $B \leq 1 - \frac{2}{\gamma}$ and $C_{\ref{const:ucb-time-lb}} \geq 15^{\gamma}$,
$$\frac{t}{\log t} \geq t^{1-B} \geq t^{\frac{2}{\gamma}} \geq \frac{C_{\ref{const:ucb-time-lb}}^{\frac{2}{\gamma}}}{\epsilon_u^2} > \frac{198}{\epsilon_u^2}.$$
\end{enumerate}
As shown in the proof of Corollary~\ref{cor:thm:ucb-best-match-multiq}, the above conditions are sufficient for the upper bound.
\end{proof}

\section{Proof of Lower Bound for $\alpha$-Consistent Policies}
\label{sec:proofs-lb}
In order to prove Theorems~\ref{thm:lb-late-multiq} and \ref{thm:lb-early-multiq}, we use techniques from existing work in the MAB literature along with some new lower bounding ideas specific to queueing systems. Specifically, we use lower bounds for $\alpha$-consistent policies on the expected number of times a sub-optimal server is scheduled. This lower bound, shown  (in Lemma~\ref{lem:banditlb}) specifically for the problem of scheduling a unique optimal matching, is similar in style to the traditional bandit lower bound by \cite{lai1985asymptotically} but holds in the non-asymptotic setting. Also, as opposed to the traditional change of measure technique used in \cite{lai1985asymptotically}, the proof technique is similar to those used in more recent papers like \cite{bubeck2013bounded, perchet2015batched, combes2015bandits} and uses results from hypothesis testing (Lemma~\ref{lem:tsybakov}). 

\begin{lemma}[\cite{tsybakov2008introduction}]
\label{lem:tsybakov}
Consider two probability measures $P$ and $Q$, both absolutely continuous with respect to a given measure. Then for any event $\mathcal{A}$ we have:
\begin{align*}
P(\mathcal{A}) + Q(\mathcal{A}^c) \geq \frac{1}{2} \exp \{ -\min (\mathrm{KL}(P||Q),\mathrm{KL}(Q||P)) \}.
\end{align*}
\end{lemma}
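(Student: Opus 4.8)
The plan is to obtain Lemma~\ref{lem:tsybakov} from the one-sided Bretagnolle--Huber inequality together with a symmetry observation. First I would note that the quantity $P(\mathcal{A}) + Q(\mathcal{A}^c)$ is invariant under the simultaneous relabeling $P \leftrightarrow Q$, $\mathcal{A} \leftrightarrow \mathcal{A}^c$: writing $\mathcal{B} = \mathcal{A}^c$ gives $P(\mathcal{A}) + Q(\mathcal{A}^c) = Q(\mathcal{B}) + P(\mathcal{B}^c)$. Hence it suffices to establish the single bound
\[
P(\mathcal{A}) + Q(\mathcal{A}^c) \ \ge\ \tfrac12 \exp\{-\mathrm{KL}(P\|Q)\}
\]
for every event $\mathcal{A}$ and every pair $(P,Q)$; applying this once to $(P,Q,\mathcal{A})$ and once to $(Q,P,\mathcal{A}^c)$ and keeping the larger right-hand side yields the stated inequality with the minimum. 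In particular the cases $\mathrm{KL}(P\|Q)=\infty$ or $\mathrm{KL}(Q\|P)=\infty$ become automatic.

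For the single bound, fix a common dominating measure $\nu$ (for instance $\nu=(P+Q)/2$) and set $p=dP/d\nu$, $q=dQ/d\nu$. The first step is the elementary pointwise inequality: since $\int_{\mathcal{A}} p\,d\nu \ge \int_{\mathcal{A}} \min(p,q)\,d\nu$ and $\int_{\mathcal{A}^c} q\,d\nu \ge \int_{\mathcal{A}^c} \min(p,q)\,d\nu$,
\[
P(\mathcal{A}) + Q(\mathcal{A}^c) \ \ge\ \int \min(p,q)\,d\nu .
\]
This reduces the problem to lower bounding the event-independent quantity $\int \min(p,q)\,d\nu$.

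The second step bounds $\int \min(p,q)\,d\nu$ below by $\tfrac12\exp\{-\mathrm{KL}(P\|Q)\}$. By Cauchy--Schwarz, using $pq = \min(p,q)\max(p,q)$,
\[
\Big(\int \sqrt{pq}\,d\nu\Big)^2 \ \le\ \Big(\int \min(p,q)\,d\nu\Big)\Big(\int \max(p,q)\,d\nu\Big),
\]
and since $\min(p,q)+\max(p,q)=p+q$ we get $\int \max(p,q)\,d\nu = 2 - \int\min(p,q)\,d\nu \le 2$, hence $\int \min(p,q)\,d\nu \ge \tfrac12\big(\int \sqrt{pq}\,d\nu\big)^2$. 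Finally, writing $\int \sqrt{pq}\,d\nu = \mathbb{E}_P\big[e^{\frac12\log(q/p)}\big]$ and applying Jensen's inequality to the convex function $x\mapsto e^x$ gives $\int \sqrt{pq}\,d\nu \ge \exp\{\tfrac12\mathbb{E}_P[\log(q/p)]\} = \exp\{-\tfrac12\mathrm{KL}(P\|Q)\}$. Combining the last two displays yields the single bound, and with the symmetry step above this proves the lemma.

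There is no serious obstacle; the result is classical (it is essentially the Bretagnolle--Huber inequality). The only points needing care are checking the degenerate cases (if $P\not\ll Q$ then $\mathrm{KL}(P\|Q)=\infty$ and the bound is trivial, and otherwise the integrand conventions make everything finite on the support of $P$), and applying Cauchy--Schwarz to the correct factorization $pq=\min(p,q)\max(p,q)$. The real content of the argument is this Cauchy--Schwarz/Jensen combination, which converts the Hellinger-affinity quantity $\int\sqrt{pq}\,d\nu$ into a KL-divergence statement.
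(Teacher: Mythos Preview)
Your argument is correct and is the standard Bretagnolle--Huber proof: bound $P(\mathcal{A})+Q(\mathcal{A}^c)$ below by $\int\min(p,q)\,d\nu$, relate this to the Hellinger affinity $\int\sqrt{pq}\,d\nu$ via Cauchy--Schwarz applied to the factorization $pq=\min(p,q)\max(p,q)$, and then pass to $\mathrm{KL}$ via Jensen. The symmetry observation at the start cleanly handles the minimum.

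The paper takes a genuinely different route. It sets $p=P(\mathcal{A})$, $q=Q(\mathcal{A}^c)$, invokes the data-processing inequality $\mathrm{KL}(P\|Q)\ge \mathrm{KL}\big(\mathrm{Bern}(p)\,\|\,\mathrm{Bern}(1-q)\big)$ to reduce to a purely two-point problem, and then verifies the elementary inequality
\[
p+q \ \ge\ \tfrac12\left(\frac{1-q}{p}\right)^p\left(\frac{q}{1-p}\right)^{1-p}
\]
by a short weighted-AM--GM/Cauchy--Schwarz computation on the two real numbers $p,q$. So the paper's key lemma is data processing plus explicit Bernoulli algebra, whereas yours goes through the Hellinger affinity on the full space. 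Your approach is more conceptual and yields the intermediate bounds $P(\mathcal{A})+Q(\mathcal{A}^c)\ge \int\min(p,q)\,d\nu\ge\tfrac12(\int\sqrt{pq}\,d\nu)^2$, which are of independent interest; the paper's approach is lighter on measure theory once data processing is granted, since after the reduction everything is finite-dimensional algebra. Both are standard and both are complete.
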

%

\begin{lemma}
\label{lem:banditlb}
For any problem instance $(\pmb{\lambda}, \pmb{\mu })$ and any $\alpha$-consistent policy, there exist constants $\tau$ and $C$ s.t. for any $u \in [U]$, $k \neq k_u^*$ and $t > \tau$,
\begin{align*}
\mathbb{E} \left[ T_{uk}(t+1)\right] + \sum_{u' \neq u} \mathds{1}\left\lbrace k^*_{u'} = k \right\rbrace \mathbb{E}\left [T_{u'k^*_u} (t+1)\right] \geq \frac{1}{\mathrm{KL}\left(\mu_{min}, \frac{\mu^*+1}{2} \right)} \left(  (1 - \alpha) \log t - \log(4KC)  \right) .
\end{align*}
\end{lemma}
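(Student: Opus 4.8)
The plan is to run a change-of-measure (hypothesis-testing) argument in the spirit of the classical Lai--Robbins lower bound, but tailored to the queueing model's requirement that every admissible instance have a \emph{unique optimal matching}; this last constraint is exactly what forces the companion term $\sum_{u'\neq u}\mathds 1\{k^*_{u'}=k\}\,\mathbb E[T_{u'k^*_u}(t)]$ into the bound. Fix $u\in[U]$ and a sub-optimal server $k\neq k^*_u$, and let $P$ denote the law of the full trajectory (arrivals, scheduled matchings, and observed services) up to time $t$ produced by the given $\alpha$-consistent policy on the instance $(\pmb\lambda,\pmb\mu)$.

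First I would construct an alternative instance $(\pmb\lambda,\pmb\mu')$ on which server $k$ is the optimal server for queue $u$. There are two cases. If $k$ is not the optimal server of any queue, I raise only $\mu_{uk}$ to $\mu'_{uk}=\tfrac{\mu_{max}+1}{2}$; the new optimal matching is the old one with $u\leftrightarrow k$ replacing $u\leftrightarrow k^*_u$. If instead $k=k^*_{u'}$ for the (unique, by Assumption~\ref{ass:best-match}) queue $u'\neq u$, I additionally raise $\mu_{u'k^*_u}$ to $\tfrac{\mu_{max}+1}{2}$, so that the new optimal matching \emph{swaps} $(u,u')$ onto the servers $(k,k^*_u)$. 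In both cases one checks that $(\pmb\lambda,\pmb\mu')$ again has a unique optimal matching and is stable (every relevant rate is now $\tfrac{\mu_{max}+1}{2}>\mu_{max}\ge\mu^*_{u''}>\lambda_{u''}$), so the policy's $\alpha$-consistency also applies on it; I would then fix $C$ and $\tau$ to be an $\alpha$-consistency constant and threshold valid simultaneously for $(\pmb\lambda,\pmb\mu)$ and for all the (finitely many) alternative instances arising this way, and let $Q$ be the trajectory law on the alternative instance built for the chosen $(u,k)$.

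Next, because $P$ and $Q$ differ only in the Bernoulli service law of the pair $(u,k)$ and — in the swap case — of $(u',k^*_u)$, while the arrivals are unchanged, the standard divergence decomposition for bandit feedback gives
\begin{align*}
\mathrm{KL}(P\|Q)=\mathbb E_P[T_{uk}(t)]\,\mathrm{KL}\!\left(\mu_{uk},\tfrac{\mu_{max}+1}{2}\right)+\sum_{u'\neq u}\mathds 1\{k^*_{u'}=k\}\,\mathbb E_P[T_{u'k^*_u}(t)]\,\mathrm{KL}\!\left(\mu_{u'k^*_u},\tfrac{\mu_{max}+1}{2}\right).
\end{align*}
Since $p\mapsto\mathrm{KL}(p,\tfrac{\mu_{max}+1}{2})$ is decreasing on $[\mu_{min},\mu_{max}]$, each factor is at most $\mathrm{KL}(\mu_{min},\tfrac{\mu_{max}+1}{2})$, so $\mathrm{KL}(P\|Q)\le N_{uk}(t)\,\mathrm{KL}(\mu_{min},\tfrac{\mu_{max}+1}{2})$, where $N_{uk}(t)$ denotes the left-hand side of the lemma. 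I would then apply Lemma~\ref{lem:tsybakov} with the event $\mathcal A=\{T_{uk}(t)\ge t/2\}$: under $P$ the server $k$ is sub-optimal for $u$, so Markov's inequality and $\alpha$-consistency give $P(\mathcal A)\le 2\mathbb E_P[T_{uk}(t)]/t\le 2Ct^{\alpha-1}$; under $Q$ the server $k$ is optimal for $u$, and since $\sum_{k'}T_{uk'}(t)=t$ we get $Q(\mathcal A^c)=Q\big(\sum_{k'\neq k}T_{uk'}(t)>t/2\big)\le 2(K-1)Ct^{\alpha-1}$. Hence $2KCt^{\alpha-1}\ge P(\mathcal A)+Q(\mathcal A^c)\ge\tfrac12\exp(-\mathrm{KL}(P\|Q))\ge\tfrac12\exp\!\big(-N_{uk}(t)\,\mathrm{KL}(\mu_{min},\tfrac{\mu_{max}+1}{2})\big)$ (bounding the minimum in Lemma~\ref{lem:tsybakov} by $\mathrm{KL}(P\|Q)$), and taking logarithms and rearranging yields exactly the claimed bound for every $t>\tau$.

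The step I expect to be the main obstacle is the alternative-instance construction: the perturbation must make $k$ optimal for $u$ \emph{without} violating the uniqueness-of-optimal-matching and stability structure against which $\alpha$-consistency is defined, and the only way to do this when $k$ already serves as some other queue's optimal server is the swap, which is precisely what introduces $\mathbb E[T_{u'k^*_u}(t)]$ into both the divergence decomposition and the final estimate. The remaining items — verifying that each alternative instance satisfies Assumption~\ref{ass:best-match} and the stability assumption, and the finite bookkeeping needed to use a single pair $(C,\tau)$ across the original instance and all alternatives — are routine.
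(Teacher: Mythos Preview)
Your proposal is correct and follows essentially the same approach as the paper's own proof: the same alternative instances (raising $\mu_{uk}$ to $\beta=\tfrac{\mu_{max}+1}{2}$, with the swap on $(u',k^*_u)$ when $k=k^*_{u'}$), the same event $\mathcal A=\{T_{uk}(t)\ge t/2\}$, the same Markov/$\alpha$-consistency bounds on $P(\mathcal A)$ and $Q(\mathcal A^c)$, Lemma~\ref{lem:tsybakov}, and the divergence decomposition. If anything, you are slightly more explicit than the paper about why the alternative instance retains a unique optimal matching and stability, about the monotonicity $\mathrm{KL}(p,\beta)\le\mathrm{KL}(\mu_{min},\beta)$, and about choosing a single $(C,\tau)$ valid across the finitely many alternatives.
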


\begin{proof}[Proof.]
Without loss of generality, let the optimal servers for the $U$ queues be denoted by the first $U$ indices. In other words, a server $k > U$ is not an optimal server for any queue, i.e., for any $u' \in [U]$, $K \geq k > U$, $\mathds{1}\left\lbrace k^*_{u'} = k \right\rbrace = 0$. Also, let $\beta = \frac{\mu^*+1}{2}$.  

We will first consider the case $k \leq U$. For a fixed user $u$ and server $k \leq U$, let $u'$ be the queue that has $k$ as the best server, i.e., $k_{u'}^* = k$. Now consider the two problem instances $(\pmb{\lambda}, \pmb{\mu })$ and $(\pmb{\lambda}, \pmb{\hat{\mu}})$, where $\pmb{\hat{\mu}}$ is the same as $\pmb{\mu }$ except for the two entries corresponding to indices $(u,k), (u', k^*_u)$ replaced by $\beta$. Therefore, for the problem instance $(\pmb{\lambda}, \pmb{\hat{\mu}})$, the best servers are swapped for queues $u$ and $u'$ 
 and remain the same for all the other queues. Let $\PP ^t _{\pmb{\mu }}$ and $\PP ^t _{\pmb{\hat{\mu}}}$ be the distributions corresponding to the arrivals, chosen servers and service obtained in the first $t$ plays for the two instances under a fixed $\alpha$-consistent policy. Recall that $T_{uk}(t+1) = \sum_ {s = 1} ^t \mathds{1}\lbrace \kappa_u(s) = k \rbrace \; \forall u \in [U], k \in [K]$. Define the event $ \mathcal{A} =  \{ T_{uk}(t+1) > t/2 \} $. By the definition of $\alpha$-consistency there exists a fixed integer $\tau$ and a fixed constant $C$ such that for all $t > \tau$ we have, 
\begin{align*}
\mathbb{E}^t _{\pmb{\mu }} \left[ \sum_ {s = 1} ^t \mathds{1}\lbrace \kappa_u(s) = k\rbrace\right] &\leq C t^{\alpha} \\
\mathbb{E}^t _{\pmb{\hat{\mu}}} \left[ \sum_ {s = 1} ^t \mathds{1}\lbrace\kappa_u(s) = k'\rbrace\right] &\leq C t^{\alpha} \mbox{  , } \forall k' \neq k.
\end{align*}
A simple application of Markov's inequality yields
\begin{align*}
 \PP^t_{\pmb{\mu }}(\mathcal{A}) &\leq \frac{2C}{t^{1-\alpha}} \\
 \PP^t_{\pmb{\hat{\mu}}} (\mathcal{A}^c) &\leq \frac{2C(K-1)}{t^{1-\alpha}}. 
\end{align*}

We can now use Lemma~\ref{lem:tsybakov} to conclude that
\begin{equation}
\mathrm{KL}(\PP ^t _{\pmb{\mu }}||\PP ^t _{\pmb{\hat{\mu}}} ) \geq (1 - \alpha) \log t - \log(4KC). \label{eq:lbmain}
\end{equation} 
It is, therefore, sufficient to show that
\begin{align*}
\mathrm{KL}\left( \PP ^t _{\pmb{\mu }} || \PP ^t _{\pmb{\hat{\mu}}} \right) = \mathrm{KL}\left(\mu_{uk}, \beta \right)\mathbb{E}^t_{\pmb{\mu }} [T_{uk} (t+1)] + \mathrm{KL}\left(\mu_{u'k^*_u}, \beta \right)\mathbb{E}^t_{\pmb{\mu }} [T_{u'k^*_u} (t+1)].
\end{align*}
For the sake of brevity we write the scheduling sequence in the first $t$ time-slots $\{\pmb{\kappa}(1),\pmb{\kappa}(2),...,\pmb{\kappa}(t)\}$ as $\pmb{\kappa}^{(t)}$, and similarly we define $\mathbf{A}^{(t)}$ as the number of arrivals to the queue and $\mathbf{S}^{(t)}$ as the service offered by the scheduled servers in the first $t$ time-slots. Let $\mathbf{Z}^{(t)} = (\pmb{\kappa}^{(t)}, \mathbf{A}^{(t)}, \mathbf{S}^{(t)})$. 
The $\mathrm{KL}$-divergence term can now be written as
 \begin{align*}
\mathrm{KL}(\PP ^t _{\pmb{\mu }}||\PP ^t _{\pmb{\hat{\mu}}} ) = \mathrm{KL}(\PP^t _{\pmb{\mu}}(\mathbf{Z}^{(t)} ) || \PP^t  _{\pmb{\hat{\mu}}}(\mathbf{Z}^{(t)} )).
 \end{align*}
 We can apply the chain rule of divergence to conclude that
 \begin{align*}
\mathrm{KL}(\PP^t _{\pmb{\mu }}(\mathbf{Z}^{(t)} ) || \PP^t  _{\pmb{\hat{\mu}}}(\mathbf{Z}^{(t)} )) & = \mathrm{KL}(\PP^t _{\pmb{\mu }}(\mathbf{Z}^{(t-1)} ) || \PP^t  _{\pmb{\hat{\mu}}}(\mathbf{Z}^{(t-1)} ) ) \\
  & \quad +  \mathrm{KL}(\PP^t _{\pmb{\mu }}(\pmb{\kappa}(t) \given \mathbf{Z}^{(t-1)} ) || \PP^t  _{\pmb{\hat{\mu}}}(\pmb{\kappa}(t) \given \mathbf{Z}^{(t-1)} ) ) \\
  & \quad + \mathbb{E}^t _{\pmb{\mu }} \left[ \mathds{1}\lbrace \kappa_u(t) = k \rbrace \mathrm{KL}\left(\mu_{uk}, \beta \right) + \mathds{1}\lbrace \kappa_{u'}(t) = k^*_u \rbrace \mathrm{KL}\left(\mu_{u'k^*_u}, \beta \right)\right].
 \end{align*}
 We can apply this iteratively to obtain
 \begin{align*}
  \mathrm{KL}(\PP ^{t} _{\pmb{\mu }} || \PP ^{t} _{\pmb{\hat{\mu}}})   & = \sum _{s = 1}^{t} \mathbb{E}^t_ {\pmb{\mu }} \left[  \mathds{1}\lbrace \kappa_u(s) = k \rbrace \mathrm{KL}\left(\mu_{uk}, \beta \right) \right] 	\\
    &  \quad +  \sum _{s = 1}^{t} \mathbb{E}^t _{\pmb{\mu }} \left[  \mathds{1}\lbrace \kappa_{u'}(s) = k^*_u \rbrace \mathrm{KL}\left(\mu_{u'k^*_u}, \beta \right)\right] \\
  & \quad + \sum _{l = 1} ^{t} \mathrm{KL}(\PP^t _{\pmb{\mu }}(\pmb{\kappa}(l) \given \mathbf{Z}^{(l-1)}) || \PP^t  _{\pmb{\hat{\mu}}}(\pmb{\kappa}(l) \given \mathbf{Z}^{(l-1)}) ) \numberthis \label{eq:tsybakov}
 \end{align*}
Note that the second summation in \eqref{eq:tsybakov} is zero, as over a sample path the policy pulls the same servers irrespective of the parameters. Therefore, we obtain
\begin{equation*}
\mathrm{KL}(\PP ^{t} _{\pmb{\mu }} || \PP ^{t} _{\pmb{\hat{\mu}}})  = \mathrm{KL}\left(\mu_{uk}, \beta \right)\mathbb{E}^t_{\pmb{\mu }} [T_{uk} (t+1)] + \mathrm{KL}\left(\mu_{u'k^*_u}, \beta \right)\mathbb{E}^t_{\pmb{\mu }} [T_{u'k^*_u} (t+1)] ,
\end{equation*}
which can be substituted in \eqref{eq:lbmain} to obtain the required result for $K \leq U$.

Now, consider the case $k > U$, where  $\sum_{u \in U}  \mathds{1}\left\lbrace k^*_{u} = k \right\rbrace = 0$. We again compare the two problem instances  $(\pmb{\lambda}, \pmb{\mu })$ and $(\pmb{\lambda}, \pmb{\hat{\mu}})$, where $\pmb{\hat{\mu}}$ is the same as $\pmb{\mu }$ except for the entry corresponding to index $(u,k)$ replaced by $\beta$. Therefore, for the problem instance $(\pmb{\lambda}, \pmb{\hat{\mu}})$, the best server for user $u$ is server $k$ while the best servers for all other queues remain the same. We can again use the same technique as before to obtain
\begin{equation*}
\mathrm{KL}(\PP ^{t} _{\pmb{\mu }} || \PP ^{t} _{\pmb{\hat{\mu}}})  = \mathrm{KL}\left(\mu_{uk}, \beta \right)\mathbb{E}^t_{\pmb{\mu }} [T_{uk} (t+1)],
\end{equation*}
which, along with \eqref{eq:lbmain}, gives the required result for $K > U$. 
\end{proof}
As a corollary of the above result, we now derive lower bound on the total expected number of sub-optimal schedules summed across all queues. In addition, we also show, for each individual queue, a lower bound for those servers which are sub-optimal for all the queues. As in the proof of Lemma~\ref{lem:banditlb}, we assume without loss of generality that the first $U$ indices denote the optimal servers for the $U$ queues. 
\begin{corollary}
\label{cor:lem:banditlb}
For any problem instance $(\pmb{\lambda}, \pmb{\mu })$ and any $\alpha$-consistent policy, there exist constants $\tau$ and $C$ s.t. for any $t > \tau$,
\begin{enumerate}[label=(\alph*)]
\item \label{item:banditlb-avg} 
\begin{align*}
2\Delta \sum_{u \in [U]} \sum_{k \neq k^*_u} \mathbb{E} \left[ T_{uk}(t+1)\right]	 \geq U (K-1) D(\pmb{\mu}) \left(  (1 - \alpha) \log t - \log(4KC)  \right),
\end{align*}
\item \label{item:banditlb-single1} for any $u \in [U]$, 
\begin{align*}
2 \Delta \sum_{k \neq k^*_u} \mathbb{E} \left[ T_{uk}(t+1)\right] \geq (U-1) D(\pmb{\mu}) \left(  (1 - \alpha) \log t - \log(4KC)  \right),
\end{align*}
\item \label{item:banditlb-single2} and for any $u \in [U]$, 
\begin{align*}
\Delta \sum_{k > U} \mathbb{E} \left[ T_{uk}(t+1)\right] \geq (K-U) D(\pmb{\mu}) \left(  (1 - \alpha) \log t - \log(4KC)  \right),
\end{align*}
\end{enumerate}
where $D(\pmb{\mu})$ is given by \eqref{eq:constD}.
\end{corollary}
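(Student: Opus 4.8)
The plan is to obtain all three inequalities by summing the per-pair bound of Lemma~\ref{lem:banditlb} over appropriate index sets and then collapsing the resulting cross terms using the fact that, in every slot, the scheduled configuration is a matching (so each server serves at most one queue, and the optimal servers $k^*_1,\dots,k^*_U$ are distinct). Throughout I would adopt the same relabeling as in the proof of Lemma~\ref{lem:banditlb}, so that $k^*_u = u$ for $u\in[U]$ and every server $k>U$ is sub-optimal for every queue; write $L(t):=(1-\alpha)\log t-\log(4KC)$ and note $1/\mathrm{KL}\big(\mu_{min},(\mu_{max}+1)/2\big)=D(\pmb{\mu})/\Delta$, so that Lemma~\ref{lem:banditlb} reads, for all $t>\tau$, $u\in[U]$ and $k\neq k^*_u$,
\[
\mathbb{E}[T_{uk}(t)] + \sum_{u'\neq u}\mathds{1}\{k^*_{u'}=k\}\,\mathbb{E}[T_{u'k^*_u}(t)] \ \geq\ \frac{D(\pmb{\mu})}{\Delta}\,L(t),
\]
and since the optimal pairs form a matching, for each $k$ at most one $u'\neq u$ has $k^*_{u'}=k$, and such a $u'$ necessarily satisfies $k^*_{u'}\neq k^*_u$.

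I would start with \ref{item:banditlb-single2}, which is essentially immediate: for a server $k>U$ the indicator term vanishes, so Lemma~\ref{lem:banditlb} gives $\mathbb{E}[T_{uk}(t)]\geq(D(\pmb{\mu})/\Delta)L(t)$; summing over the $K-U$ servers $k>U$ and multiplying by $\Delta$ yields the claim. For \ref{item:banditlb-single1}, fix $u$ and apply Lemma~\ref{lem:banditlb} with $k=k^*_{u'}$ for each of the $U-1$ queues $u'\neq u$, obtaining $\mathbb{E}[T_{u,k^*_{u'}}(t)]+\mathbb{E}[T_{u',k^*_u}(t)]\geq(D(\pmb{\mu})/\Delta)L(t)$; summing over $u'\neq u$ gives
\[
\sum_{u'\neq u}\mathbb{E}[T_{u,k^*_{u'}}(t)] + \sum_{u'\neq u}\mathbb{E}[T_{u',k^*_u}(t)] \ \geq\ (U-1)\,\frac{D(\pmb{\mu})}{\Delta}\,L(t).
\]
The first sum is at most $\sum_{k\neq k^*_u}\mathbb{E}[T_{uk}(t)]$ because the servers $k^*_{u'}$ ($u'\neq u$) are distinct and all different from $k^*_u$. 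The second sum is also at most $\sum_{k\neq k^*_u}\mathbb{E}[T_{uk}(t)]$: in any slot $s$ server $k^*_u$ is matched to at most one queue, so $\sum_{u'\neq u}\mathds{1}\{\kappa_{u'}(s)=k^*_u\}\leq\mathds{1}\{\kappa_u(s)\neq k^*_u\}$, and summing over $s\in[1,t]$ and taking expectations gives the bound. Combining, $2\sum_{k\neq k^*_u}\mathbb{E}[T_{uk}(t)]\geq(U-1)(D(\pmb{\mu})/\Delta)L(t)$, which is \ref{item:banditlb-single1}.

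For \ref{item:banditlb-avg} I would sum the inequality of Lemma~\ref{lem:banditlb} over all $u\in[U]$ and all $k\neq k^*_u$; since there are $U(K-1)$ such pairs, the right-hand side is $U(K-1)(D(\pmb{\mu})/\Delta)L(t)$, while the plain terms contribute $\sum_u\sum_{k\neq k^*_u}\mathbb{E}[T_{uk}(t)]$. The cross terms, after using $\mathds{1}\{k^*_{u'}\neq k^*_u\}=1$, collapse to $\sum_u\sum_{u'\neq u}\mathbb{E}[T_{u',k^*_u}(t)]$; reindexing this double sum over ordered pairs and using that, for fixed $u'$, the servers $\{k^*_u:u\neq u'\}$ are distinct and each lies in $\{k\neq k^*_{u'}\}$, it is at most $\sum_{u'}\sum_{k\neq k^*_{u'}}\mathbb{E}[T_{u'k}(t)]$. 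Hence $2\sum_u\sum_{k\neq k^*_u}\mathbb{E}[T_{uk}(t)]\geq U(K-1)(D(\pmb{\mu})/\Delta)L(t)$, which gives \ref{item:banditlb-avg} upon multiplying by $\Delta$.

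The only step that is not pure bookkeeping on top of Lemma~\ref{lem:banditlb} — and the one I would be most careful about — is the matching-constraint counting used for \ref{item:banditlb-single1} and \ref{item:banditlb-avg}, namely that the ``unwanted'' term $\mathbb{E}[T_{u',k^*_u}(t)]$ (how often some other queue grabs queue $u$'s optimal server) can be charged back to queue $u$'s own sub-optimal pulls. This is what forces the factor of $2$ in \ref{item:banditlb-single1} and \ref{item:banditlb-avg}, and hence why the per-queue lower bounds in Theorems~\ref{thm:lb-lt-main} and~\ref{thm:lb-ear-main} carry the $\max\{U-1,\,2(K-U)\}$ obtained by taking the better of \ref{item:banditlb-single1} and \ref{item:banditlb-single2}.
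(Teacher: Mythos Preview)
Your proposal is correct and follows essentially the same approach as the paper: sum Lemma~\ref{lem:banditlb} over the relevant index sets and absorb the cross terms $\mathbb{E}[T_{u',k^*_u}(t)]$ into $\sum_{k\neq k^*_u}\mathbb{E}[T_{uk}(t)]$ via the matching constraint. The paper phrases the matching step for part~\ref{item:banditlb-single1} as the counting identity $t = T_{uk^*_u}(t)+\sum_{k\neq k^*_u}T_{uk}(t)\geq T_{uk^*_u}(t)+\sum_{u'\neq u}T_{u'k^*_u}(t)$ (after passing wlog to a perfect matching), whereas you write the equivalent per-slot inequality $\sum_{u'\neq u}\mathds{1}\{\kappa_{u'}(s)=k^*_u\}\leq\mathds{1}\{\kappa_u(s)\neq k^*_u\}$; these are the same observation, and the rest of your argument matches the paper line by line.
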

\begin{proof}[Proof.]
To prove part~\ref{item:banditlb-avg}, we observe that a unique optimal server for each queue in the system implies that 
\begin{align*}
\sum_{u \in [U]} \sum_{k \neq k^*_u} \mathbb{E} \left[ T_{uk}(t+1)\right] 	
& \geq \sum_{u \in [U]} \sum_{u' \neq u}  \mathbb{E} \left[ T_{uk^*_{u'}}(t+1)\right]	\\
& = \sum_{u \in [U]} \sum_{k \neq k^*_u}  \sum_{u' \neq u} \mathds{1}\left\lbrace k^*_{u'} = k \right\rbrace \mathbb{E}\left [T_{u'k^*_u} (t+1)\right].
\end{align*}
Now, from Lemma~\ref{lem:banditlb}, there exist constants $C$ and $\tau$ such that for $t > \tau$,
\begin{align*}
2\sum_{u \in [U]} \sum_{k \neq k^*_u} \mathbb{E} \left[ T_{uk}(t+1)\right] 
& \geq \sum_{u \in [U]} \sum_{k \neq k^*_u} \left( \mathbb{E} \left[ T_{uk}(t+1)\right] + \sum_{u' \neq u} \mathds{1}\left\lbrace k^*_{u'} = k \right\rbrace \mathbb{E}\left [T_{u'k^*_u} (t+1)\right] \right)	\\
& \geq \frac{U (K-1)}{\mathrm{KL}\left(\mu_{min}, \frac{\mu^*+1}{2} \right)} \left(  (1 - \alpha) \log t - \log(4KC)  \right).
\end{align*}
Using the definition of $D(\pmb{\mu})$ in the above inequality gives part~\ref{item:banditlb-avg} of the corollary.

 To prove part~\ref{item:banditlb-single1}, we can assume without loss of generality that a perfect matching is scheduled in every time-slot. Using this, and the fact that any server is assigned to at most one queue in every time-slot, for any $u \in [U]$, we have
 \begin{align*}
 T_{uk^*_u}(t+1) + \sum_{k \neq k^*_u} T_{uk}(t+1) = t \geq   T_{uk^*_u}(t+1) + \sum_{u' \neq u} T_{u'k^*_u} (t+1), 
\end{align*}  
which gives us
\begin{align}
\label{eq:bandit-samplepath-lb}
 \sum_{k \neq k^*_u} T_{uk}(t+1) \geq \max \left\lbrace \sum_{u' \neq u} T_{uk^*_{u'}} (t+1),  \sum_{u' \neq u} T_{u'k^*_u} (t+1)\right\rbrace.
\end{align}
From Lemma~\ref{lem:banditlb} we have, for any $u' \neq u$ and  for $t > \tau$,
\begin{align*}
\mathbb{E}\left [T_{uk^*_{u'}} (t+1)\right] + \mathbb{E}\left [T_{u'k^*_u} (t+1)\right] \geq \frac{1}{\mathrm{KL}\left(\mu_{min}, \frac{\mu^*+1}{2} \right)} \left(  (1 - \alpha) \log t - \log(4KC)  \right),
\end{align*}
which gives
\begin{align*}
\sum_{u' \neq u} \mathbb{E}\left [T_{uk^*_{u'}} (t+1)\right] + \mathbb{E}\left [T_{u'k^*_u} (t+1)\right] \geq \frac{U-1}{\mathrm{KL}\left(\mu_{min}, \frac{\mu^*+1}{2} \right)} \left(  (1 - \alpha) \log t - \log(4KC)  \right).
\end{align*}
Combining the above with \eqref{eq:bandit-samplepath-lb}, we have  for $t > \tau$
\begin{align*}
\sum_{k \neq k^*_u} \mathbb{E} \left[ T_{uk}(t+1)\right]
& \geq \max \left\lbrace \sum_{u' \neq u} \mathbb{E}\left [T_{uk^*_{u'}} (t+1)\right], \sum_{u' \neq u} \mathbb{E}\left [T_{u'k^*_u} (t+1)\right] \right\rbrace 	\\
& \geq  \frac{U-1}{2 \mathrm{KL}\left(\mu_{min}, \frac{\mu^*+1}{2} \right)} \left(  (1 - \alpha) \log t - \log(4KC)  \right).
\end{align*}
 To prove part~\ref{item:banditlb-single2}, we use the fact that $\mathds{1}\left\lbrace k^*_{u'} = k \right\rbrace = 0$ for any $u' \in [U]$, $K \geq k > U$. Therefore, for $t > \tau$, we have
\begin{align*}
\sum_{k > U} \mathbb{E} \left[ T_{uk}(t+1)\right] 
& = \sum_{k > U} \left( \mathbb{E} \left[ T_{uk}(t+1)\right] + \sum_{u' \neq u} \mathds{1}\left\lbrace k^*_{u'} = k \right\rbrace \mathbb{E}\left [T_{u'k^*_u} (t+1)\right] \right)	\\
& \geq \frac{K-U}{\mathrm{KL}\left(\mu_{min}, \frac{\mu^*+1}{2} \right)} \left(  (1 - \alpha) \log t - \log(4KC)  \right),
\end{align*}
which gives the required result. 
\end{proof}

\subsection{Late Stage: Proof of Theorem~\ref{thm:lb-late-multiq}}
\label{subsec:proof-lb-late}
The following lemma, which gives a lower bound on the queue-regret in terms of probability of sub-optimal schedule in a single time-slot, is the key result used in the proof of Theorem~\ref{thm:lb-late-multiq}. The proof for this lemma is based on the idea that the growth in regret in a single-time slot can be lower bounded in terms of the probability of sub-optimal schedule in that time-slot.
\begin{lemma}
\label{lem:lb-queue1}
For any problem instance characterized by $(\pmb{\lambda}, \pmb{\mu}),$ and for any scheduling policy, and user $u \in [U]$,
\begin{align*}
\Psi_u(t) \geq \lambda_u\sum_{k \neq k^*_u} \Delta_{uk} \PP \left[ \mathds{1}\lbrace \kappa_u(t) = k \rbrace = 1 \right].
\end{align*}
\end{lemma}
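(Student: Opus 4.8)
The plan is to reduce to a coupled system in which the $K$ service processes for each queue are \emph{comonotone} (same marginals, perfectly ordered) rather than independent, and then to extract a one‑step lower bound from the Lindley recursion.

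First I would introduce the coupling. For each queue $u$ and slot $s$, let $\{U_u(s)\}$ be i.i.d.\ $\mathrm{Uniform}[0,1]$ variables, independent across $u,s$ and of the arrivals, and set $\tilde R_{uk}(s):=\mathds{1}\{U_u(s)\le \mu_{uk}\}$. Keep the same arrival process $\{A_u(s)\}$, couple the initial states so that $Q_u(0)=Q^*_u(0)$ (consistent with the Initial State assumption), and run the \emph{same} policy, which is driven only by the observed services $S_u(s)=\tilde R_{u\kappa_u(s)}(s)$; this produces schedules $\tilde\kappa_u(s)$ and queue lengths $\tilde Q_u(s),\tilde Q^*_u(s)$ (the latter using $\tilde S^*_u(s)=\tilde R_{uk^*_u}(s)$). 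The key reduction is that, because at every slot the policy observes only the service of the server it has just scheduled, and that server is selected from information strictly prior to the slot's service realizations, one shows by induction on $s$ that $\big(\{\mathbf A(r)\}_{r\le s},\{\pmb\kappa(r)\}_{r\le s},\{S_u(r)\}_{u,r\le s}\big)$ has the same law in both systems (conditionally on the past, the observed service of a scheduled server is $\mathrm{Bernoulli}(\mu_{u\kappa_u(s)})$ in both, independently across queues since a matching is scheduled), while $\{S^*_u(s)\}_s$ are i.i.d.\ $\mathrm{Bernoulli}(\mu^*_u)$ in both, independent of arrivals and of $Q^*_u(0)$. Hence $\Psi_u(t)=\EE[\tilde Q_u(t)-\tilde Q^*_u(t)]$ and $\PP[\kappa_u(t)=k]=\PP[\tilde\kappa_u(t)=k]$, so it suffices to prove the bound in the coupled system. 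This step — rigorously justifying that passing to comonotone services leaves $\Psi_u(t)$ and the schedule law invariant, i.e.\ that the bandit only "sees" the marginal service laws — is the main obstacle; everything afterward is elementary.

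In the coupled system, $\mu_{uk}\le\mu^*_u$ gives $\tilde R_{uk}(s)\le\tilde R_{uk^*_u}(s)$ for all $k,s$, hence $\tilde S_u(s)\le\tilde S^*_u(s)$; combined with $\tilde Q_u(0)=\tilde Q^*_u(0)$ and the recursion $Q(s)=(Q(s-1)+A(s)-S(s))^+$, induction yields $\tilde Q_u(s)\ge\tilde Q^*_u(s)\ge0$ for all $s$. Therefore, dropping a nonnegative integrand,
\[
\Psi_u(t)\ \ge\ \EE\big[(\tilde Q_u(t)-\tilde Q^*_u(t))\,\mathds{1}\{A_u(t)=1\}\big].
\]
On $\{A_u(t)=1\}$ both Lindley updates are unclipped (for $x\ge0$, $b\in\{0,1\}$ one has $x+1-b\ge0$), so $\tilde Q_u(t)-\tilde Q^*_u(t)=(\tilde Q_u(t-1)-\tilde Q^*_u(t-1))+(\tilde S^*_u(t)-\tilde S_u(t))\ge \tilde S^*_u(t)-\tilde S_u(t)$. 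Writing $\tilde S^*_u(t)-\tilde S_u(t)=\sum_{k\ne k^*_u}(\tilde R_{uk^*_u}(t)-\tilde R_{uk}(t))\,\mathds{1}\{\tilde\kappa_u(t)=k\}$ and using that the slot‑$t$ service variables $\{\tilde R_{uk}(t)\}_k$ are independent of $(\tilde\kappa_u(t),A_u(t))$ (both determined before slot‑$t$ services are revealed), with $\EE[\tilde R_{uk^*_u}(t)-\tilde R_{uk}(t)]=\Delta_{uk}$, and that $\tilde\kappa_u(t)\perp A_u(t)$ with $\PP[A_u(t)=1]=\lambda_u$,
\[
\Psi_u(t)\ \ge\ \sum_{k\ne k^*_u}\Delta_{uk}\,\PP[\tilde\kappa_u(t)=k,\ A_u(t)=1]\ =\ \lambda_u\sum_{k\ne k^*_u}\Delta_{uk}\,\PP[\tilde\kappa_u(t)=k],
\]
which by the equivalence established above equals $\lambda_u\sum_{k\ne k^*_u}\Delta_{uk}\,\PP[\kappa_u(t)=k]$, as claimed.
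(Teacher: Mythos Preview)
Your proposal is correct and follows essentially the same argument as the paper: construct a comonotone service coupling via a shared uniform $U_u(s)$ so that marginals (hence the law of the observed history $(\mathbf A,\pmb\kappa,\mathbf S)$, established by induction on $t$) are unchanged, deduce $\tilde Q_u\ge\tilde Q^*_u$ pathwise, and extract the one-step bound on the event $\{A_u(t)=1\}$ where the Lindley recursion is unclipped. The paper's presentation differs only cosmetically (it phrases the coupling as an ``alternate system'' with the same marginals and gives the induction on the finite-dimensional law of $\mathbf Z(l)=(\mathbf A(l),\pmb\kappa(l),\mathbf S(l))$), so you have recovered the intended proof.
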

\begin{proof}[Proof.]
%
We lower bound the queue-regret for queue $u$ for the alternate coupled service process described in Section~\ref{subsec:proof-easy-ub}. As seen in the proof of Lemma~\ref{lem:easy-ub}, since $\mu^*_u > \mu_{uk} \; \forall k \neq k^*_u$, for the alternate system, we have $R_{uk^*_u}(t) \geq R_{uk}(t) \; \forall k \neq k^*_u, \, \forall t$. This implies that $Q^*_u(t) \leq Q_u(t) \; \forall t$. Now, for any given $t$, using the fact that $Q^*_u(t-1) \leq Q_u(t-1)$, it is easy to see that 
\begin{align*}
Q_u(t) - Q^*_u(t) \geq \mathds{1}\left\lbrace A_u(t) = 1 \right\rbrace \left( R_{k^*_u}(t) - \sum_{k=1}^K \mathds{1}\lbrace \kappa_u(t) = k \rbrace R_{uk}(t) \right).
\end{align*}
Therefore,
\begin{align*}
\mathbb{E} \left[ Q_u(t) - Q^*_u(t) \right] 
& \geq \mathbb{E} \left[ \mathds{1}\lbrace A_u(t) = 1 \rbrace \left( R_{k^*_u}(t) - \sum_{k=1}^K \mathds{1}\lbrace \kappa_u(t) = k \rbrace R_{uk}(t) \right) \right]	\\
& = \lambda_u \sum_{k \neq k^*_u} \PP \left[ \mathds{1}\lbrace \kappa_u(t) = k \rbrace = 1 \right] \PP\left[ \mu_{uk} < U(t) \leq \mu^*_u \right]	\\
& = \lambda_u \sum_{k \neq k^*_u} \Delta_{uk} \PP \left[ \mathds{1}\lbrace \kappa_u(t) = k \rbrace = 1 \right].
\end{align*}
\end{proof}

We now use Lemma~\ref{lem:lb-queue1} in conjunction with the lower bound for the expected number of sub-optimal schedules for an $\alpha$-consistent policy (Corollary~\ref{cor:lem:banditlb}) to prove Theorem~\ref{thm:lb-late-multiq}.
\begin{proof}[Proof of Theorem~\ref{thm:lb-late-multiq}.]
From Lemma~\ref{lem:lb-queue1} we have,
\begin{align}
\label{eq:lb-avg}
 \Psi_u(t) & \geq \lambda_u \sum_{k \neq k^*_u} \Delta_{uk} \PP \left[ \mathds{1}\lbrace \kappa_u(t) = k \rbrace = 1 \right]	\nonumber	\\
& \geq \lambda_{min} \Delta \sum_{k \neq k^*_u} \PP \left[ \mathds{1}\lbrace \kappa_u(t) = k \rbrace = 1 \right].
\end{align}
Therefore,
\begin{align*}
& \sum_{s=1}^t \sum_{u \in [U]}  \Psi_u(s) \geq \lambda_{min} \Delta \sum_{u \in [U]} \sum_{k \neq k^*_u} \mathbb{E} \left[ T_{uk}(t+1)\right].
\end{align*}
 We now claim that
\begin{align}
\label{eq:one-step-lb}
\sum_{u \in [U]}  \Psi_u(t) \geq \frac{U(K-1)}{8t} \lambda_{min} D(\pmb{\mu}) (1 - \alpha)
\end{align}
for infinitely many $t$. This follows from part~\ref{item:banditlb-avg} of Corollary~\ref{cor:lem:banditlb} and the following fact:
\begin{fact}
For any bounded sequence $\left\lbrace a_n \right\rbrace$, if there exist constants $C > 0$ and $N_0 \in \setN$ such that $\sum_{m=1}^n a_m \geq C\log n \; \forall n \geq N_0$, then $a_n \geq \frac{C}{2n}$ infinitely often.
\end{fact}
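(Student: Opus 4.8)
The plan is to argue by contradiction, exploiting the fact that a sequence eventually dominated by $C/(2n)$ cannot accumulate to $C\log n$. Suppose the conclusion were false, so that $a_n \geq \frac{C}{2n}$ holds for only finitely many $n$. Then there is an index $N$, which we may take (after replacing it by $\max\{N,n_0\}$) to satisfy $N \geq n_0$, such that $a_n < \frac{C}{2n}$ for every $n \geq N$. The rest of the argument just makes precise that this growth rate is incompatible with the assumed lower bound on the partial sums.

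For $n \geq N$ I would split off the initial segment and bound the remainder by a harmonic tail:
\[
\sum_{m=1}^n a_m \;=\; \sum_{m=1}^{N-1} a_m \;+\; \sum_{m=N}^n a_m \;<\; B_0 \;+\; \frac{C}{2}\sum_{m=N}^n \frac{1}{m},
\]
where $B_0 := \sum_{m=1}^{N-1} a_m$ is a fixed finite constant (a finite sum of terms of the bounded sequence $\{a_n\}$). Using the elementary estimate $\sum_{m=N}^n \frac1m \leq \sum_{m=1}^n \frac1m \leq 1 + \log n$, this yields $\sum_{m=1}^n a_m < B_0 + \frac{C}{2}(1 + \log n)$ for all $n \geq N$.

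The last step is to compare this with the hypothesis $\sum_{m=1}^n a_m \geq C\log n$, which holds for $n \geq n_0$ and hence for $n \geq N$. Combining the two inequalities gives $C\log n < B_0 + \frac{C}{2}(1+\log n)$, i.e. $\frac{C}{2}\log n < B_0 + \frac{C}{2}$, for every $n \geq N$; since $C > 0$ the left-hand side diverges while the right-hand side is a fixed constant, a contradiction. Hence $a_n \geq \frac{C}{2n}$ must hold for infinitely many $n$. I do not expect any genuine obstacle here: the only thing to keep track of is the constant $B_0$ contributed by the finitely many initial terms, and boundedness of $\{a_n\}$ (together with finiteness of the index set $\{1,\dots,N-1\}$) makes that immediate.
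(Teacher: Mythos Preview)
The paper states this Fact without proof, so there is no argument to compare against; your contradiction approach is exactly the natural one and is correct. The one implicit point worth flagging is that you invoke $C>0$ in the final step (and, implicitly, when you pass from $\tfrac{C}{2}\sum_{m=N}^n \frac1m$ to the upper bound $\tfrac{C}{2}(1+\log n)$); this assumption is necessary for the statement itself to hold and is satisfied in the paper's application, but it is not written into the Fact as stated.
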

\begin{proof}
Proof by contradiction: Let $B$ be an upper bound on the sequence. Suppose $\sum_{m=1}^n a_m \geq C\log n \; \forall n \geq N_0$ and there exists $N_1 \in \setN$ such that  $a_n \leq \frac{C}{2n}$ for all $n > N_1$. Wlog let $N_1$ satisfy the inequality $\exp\left( \frac{2 B N_1}{C} + 1 \right) \geq N_1$. Then for any $n > \exp\left( \frac{2 B N_1}{C} + 1 \right)$, we have
\begin{align*}
\sum_{m=1}^n a_m 
 \leq B N_1 + \sum_{m=N_1+1}^n \frac{C}{2n}	\leq B N_1 +  \frac{C}{2}(\log n + 1)
 < C\log n,
\end{align*}
which is a contradiction to our hypothesis that $\sum_{m=1}^n a_m \geq C\log n \; \forall n \geq N_0$. This proves the claim.
\end{proof}

Similarly, for any $u \in U$, it follows from parts~\ref{item:banditlb-single1} and \ref{item:banditlb-single2} of Corollary~\ref{cor:lem:banditlb} that
\begin{align}
 \Psi_u(t)  \geq \frac{\max \left\lbrace U-1, 2(K-U) \right\rbrace}{8t} \lambda_{min} D(\pmb{\mu}) (1 - \alpha)
\end{align}
for infinitely many $t$. 
\end{proof}

\subsection{Early Stage: Proof of Theorem~\ref{thm:lb-early-multiq}}
\label{subsec:proof-lb-early}
In order to prove Theorem~\ref{thm:lb-early-multiq}, we first derive, in the following lemma, a lower bound on the queue-regret in terms of the expected number of sub-optimal schedules. 
\begin{lemma}
\label{lem:lb-queue2}
For any system with parameters $(\pmb{\lambda}, \pmb{\mu})$, any policy, and any user $u \in [U]$, the regret is lower bounded by
\begin{align*}
\Psi_u(t) \geq \sum_{k \neq k^*_u} \Delta_{uk} \mathbb{E} \left[ T_{uk}(t+1)\right] - \epsilon_u t.
\end{align*}
\end{lemma}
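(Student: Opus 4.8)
The plan is to work entirely at the level of expectations and to track the ``unused service'' (idling) of the two queues separately. From the Lindley recursion $Q_u(l) = (Q_u(l-1) + A_u(l) - S_u(l))^+$ one can write
\[
Q_u(l) = Q_u(l-1) + A_u(l) - S_u(l) + W_u(l), \qquad W_u(l) := \big(S_u(l) - A_u(l) - Q_u(l-1)\big)^+ \ge 0,
\]
and similarly $Q^*_u(l) = Q^*_u(l-1) + A_u(l) - R_{uk^*_u}(l) + W^*_u(l)$ with $W^*_u(l) \ge 0$. First I would sum the first identity over $l \in [1,t]$ and take expectations; this is legitimate since $Q_u(t) \le Q_u(0) + t$ pathwise, $Q_u(0)$ has finite mean under Assumption~1, and $0 \le W_u(l) \le 1$. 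Using that the fresh service vector $(R_{uk}(l))_{k \in [K]}$ is i.i.d.\ across slots, hence independent of $\kappa_u(l)$ (a function of the observations up to slot $l-1$), gives $\EE[S_u(l)] = \sum_k \PP[\kappa_u(l) = k]\,\mu_{uk}$, and therefore
\[
\EE[Q_u(t)] = \EE[Q_u(0)] + \lambda_u t - \sum_{l=1}^t \sum_{k \in [K]} \PP[\kappa_u(l)=k]\,\mu_{uk} + \EE\Big[\textstyle\sum_{l=1}^t W_u(l)\Big].
\]

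Next I would pin down the idling of the optimal queue exactly. Since $Q^*_u(0) \sim \pi_{(\lambda_u,\mu^*_u)}$ is the stationary law of the stable ($\epsilon_u > 0$) queue driven by the optimal server, $\EE[Q^*_u(t)] = \EE[Q^*_u(0)] < \infty$ for every $t$ (the stationary law has a geometric tail). Summing the $Q^*_u$-recursion over $[1,t]$, taking expectations, and using $\EE[Q^*_u(t)] - \EE[Q^*_u(0)] = 0$ then forces the rate-conservation identity $\EE[\sum_{l=1}^t W^*_u(l)] = (\mu^*_u - \lambda_u)\,t = \epsilon_u t$: in stationarity the optimal queue wastes exactly the drift slack on average.

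Finally I would subtract. Because $Q_u(0)$ and $Q^*_u(0)$ share the same law, $\EE[Q_u(0)] = \EE[Q^*_u(0)]$, so
\[
\Psi_u(t) = \EE[Q_u(t)] - \EE[Q^*_u(t)] = \Big(\mu^*_u t - \sum_{l=1}^t\sum_{k \in [K]}\PP[\kappa_u(l)=k]\,\mu_{uk}\Big) + \EE\Big[\textstyle\sum_{l=1}^t W_u(l)\Big] - \epsilon_u t .
\]
Since $\sum_k \PP[\kappa_u(l)=k] = 1$, the bracketed term equals $\sum_{l=1}^t\sum_{k}\PP[\kappa_u(l)=k](\mu^*_u - \mu_{uk}) = \sum_{k\neq k^*_u}\Delta_{uk}\,\EE[T_{uk}(t)]$; dropping the nonnegative quantity $\EE[\sum_l W_u(l)]$ yields exactly $\Psi_u(t) \ge \sum_{k \neq k^*_u}\Delta_{uk}\,\EE[T_{uk}(t)] - \epsilon_u t$.

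The one genuinely delicate point is the exact identity $\EE[\sum_l W^*_u(l)] = \epsilon_u t$: it rests squarely on the Initial-State assumption (stationary start for $Q^*$), which is what makes $\EE[Q^*_u(t)]$ constant in $t$ and finite. Everything else is just the telescoping of the Lindley recursion together with the elementary independence of a fresh service sample from past scheduling decisions, so I do not anticipate an obstacle there; in particular this route needs neither the coupling of Lemma~\ref{lem:lb-queue1} nor $\alpha$-consistency.
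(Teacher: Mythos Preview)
Your proof is correct and follows essentially the same approach as the paper. The paper compresses your argument by writing $\Psi_u(t)=\EE[Q_u(t)]-\EE[Q^*_u(t)]=\EE[Q_u(t)-Q_u(0)]$ directly (using stationarity of $Q^*$ and the shared initial law), and then invoking $Q_u(t)-Q_u(0)\ge\sum_{l=1}^t(A_u(l)-S_u(l))$; your explicit idling variables $W_u,W^*_u$ and the rate-conservation identity $\EE[\sum_l W^*_u(l)]=\epsilon_u t$ are just the same two steps unpacked, with the extra care of spelling out why $\EE[S_u(l)]=\sum_k\PP[\kappa_u(l)=k]\mu_{uk}$.
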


\begin{proof}[Proof.]
 Since $Q_u(0) \sim \pi_{\lambda_u,\mu^*_u}$, we have, 
\begin{align*}
\Psi_u(t) & = \mathbb{E} \left[ Q_u(t) - Q^*_u(t) \right] \\
        & = \mathbb{E} \left[ Q_u(t) - Q_u(0) \right] \\
        & \geq \mathbb{E} \left[ \sum _{l = 1} ^t A_u(l) - S_u(l) \right] \\
        & = \lambda_u t - \sum_{k = 1} ^{K} \mathbb{E} \left[ T_{uk}(t+1)\right]\mu _{uk} \\
        & = \lambda_u t - \left(t -   \sum_{k \neq k^*_u}  \mathbb{E} \left[ T_{uk}(t+1)\right]\right) \mu*_u -  \sum_{k \neq k^*_u}  \mathbb{E} \left[ T_{uk}(t+1)\right] \mu _{uk} \\
        & =  \sum_{k \neq k^*_u}  \Delta_{uk} \mathbb{E} \left[ T_{uk}(t+1)\right] - \epsilon_u t .
\end{align*}
\end{proof}
We now use this lower bound along with the lower bound on the expected number of sub-optimal schedules for $\alpha$-consistent policies (Corollary~\ref{cor:lem:banditlb}).
\begin{proof}[Proof of Theorem~\ref{thm:lb-early-multiq}.]
To prove part~\ref{item:lb-early-avg} of the theorem, we use Lemma~\ref{lem:lb-queue2} and part~\ref{item:banditlb-avg} of corollary~\ref{cor:lem:banditlb} as follows:
For any $\gamma > \frac{1}{1-\alpha}$, there exist constants $C_{\ref{const:switch-lb}}$ and $\tau$ such that for all $t \in [\max \{C_{\ref{const:switch-lb}} K^{\gamma }, \tau \}, (K-1)\frac{D(\pmb{\mu})}{4\bar{\epsilon}}]$, 
\begin{align*}
\frac{1}{U}\sum_{u \in [U]} \Psi_u(t) & \geq \frac{\Delta}{U}\sum_{u \in [U]} \left( \sum_{k \neq k^*_u}  \mathbb{E} \left[ T_{uk}(t+1)\right] - \epsilon_u t \right) 	\\
& \geq (K-1)\frac{D(\pmb{\mu})}{2} \left( (1 - \alpha)\log t - \log(KC_{\ref{const:switch-lb}}) \right) - \bar{\epsilon} t \\
		& \geq (K-1)\frac{D(\pmb{\mu})}{2} \frac{\log t}{ \log \log t} - \bar{\epsilon} t \\
		& \geq (K-1)\frac{D(\pmb{\mu})}{4}\frac{\log t}{ \log \log t},
\end{align*}
where the last two inequalities follow since $t \geq C_{\ref{const:switch-lb}} K^{\gamma }$ and $t \leq (K-1)\frac{D(\pmb{\mu})}{4\bar{\epsilon}}$.

Part~\ref{item:lb-early-single} of the theorem can be similarly shown using parts~\ref{item:banditlb-single1} and \ref{item:banditlb-single2} of Corollary~\ref{cor:lem:banditlb}. 
\end{proof}

\theendnotes

\section*{Acknowledgments}
This work is partially supported by NSF Grants CNS-1161868, CNS-1343383, CNS-1320175, ARO grants W911NF-15-1-0227, W911NF-14-1-0387 and the US DoT supported D-STOP Tier 1 University
Transportation Center.


\bibliographystyle{informs2014} 
\bibliography{queue-bandits} 

%

\end{document}